\documentclass[manuscript,nonacm]{acmart}


\usepackage{textcomp}
\usepackage{amscd}
\usepackage{amsfonts}
\usepackage{listings}
\usepackage{float}
\usepackage{hyperref}
\usepackage[vlined,linesnumbered,noresetcount]{algorithm2e}
\usepackage[binary-units]{siunitx}

\usepackage[inline]{enumitem}
\usepackage{tikz}
\usetikzlibrary{arrows.meta,patterns}
\usetikzlibrary{ipe}
\usepackage{tikzscale}
\usepackage{wrapfig}

\newtheorem{assumption}{Assumption}


\definecolor{darkgray}{gray}{0.30}
\definecolor{light-gray}{gray}{0.87}


\newcommand{\citeAppendix}[2]{#1}



\newcommand{\ie}{\emph{i.e.}\xspace}

\newcommand{\eg}{\emph{e.g.}\xspace}

\newcommand{\cf}{\textit{cf.}\xspace}

\newcommand{\cmp}{\emph{cmp.}\xspace}

\newcommand*{\blockdag}{block DAG\xspace}
\newcommand*{\blockdags}{block DAGs\xspace}


\newcommand*{\name}[1]{\textsc{#1}}

\newcommand{\blockmania}{\name{Blockmania}\xspace}
\newcommand{\hashgraph}{\name{Hashgraph}\xspace}
\newcommand{\stellar}{\name{Stellar}\xspace}
\renewcommand{\aleph}{\name{Aleph}\xspace}
\newcommand{\flare}{\name{Flare}\xspace}
\newcommand{\coq}{\name{Coq}\xspace}
\newcommand{\peerreview}{\name{PeerReview}\xspace}
\newcommand{\polygraph}{\name{Polygraph}\xspace}
\newcommand{\cassandra}{\name{Appache Cassandra}\xspace}
\newcommand{\awsse}{\name{Aws S3}\xspace}
\newcommand{\arxiv}{\name{arxiv}\xspace}

\newcommand*{\hsh}{\#}
\newcommand*{\valid}{\mathsf{valid}}

\newcommand*{\sle}[1]{\mathsf{#1}} 
\newcommand*{\inm}{\sle{in}}
\newcommand*{\outm}{\sle{out}}
\newcommand*{\buffer}{\mathsf{blks}}
\newcommand*{\timer}[1]{\Delta_{#1}}

\newcommand*{\requestBuffer}{\mathsf{rqsts}}
\newcommand*{\insertRequest}{\mathsf{put}}
\newcommand*{\takeRequests}{\mathsf{get}}

\newcommand*{\signs}{\mathsf{sign}}

\newcommand*{\authenticate}{\mathsf{authenticate}}

\newcommand*{\ble}[1]{\mathsf{#1}} 
\newcommand*{\nid}{\ble{n}}
\newcommand*{\seqn}{\ble{k}}
\newcommand*{\parent}{\ble{parent}}
\newcommand*{\preds}{\ble{preds}}

\newcommand*{\rs}{\ble{rs}}
\newcommand*{\concat}{\cdot}
\newcommand*{\signature}{\sigma}
\newcommand*{\signatures}{\Sigma}
\newcommand*{\empsign}{\ble{null}}
\newcommand*{\empseq}{[\ ]}
\newcommand*{\validsign}{\mathsf{verify}}

\newcommand*{\refB}{\mathsf{ref}}
\newcommand*{\bB}{\mathcal{B}} 

\newcommand*{\disseminate}{\mathsf{disseminate}}

\newcommand*{\srvrs}{\mathsf{Srvrs}}
\newcommand*{\byz}[1]{\check{#1}}

\newcommand*{\rqsts}{\mathsf{Rqsts}}
\newcommand*{\inds}{\mathsf{Inds}}
\newcommand*{\vals}{\mathsf{Vals}}

\newcommand*{\bool}{\mathbb{B}}
\newcommand*{\true}{\mathsf{true}}
\newcommand*{\false}{\mathsf{false}}
\newcommand*{\blks}{\mathsf{Blks}}

\newcommand*{\msgs}{\mathsf{M}}
\newcommand*{\emp}{\varnothing}
\newcommand*{\pow}[1]{2^{#1}}

\newcommand*{\blkdags}{\mathsf{Dags}}
\newcommand*{\ins}{\mathsf{insert}}

\makeatletter
\newcommand{\myuset}[3][0ex]{%
  \mathrel{\mathop{#3}\limits^{
    \vbox to#1{\kern+4.5\ex@
    \hbox{$\scriptscriptstyle#2$}\vss}}}}
\makeatother

\newcommand*{\vtcs}{\mathsf{V}}
\newcommand*{\edgs}{\mathsf{E}}
\newcommand*{\empG}{\varnothing}
\newcommand*{\nxt}{\rightharpoonup{}}

\newcommand*{\nxtN}[1]{\rightharpoonup^{#1}{}}
\newcommand*{\nxtT}{\nxtN{+}}
\newcommand*{\nxtS}{\nxtN{*}}

\newcommand*{\pis}{\mathsf{PIs}}
\newcommand*{\bs}{\textit{bs}}

\newcommand*{\ghost}[1]{\textcolor{black}{#1}}

\newcommand*{\lbls}{\mathcal{L}}

\newcommand*{\gossip}{\mathsf{gossip}}
\newcommand*{\interpret}{\mathsf{interpret}}


\newcommand*{\assign}{\mathrel{\coloneqq}}

\SetKwBlock{SubAlgoBlock}{}{end}
\newcommand{\SubAlgo}[2]{#1 \SubAlgoBlock{#2}}

\SetKw{Module}{module}
\SetKw{Function}{function}
\SetKw{Send}{send}
\SetKw{Broadcast}{broadcast}
\SetKw{If}{if}
\SetKw{Else}{else}
\SetKw{Upon}{when}
\SetKw{Received}{received}
\SetKw{Triggered}{triggered}
\SetKw{Trigger}{trigger}
\SetKw{From}{from}
\SetKw{Every}{every}
\SetKw{And}{and}
\SetKw{Not}{not}
\SetKw{Or}{or}
\SetKw{To}{to}
\SetKw{SuchThat}{such that}
\SetKw{Exists}{exists}
\SetKw{Exist}{exist}
\SetKw{Forall}{for all}
\SetKw{While}{while}
\SetKw{Pick}{pick}
\SetKw{Do}{do}
\SetKw{For}{for}
\SetKw{Some}{some}
\SetKw{Then}{then}
\SetKw{Where}{where}
\SetKw{Holds}{holds}
\SetKw{New}{new}
\SetKw{Timer}{timer}
\SetKw{Set}{set}
\SetKw{Start}{start}
\SetKw{After}{after}
\SetKw{Expires}{expires}
\SetKw{Time}{time}
\SetKw{True}{true}
\SetKw{Return}{return}
\SetKw{Process}{process}
\SetKw{Channel}{channel}
\SetKw{Copy}{copy}


\renewcommand{\G}{\mathcal{G}}
\newcommand*{\FWD}{\mathtt{FWD}}

\renewcommand{\P}{\mathcal{P}}

\newcommand*{\receive}{\mathsf{receive}}

\newcommand*{\sender}{\mathsf{sender}}
\newcommand*{\receiver}{\mathsf{receiver}}

\newcommand*{\brdcst}{\mathsf{broadcast}}
\newcommand*{\deliver}{\mathsf{deliver}}
\newcommand*{\echoed}{\mathsf{echoed}}
\newcommand*{\readied}{\mathsf{readied}}
\newcommand*{\delivered}{\mathsf{delivered}}
\newcommand*{\broadcast}{\mathsf{broadcast}}
\newcommand*{\ECHO}{\mathtt{ECHO}}
\newcommand*{\READY}{\mathtt{READY}}



\newcommand*{\In}{\mathsf{in}}
\newcommand*{\Out}{\mathsf{out}}

\newcommand*{\interpreted}{\mathcal{I}}
\newcommand*{\bfr}{\mathsf{Ms}}
\newcommand*{\eligible}{\mathsf{eligible}}

\newcommand*{\shim}{\mathsf{shim}}
\newcommand*{\indicate}{\mathsf{indicate}}
\newcommand*{\request}{\mathsf{request}}

\newcommand*{\IF}{\mathbb{I}}
\newcommand*{\PR}{\mathbb{P}}

\newcommand*{\gssp}{\textit{gssp}}
\newcommand*{\intprt}{\textit{intprt}}


\geometry{twoside=false}

\copyrightyear{2021}
\acmYear{2021}
\setcopyright{acmlicensed}\acmConference[PODC '21]{Proceedings of the 2021 ACM Symposium on Principles of Distributed Computing}{July 26--30, 2021}{Virtual Event, Italy}
\acmBooktitle{Proceedings of the 2021 ACM Symposium on Principles of Distributed Computing (PODC '21), July 26--30, 2021, Virtual Event, Italy}
\acmPrice{15.00}
\acmDOI{10.1145/3465084.3467930}
\acmISBN{978-1-4503-8548-0/21/07}

\settopmatter{printacmref=true}
\begin{document}
\fancyhead{}

\title{Embedding a Deterministic BFT Protocol in a Block DAG}

\author{Maria A Schett}
\email{mail@maria-a-schett.net}
\affiliation{
  \institution{University College London}
  \city{London}
  \country{United Kingdom}
}

\author{George Danezis}
\email{g.danezis@ucl.ac.uk}
\affiliation{
  \institution{University College London}
  \city{London}
  \country{United Kingdom}
}

\begin{abstract}
  This work formalizes the structure and protocols underlying recent
distributed systems leveraging block DAGs, which are essentially
encoding Lamport's happened-before relations between blocks, as their
core network primitives. We then present an embedding of any
deterministic Byzantine fault tolerant protocol $\mathcal{P}$ to
employ a block DAG for interpreting interactions between servers. Our
main theorem proves that this embedding maintains all safety and
liveness properties of $\mathcal{P}$. Technically, our theorem is
based on the insight that a block DAG merely acts as an efficient
reliable point-to-point channel between instances of $\mathcal{P}$
while also using $\mathcal{P}$ for efficient message compression.


\end{abstract}


\maketitle


\section{Introduction}

Recent interest in blockchain and cryptocurrencies has resulted in a
renewed interest in Byzantine fault tolerant consensus for state
machine replication, as well as Byzantine consistent and reliable
broadcast that is sufficient to build payment
systems~\cite{2019_guerraoui_et_al,2020_baudet_et_al}.
These systems have high demands on throughput. To meet this demand a
number of designs~\cite{2020_wang_et_al} depart from the traditional
setting and generalize the idea of a blockchain to a more generic
directed acyclic graph embodying Lamport's happened-before
relations~\cite{1978_lamport} between blocks. We refer to this
structure as a \emph{block DAG}. Instead of directly sending protocol
messages to each other, participants rely on the common higher level
abstraction of the \blockdag.
Moving from a chain to a graph structure allows for parallelization:
every participant can propose a block with transactions and not merely
a leader. As with blockchains, these blocks of transactions link
cryptographically to past blocks establishing an order between them.

Examples of such designs are \hashgraph~\cite{2016_baird} used by the Hedera network,
as well as \aleph~\cite{2019_gagol_et_al}, \blockmania~\cite{2018_danezis_et_al}, and
\flare~\cite{2019_rowan_et_al}. These works argue a number of advantages for
the \blockdag approach. First, maintaining a joint \blockdag is simple
and scalable, and can leverage widely-available distributed key-value
stores. Second, they report impressive performance results compared with traditional
protocols that materialize point-to-point messages as direct network messages. This
results from batching many transactions in each block; using a low number of
cryptographic signatures, having minimal overhead when running deterministic parts of
the protocol; using a common \blockdag logic
while performing network IO, and only applying the higher-level protocol logic off-line
possibly later; and as a result supporting running many instances of protocols in
parallel `for free'.

However, while the protocols may be simple and
performant when implemented, their specification, and arguments for
correctness, safety and liveness are far from simple.
Their proofs and arguments are usually inherently tied to their
specific applications and requirements, but both specification and
formal arguments of \hashgraph, \aleph, \blockmania, and \flare are
structured around two phases:
\begin{enumerate*}[label=\emph{(\roman*)}]
\item building a \blockdag, and
\item running a protocol on top of the \blockdag.
\end{enumerate*}
We generalize their arguments by giving an abstraction of a \blockdag
as a \emph{reliable point-to-point link}. We can then rely on this
abstraction to simulate a protocol~$\P$---as a black-box---on top of
this point-to-point link \emph{maintaining the safety and liveness
  properties} of $\P$.
We believe that this modular formulation of the underlying mechanisms
through a clear separation of the high-level protocol~$\P$ and the
underlying \blockdag allows for easy re-usability and strengthens the
foundations and persuasiveness of systems based on \blockdags.

In this work we present a formalization of a \blockdag, the protocols
to maintain a joint \blockdag, and its properties. We show that any
deterministic Byzantine fault tolerant~(BFT) protocol, can be embedded
in this \blockdag, while maintaining its safety and liveness
properties. We demonstrate that the advantageous properties of
\blockdag based protocols claimed by \hashgraph, \aleph, \blockmania,
and \flare ---such as the efficient message compression, batching of
signatures, the ability to run multiple instances `for free', and
off-line interpretation of the \blockdag---emerge from the generic
composition we present.  Therefore, the proposed composition not only
allows for straight forward correctness arguments, but also preserves
the claimed advantages of using a \blockdag approach, making it a
useful abstraction not only to analyze but also implement systems that
offer both high assurance and high performance.

\begin{figure}[t]
  \centering
  \tikzstyle{ipe stylesheet} = [
  ipe import,
  even odd rule,
  line join=round,
  line cap=butt,
  ipe pen normal/.style={line width=0.4},
  ipe pen fat/.style={line width=1.2},
  ipe pen heavier/.style={line width=0.8},
  ipe pen ultrafat/.style={line width=2},
  ipe pen normal,
  ipe mark normal/.style={ipe mark scale=3},
  ipe mark large/.style={ipe mark scale=5},
  ipe mark small/.style={ipe mark scale=2},
  ipe mark tiny/.style={ipe mark scale=1.1},
  ipe mark normal,
  /pgf/arrow keys/.cd,
  ipe arrow normal/.style={scale=7},
  ipe arrow large/.style={scale=10},
  ipe arrow small/.style={scale=5},
  ipe arrow tiny/.style={scale=3},
  ipe arrow normal,
  /tikz/.cd,
  ipe arrows, 
  <->/.tip = ipe normal,
  ipe dash normal/.style={dash pattern=},
  ipe dash dotted/.style={dash pattern=on 1bp off 3bp},
  ipe dash dash dot dotted/.style={dash pattern=on 4bp off 2bp on 1bp off 2bp on 1bp off 2bp},
  ipe dash dash dotted/.style={dash pattern=on 4bp off 2bp on 1bp off 2bp},
  ipe dash dashed/.style={dash pattern=on 4bp off 4bp},
  ipe dash normal,
  ipe node/.append style={font=\normalsize},
  ipe stretch normal/.style={ipe node stretch=1},
  ipe stretch normal,
  ipe opacity 10/.style={opacity=0.1},
  ipe opacity 30/.style={opacity=0.3},
  ipe opacity 50/.style={opacity=0.5},
  ipe opacity 75/.style={opacity=0.75},
  ipe opacity opaque/.style={opacity=1},
  ipe opacity opaque,
]
\definecolor{red}{rgb}{1,0,0}
\definecolor{blue}{rgb}{0,0,1}
\definecolor{brown}{rgb}{0.647,0.165,0.165}
\definecolor{darkblue}{rgb}{0,0,0.545}
\definecolor{darkcyan}{rgb}{0,0.545,0.545}
\definecolor{darkgray}{rgb}{0.663,0.663,0.663}
\definecolor{darkgreen}{rgb}{0,0.392,0}
\definecolor{darkmagenta}{rgb}{0.545,0,0.545}
\definecolor{darkorange}{rgb}{1,0.549,0}
\definecolor{darkred}{rgb}{0.545,0,0}
\definecolor{gold}{rgb}{1,0.843,0}
\definecolor{gray}{rgb}{0.745,0.745,0.745}
\definecolor{green}{rgb}{0,1,0}
\definecolor{lightblue}{rgb}{0.678,0.847,0.902}
\definecolor{lightcyan}{rgb}{0.878,1,1}
\definecolor{lightgray}{rgb}{0.827,0.827,0.827}
\definecolor{lightgreen}{rgb}{0.565,0.933,0.565}
\definecolor{lightyellow}{rgb}{1,1,0.878}
\definecolor{navy}{rgb}{0,0,0.502}
\definecolor{orange}{rgb}{1,0.647,0}
\definecolor{pink}{rgb}{1,0.753,0.796}
\definecolor{purple}{rgb}{0.627,0.125,0.941}
\definecolor{seagreen}{rgb}{0.18,0.545,0.341}
\definecolor{turquoise}{rgb}{0.251,0.878,0.816}
\definecolor{violet}{rgb}{0.933,0.51,0.933}
\definecolor{yellow}{rgb}{1,1,0}
\definecolor{black}{rgb}{0,0,0}
\definecolor{white}{rgb}{1,1,1}
\begin{tikzpicture}[ipe stylesheet]
  \node[ipe node]
     at (190.717, 715.679) {$\shim(\P)$};
  \node[ipe node]
     at (148, 676) {$\gossip(\G)$};
  \node[ipe node]
     at (220, 676) {$\interpret(\G, \P)$};
  \draw[shift={(128, 728)}, yscale=1.25]
    (0, 0) rectangle (160, -16);
  \draw[shift={(128, 688)}, xscale=0.475, yscale=1.25]
    (0, 0) rectangle (160, -16);
  \draw[shift={(212, 688)}, xscale=0.475, yscale=1.25]
    (0, 0) rectangle (160, -16);
  \draw[shift={(232.001, 640)}, xscale=1.6884, yscale=0.8, ipe pen fat]
    (0, 0) rectangle (24.545, -20);
  \node[ipe node]
     at (239.206, 628.324) {$\P(\ell).r$};
  \draw[{ipe ptarc[ipe arrow small]}-]
    (160, 688)
     -- (160, 708);
  \draw[{ipe ptarc[ipe arrow small]}-]
    (252, 708)
     -- (252, 688);
  \draw[{ipe ptarc[ipe arrow small]}-]
    (248, 640)
     -- (248, 668);
  \draw[{ipe ptarc[ipe arrow small]}-]
    (252, 668)
     -- (252, 640);
  \draw[darkgray]
    (124, 740)
     -- (124, 664)
     -- (208, 664)
     -- (208, 620)
     -- (292, 620)
     -- (292, 740)
     -- (124, 740)
     -- (124, 740);
  \fill[lightgray]
    (128, 640) rectangle (204, 624);
  \node[ipe node]
     at (147.038, 628.962) {\emph{network}};
  \draw[{ipe ptarc[ipe arrow small]}-{ipe pointed[ipe arrow small]}]
    (160, 640)
     -- (160, 668);
  \node[ipe node, font=\footnotesize]
     at (136, 648) {\emph{blocks}};
  \draw[shift={(168, 656)}, scale=0.5]
    (0, 0) rectangle (16, -16);
  \draw[shift={(180, 656)}, scale=0.5]
    (0, 0) rectangle (16, -16);
  \draw[shift={(192, 656)}, scale=0.5]
    (0, 0) rectangle (16, -16);
  \node[ipe node, font=\footnotesize]
     at (253.915, 652) {$\bfr_\P[\Out, \ell]$};
  \node[ipe node, font=\footnotesize]
     at (212, 652) {$\bfr_\P[\In, \ell]$};
  \node[ipe node, font=\footnotesize]
     at (119.753, 747.583) {$\request(\ell \in \lbls, r \in \rqsts_\P)$};
  \node[ipe node, font=\footnotesize]
     at (216, 748) {$\indicate(\ell \in \lbls, i \in \inds_\P)$};
  \node[ipe node, font=\footnotesize]
     at (256, 696) {$(\ell, i)$};
  \node[ipe node, font=\footnotesize]
     at (140, 696) {$(\ell, r)$};
  \draw[{ipe ptarc[ipe arrow small]}-]
    (204, 728)
     -- (204, 760);
  \draw[{ipe ptarc[ipe arrow small]}-]
    (212, 760)
     -- (212, 728);
  \fill[shift={(160, 776)}, xscale=1.2632, lightgray]
    (0, 0) rectangle (76, -16);
  \node[ipe node]
     at (187.038, 764.962) {\emph{user of $\P$}};
  \draw[ipe dash dotted, -{ipe linear[ipe arrow small]}]
    (192, 688)
     -- (192, 696)
     -- (228, 696)
     -- (228, 688);
  \node[ipe node]
     at (208, 700) {$\G$};
  \node[ipe node, font=\small]
     at (128, 732) {$s_i \in \srvrs$};
\end{tikzpicture}
  \caption{Components and interfaces.}
  \label{fig:components}
\end{figure}

\paragraph{Overview} %
Figure~\ref{fig:components} shows the interfaces and components of our
proposed \blockdag framework parametric by a deterministic BFT
protocol~$\P$.
At the top, we have a \emph{user} seeking to run one or multiple instances
of~$\P$ on servers~$\srvrs$.
First, to distinguish between multiple \emph{protocol instances} the
user assigns them a \emph{label}~$\ell$ from a set of
\emph{labels}~$\lbls$.
Now, for $\P$ there is a set of possible
\emph{requests}~$\rqsts_\P$. But instead of requesting
$r \in \rqsts_\P$ from $s_i \in \srvrs$ running $\P$ for protocol
instance~$\ell$, the user calls the high-level interface of our
\blockdag framework: $\request(\ell, r)$ in $\shim(\P)$.
Internally, $s_i$ passes $(\ell, r)$ on to $\gossip(\G)$---which
continuously builds $s_i$'s \emph{\blockdag}~$\G$ by receiving and
disseminating \emph{blocks}. The passed $(\ell, r)$ is included into
the next block $s_i$ disseminates, and $s_i$ also includes references
to other received blocks, where cryptographic primitives prevent
byzantine servers from adding cycles between
blocks~\cite{2002_maniatis_et_al}.
These blocks are continuously exchanged by the servers utilizing the
low-level interface to the \emph{network} to exchange
blocks. In Section~\ref{sec:block-dags} we
formally define the \blockdag, its properties and protocols for
servers to maintain a joint \blockdag.
Independently, indicated by the dotted line, $s_i$ interprets $\P$ by
\emph{reading} $\G$ and running $\interpret(\G, \P)$.
To do so, $s_i$ locally simulates every protocol instance~$\P$ with
label~$\ell$ by simulating one process instance~of $\P(\ell)$ for
every server~$s \in \srvrs$.
To drive the simulation, $s_i$ passes the request $r$ read from a
block in $\G$ to $\P$, and then $s_i$ simulates the message exchange between any
two servers based on the structure of the \blockdag and the
deterministic protocol $\P$. Therefore $s_i$ moves messages between
in- and out-buffers $\bfr_\P[\In, \ell]$ and $\bfr_\P[\Out, \ell]$.
%
%
Eventually, the simulation $\P(\ell)$ of the server~$s_i$ will
indicate~$i$ from the set of possible \emph{indications}~$\inds_\P$.
We show how the \blockdag essentially acts as a reliable
point-to-point link and describe how any deterministic BFT protocol $\P$ can be
interpreted on a \blockdag in Section~\ref{sec:interpret}.
Finally, after $\interpret$ indicated $i$, $\shim(\P)$ can indicate
$i$ for~$\ell$ to the user of $\P$.
From the user's perspective, the embedding of $\P$ acted as $\P$, \ie
$\shim(\P)$ \emph{maintained $\P$'s interfaces and properties}.
We prove this in Section~\ref{sec:interfacing} and illustrate the
\blockdag framework for $\P$ instantiated with byzantine reliable
broadcast protocol.
%
%
%
We give related work in Section~\ref{sec:related-work}, and conclude
in Section~\ref{sec:conclusion}, where we discuss integration aspects
of higher-level protocols and the \blockdag framework---including
challenges in embedding protocols with non-determinism, more advanced
cryptography, and BFT protocols operating under partial synchrony.

This is the short version of the paper, where we omit proof details
and the appendix. Please find the full version on
\arxiv~\cite{2021_schett_et_al}.

\paragraph{Contributions} %
We show that using the \blockdag framework of
Figure~\ref{fig:components} for a deterministic BFT protocol~$\P$
maintains the
\begin{enumerate*}[label=\emph{(\roman*)}]
\item interfaces, and
\item safety and liveness properties
\end{enumerate*}
of $\P$ (Theorem~\ref{th:meat}).
The argument is generic: interpreting the eventually \emph{joint
  \blockdag} implements a \emph{reliable point-to-point link}
(Lemma~\ref{th:joint-dag}, Lemma~\ref{lem:point-to-point}).  Using
this reliable point-to-point link any server can locally run a
simulation of $\P$ as a black-box. This simulation is an execution of
$\P$ and thus retains the properties of $\P$.
By using the \blockdag framework, the user gains efficient message
compression and runs instances of $\P$ in parallel `for free’.
This is due to the determinism of $\P$, which allows every server to
locally deduce message contents from only the fact that a message has
been received without actually receiving the contents. So every
received block automatically compresses these messages for preceding
blocks.
Moreover, with every new block every server creates a new instance of
$\P$.
%


\section{Background} \label{sec:background}

\paragraph{System Model}
%
%
We assume a finite set of \emph{servers}~$\srvrs$. A \emph{correct}
server~$s \in \srvrs$ faithfully follows a protocol~$\P$. When $s$ is
\emph{byzantine}, then $s$ behaves arbitrarily. However, we assume
byzantine servers are computationally bounded (\eg, $s$ cannot forge
signatures, or find collisions in cryptographic hash functions) and
cannot interfere with the Trusted Computing Base of correct servers
(\eg, kill the electricity supply of correct servers). The set
$\srvrs$ is fixed and known by every $s' \in \srvrs$ and we assume
$3f + 1$ servers to tolerate at most $f$ byzantine servers.
%
%
The set of all messages in protocol $\P$ is $\msgs_\P$.  Every
message~$m \in \msgs_\P$ has a $m.\sender$ and a $m.\receiver$. We
assume an arbitrary, but fixed, total order on messages: $<_\msgs$.
%
%
A protocol $\P$ is \emph{deterministic} if a state~$q$ and a sequence
of messages~$m \in \msgs_\P$ determine state~$q'$ and out-going
messages~$M \subseteq \pow{\msgs_\P}$. In particular, deterministic
protocols do not rely on random behavior such as coin-flips.
%
%
The exact requirements on the network synchronicity depend on the
protocol~$\P$, that we want to embed, \eg, we may require partial
synchrony~\cite{1988_dwork_et_al} to avoid FLP
\cite{1985_fischer_et_al}. The only network assumption we impose for
building \blockdags is the following:
\begin{assumption}[Reliable Delivery] \label{ass:reliable-links} %
  For two correct servers~$s_1$ and $s_2$, if $s_1$ sends a
  block~$B$ to $s_2$, then eventually $s_2$ receives $B$.
\end{assumption}

\paragraph{Cryptographic Primitives}
We assume a \emph{secure cryptographic hash function}~$\hsh: A \to A'$
and write $\hsh(x)$ for the hash of $x \in A$, and $\hsh(A)$ for $A'$
\citeAppendix{(Definition~\ref{def:crypto-hash})}{Definition~A.1}.
We further assume a secure \emph{cryptographic
  signature} scheme~\cite{2007_katz_et_al}:
given a set of \emph{signatures} $\signatures$ we have functions
$\signs : \srvrs \times \msgs \to \signatures$ and
$\validsign: \srvrs \times \msgs \times \signatures \to \bool$, where
$\validsign(s, m, \sigma) = \true$ iff $\signs(s, m) = \sigma$. Given
computational bounds on all participants appropriate parameters for
these schemes can be chosen to make their probability of failure
negligible, and for the remainder of this work we assume their
probability of failure to be zero.

\paragraph{Directed Acyclic Graphs}

A \emph{directed graph}~$\G$ is a pair of \emph{vertices}~$\vtcs$ and
\emph{edges}~$\edgs \subseteq \vtcs \times \vtcs$. We write $\empG$
for the \emph{empty graph}.
If there is an edge from $v$ to $v'$, that is $(v, v') \in \edgs$, we
write $v \nxt v'$. If $v'$ is \emph{reachable} from $v$, then $(v,v')$
is in the transitive closure of $\nxt$, and we write $\nxtT$. We write
$\nxtS$ for the reflexive and transitive closure, and $v \nxtN{n} v'$
for $n \geqslant 0$ if $v'$ is reachable from $v$ in $n$ steps.
A graph $\G$ is \emph{acyclic}, if $v \nxtT v'$ implies $v \neq v'$ for
all nodes $v, v' \in \G$.
We abbreviate $v \in \G$ if $v \in \vtcs_\G$, and $V \subseteq \G$ if
$v \in \G$ for all $v \in V$.
Let $\G_1$ and $\G_2$ be directed graphs. We define $\G_1 \cup \G_2$
as $(\vtcs_{\G_1} \cup \vtcs_{\G_2}, \edgs_{\G_1} \cup \edgs_{\G_2})$,
and $\G_1 \leqslant \G_2$ holds if
$\vtcs_{\G_1} \subseteq \vtcs_{\G_2}$ and
$\edgs_{\G_1} = {\edgs_{\G_2} \cap (\vtcs_{\G_1} \times
  \vtcs_{\G_1})}$.
Note, for $\leqslant$ we not only require
$\edgs_{\G_1} \subseteq \edgs_{\G_2}$, but additionally $\edgs_{\G_1}$
must already contain all edges from $\edgs_{\G_2}$ between vertices in
$\G_1$.
The following definition for inserting a new vertex~$v$ is restrictive: it
permits to extend $\G$ only by a vertex~$v$ and edges to this $v$.

\begin{definition} \label{def:insert} %
  Let $\G$ be a directed graph, $v$ be a vertex, and a $E$ be a set of
  edges of the form $\{(v_i, v) \mid v_i \in V \subseteq \G \}$.  We
  define $\ins(\G, v, E) = (\vtcs_\G \cup \{ v \}, \edgs_\G \cup E)$.
\end{definition}

This unconventional definition of inserting a vertex is sufficient for
building a \blockdag---and helps to establish useful properties of the
\blockdag in the next lemma: (\ref{lem:ins-idempotent})~inserting a vertex is
idempotent, (\ref{lem:g:subseteq})~the original graph is a
subgraph of the graph with a newly inserted
vertex, and~(\ref{lem:g:acyclic}) a \blockdag is acyclic by
construction.

\begin{lemma} \label{lem:ins:properties} %
  For a directed graph~$\G$, a vertex~$v$, and a set of
  edges~$E = \{(v_i, v) \mid v_i \in V \subseteq \G \}$, the following
  properties of $\ins(\G, v, E)$ hold:
  \begin{enumerate*}
  \item \label{lem:ins-idempotent} %
    if $v \in \G$ and $E \subseteq \edgs_\G$, then
    $\ins(\G, v, E) = \G$;
  \item \label{lem:g:subseteq} %
    if $E = \{(v_i, v) \mid v_i \in V \subseteq \G \}$ and
    $v \not\in \G$, then $\G \leqslant \ins(\G, v, E)$; and
  \item \label{lem:g:acyclic} %
    if $\G$ is acyclic, $v \not\in \G$, then $\ins(\G, v, E)$ is
    acyclic.
  \end{enumerate*}
\end{lemma}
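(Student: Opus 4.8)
The plan is to prove all three claims by directly unfolding the definitions of $\ins$ and of $\leqslant$, since each reduces to an elementary set-theoretic observation once the relevant sets are written out explicitly. For part~(\ref{lem:ins-idempotent}), $v \in \G$ gives $\vtcs_\G \cup \{v\} = \vtcs_\G$ and $E \subseteq \edgs_\G$ gives $\edgs_\G \cup E = \edgs_\G$, so $\ins(\G, v, E) = (\vtcs_\G, \edgs_\G) = \G$; this is just $A \cup \{x\} = A$ for $x \in A$ together with $A \cup B = A$ for $B \subseteq A$.

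For part~(\ref{lem:g:subseteq}) I would verify the two conjuncts in $\G \leqslant \ins(\G, v, E)$. The vertex inclusion $\vtcs_\G \subseteq \vtcs_\G \cup \{v\}$ is trivial. The edge condition $\edgs_\G = (\edgs_\G \cup E) \cap (\vtcs_\G \times \vtcs_\G)$ is the only step requiring care. Distributing the intersection over the union, the $\edgs_\G$ term gives $\edgs_\G \cap (\vtcs_\G \times \vtcs_\G) = \edgs_\G$, because $\G$ is a directed graph and so $\edgs_\G \subseteq \vtcs_\G \times \vtcs_\G$. The crucial observation is that the $E$ term $E \cap (\vtcs_\G \times \vtcs_\G)$ is empty: every edge in $E$ has the shape $(v_i, v)$ with target $v$, and $v \notin \G$ forces $v \notin \vtcs_\G$, so no element of $E$ lies in $\vtcs_\G \times \vtcs_\G$. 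The intersection therefore collapses to $\edgs_\G$ exactly, as the induced-subgraph form of $\leqslant$ demands.

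For part~(\ref{lem:g:acyclic}) I would exploit the fact that in $\G' = \ins(\G, v, E)$ the vertex $v$ is a \emph{sink}: all new edges of $E$ point \emph{into} $v$, and since $v \notin \G$ no edge of $\edgs_\G$ is incident to $v$ either, so $v$ has no outgoing edge in $\G'$. Hence no cycle can pass through $v$, and any witness $w \nxtT w$ in $\G'$ must avoid $v$ altogether. But a path avoiding $v$ uses only vertices of $\vtcs_\G$, and every edge between such vertices already lies in $\edgs_\G$ (no edge of $E$ qualifies, as each targets $v$); so $w \nxtT w$ would already hold in $\G$, contradicting that $\G$ is acyclic.

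I expect part~(\ref{lem:g:subseteq}) to be the genuinely subtle point, precisely because the paper's $\leqslant$ demands \emph{equality} of the restricted edge set rather than mere inclusion $\edgs_\G \subseteq \edgs_{\G'}$; the whole argument hinges on recognizing that freshness of $v$ keeps the new edges outside $\vtcs_\G \times \vtcs_\G$. Parts~(\ref{lem:ins-idempotent}) and~(\ref{lem:g:acyclic}) are then short, with the sink property of $v$ being the one idea worth isolating for the acyclicity case.
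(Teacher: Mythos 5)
Your proof is correct and follows essentially the same route as the paper's own (appendix) proofs: part~(1) by unfolding the definition of $\ins$, part~(2) by observing that freshness of $v$ keeps every edge of $E$ outside $\vtcs_\G \times \vtcs_\G$ so the restricted edge set equals $\edgs_\G$ exactly, and part~(3) by noting that $v$ has no outgoing edge, so any cycle would already lie in $\G$. Your version merely spells out details (the distribution of the intersection, the explicit sink argument) that the paper states tersely.
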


To give some intuitions,
for~Lemma~\ref{lem:ins:properties}~(\ref{lem:g:subseteq}), if
$v \in \G$ and $\G' = \ins(\G, v, E)$, then
${\edgs_{\G'} \cap (\vtcs_{\G} \times \vtcs_{\G})} = \edgs_{\G}$ may
not hold. For example, let $\G$ have vertices $v_1$ and $v_2$ with
$\edgs_{\G} = \varnothing$, and
$\G' = \ins(\G, v_2, \{ (v_1, v_2) \})$ with
$\edgs_{\G'} = \{ (v_1, v_2) \}$. Then we have
$\edgs_{\G} \neq {\edgs_{\G'} \cap (\vtcs_{\G} \times \vtcs_{\G})}$.
For~Lemma~\ref{lem:ins:properties}~(\ref{lem:g:acyclic}), if
$v \in \G$, then $\ins(\G, v, E)$ may add a cycle. For example, take
$\G$ with vertices $\{ v_1, v_2 \}$ and
$\edgs_{\G} = \{ (v_1, v_2) \}$ then $\ins(\G, v_1, \{ (v_2, v_1) \})$
contains a cycle.


\section{Building a Block DAG} \label{sec:block-dags}

The networking component of the \blockdag protocol is very simple: it
has one core message type, namely a \emph{block}, which is constantly
disseminated. A block contains authentication for references to
previous blocks, requests associated to instances of protocol $\P$,
meta-data and a signature. Servers only exchange and validate blocks.
From these blocks with their references to previous blocks, servers
build their \blockdags.
Although servers build their \blockdags locally, eventually correct
servers have a joint \blockdag~$\G$. As we show in the next
Section~\ref{sec:interpret}, the servers can then \emph{independently}
interpret $\G$ as multiple instances of $\P$.


\begin{definition} \label{def:block} %
  A \emph{block}~$B \in \blks$ has
  \begin{enumerate*}[label=\emph{(\roman*)}]
  \item an \emph{identifier}~$\nid$ of the server~$s$ which built~$B$,
  \item a \emph{sequence number}~$\seqn \in \mathbb{N}_0$,
  \item a finite list of hashes of \emph{predecessor}
    blocks~$\preds = [ \refB(B_1), \ldots, \refB(B_k) ]$,
  \item a finite list of labels and
    requests~$\rs \in \pow{\lbls \times \rqsts}$, and
  \item a \emph{signature} $\signature = \signs(\nid, \refB(B))$.
  \newcounter{block-elements}
  \setcounter{block-elements}{\value{enumi}}
  \end{enumerate*}
  Here, $\refB$ is a secure cryptographic hash function computed from
  $\nid$, $\seqn$, $\preds$, and $\rs$, but not $\signature$. By not
  depending on $\signature$, $\signs(B.\nid, \refB(B))$ is well defined.
\end{definition}

We use $B$ and $\refB(B)$ interchangeably, which is justified by
\emph{collision resistance} of $\refB$ %
\citeAppendix{(Definition
  \ref{def:crypto-hash}(\ref{it:collision-resistance}))}{Definition~A.1(3)}.
We use register notation, \eg, $B.\nid$ or $B.\signature$, to refer
to elements of a block $B$, and abbreviate
$B' \in \{ B' \mid \refB(B') \in B.\preds \}$ with $B' \in B.\preds$.
Given blocks $B$ and $B'$ with $B.\nid = B'.\nid$ and
$B'.\seqn = {B.\seqn + 1}$. %
If $B \in B'.\preds$ then we call $B$ a \emph{parent} of $B'$ and
write $B'.\parent = B$. We require that every block has at most one
parent.
We call $B$ a \emph{genesis block} if $B.\seqn = 0$. A genesis
block~$B$ cannot have a parent block, because $B.\seqn = 0$ and $0$ is
minimal in $\mathbb{N}_0$.

\begin{lemma} \label{lem:mutual-excl-refs} %
  For blocks $B_1$ and $B_2$, if $B_1 \in B_2.\preds$ then
  $B_2 \not\in B_1.\preds$.
\end{lemma}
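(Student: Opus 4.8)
The plan is to argue by contradiction, relying on the security of the cryptographic hash function $\refB$. Suppose, contrary to the claim, that both $B_1 \in B_2.\preds$ and $B_2 \in B_1.\preds$ hold. Unfolding the abbreviation $B' \in B.\preds$ (which stands for $\refB(B') \in B.\preds$), this means that $\refB(B_1)$ occurs in the list $B_2.\preds$ and $\refB(B_2)$ occurs in the list $B_1.\preds$.

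The key observation is that $\refB$ is computed from the fields $\nid$, $\seqn$, $\preds$, and $\rs$ of a block; in particular $\refB(B_1)$ is a function of $B_1.\preds$ and $\refB(B_2)$ is a function of $B_2.\preds$. Writing $h_1 = \refB(B_1)$ and $h_2 = \refB(B_2)$, the assumption gives $h_2 \in B_1.\preds$ and $h_1 \in B_2.\preds$. Hence $h_1$ is the hash of an input that contains $h_2$, while $h_2$ is the hash of an input that contains $h_1$: each of the two values occurs inside the preimage of the other (the degenerate case $B_1 = B_2$ is the self-referential instance of the same situation). Producing blocks with this circular dependency would require fixing a hash value before its own preimage is determined, \ie inverting $\refB$ or finding a collision. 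Since we assume the probability of failure of the hash scheme to be zero, no such pair of blocks exists, which contradicts the assumption and establishes $B_2 \not\in B_1.\preds$.

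The main obstacle is making the informal ``circular dependency'' precise rather than relying on temporal intuition. The clean way is a reduction: any (possibly byzantine) procedure that outputs $B_1, B_2$ satisfying both references must commit to $h_2$ as part of $B_1.\preds$ before it can evaluate $h_1 = \refB(B_1)$, and symmetrically must commit to $h_1$ before evaluating $h_2$; turning this into an algorithm that outputs a preimage or collision for $\refB$ contradicts its assumed collision and preimage resistance. In the formal development this is precisely the guarantee imported from the hash-chain construction of Maniatis \etal, so the remaining work is only to phrase the contradiction against the stated hash-security assumption rather than to build any new cryptographic machinery.
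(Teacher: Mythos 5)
Your proposal is correct and matches the paper's own proof: the paper likewise assumes both $\refB(B_1) \in B_2.\preds$ and $\refB(B_2) \in B_1.\preds$, observes that computing $\refB(B_1)$ would then require already knowing $\refB(B_2)$ (and vice versa), and derives a contradiction with the preimage-resistance of $\refB$. Your version merely spells out the reduction (and the degenerate case $B_1 = B_2$) in more detail than the paper's two-line argument.
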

Lemma~\ref{lem:mutual-excl-refs} prevents a byzantine server $\byz{s}$
to include a cyclic reference between $\byz{B}$ and $B$ by
\begin{enumerate*}
\item waiting for---or building itself---a block $B$ with
  $\refB(\byz{B}) \in B.\preds$, and then
\item building a block $\byz{B}$ such that $\refB(\byz{B}) \in B$.
\end{enumerate*}
As with secure time-lines~\cite{2002_maniatis_et_al},
Lemma~\ref{lem:mutual-excl-refs} gives a temporal ordering on $B$ and
$\byz{B}$. This is a static, cryptographic property, based on the
security of hash functions, and not dependent on the order in which
blocks are received on a network. While this prevents byzantine
servers from introducing cycles, they can still build ``faulty''
blocks. %
So next we define three checks for a server to ascertain that a block
is well-formed. If a block passes these checks, the block is
\emph{valid} from this server's point of view and the server
\emph{validated} the block.

\begin{definition} \label{def:valid} %
  A server~$s$ considers a block~$B$ \emph{valid}, written
  $\valid(s, B)$, if
  \begin{enumerate*}[label=\emph{(\roman*)}]
  \item \label{valid:sign-verified} %
    $s$ confirms $\validsign(B.\nid, B.\signature)$, \ie, that $B.\nid$
    built $B$,
  \item\label{valid:seqn} either %
    \begin{enumerate*}[label=\emph{(\alph*)}]
    \item \label{valid:genesis} %
      $B$ is a genesis block, or
    \item \label{valid:parent} %
      $B$ has exactly one parent, and
    \end{enumerate*}
  \item \label{valid:preds} %
    $s$ considers all blocks
    $B' \in  B.\preds$ valid.
  \end{enumerate*}
\end{definition}

\begin{figure}[t]
  \begin{minipage}[t]{0.45\linewidth}
    \centering
    \input{pics/block-dag}
    \caption{A \blockdag with 3 blocks $B_1$, $B_2$, and $B_3$.}
    \label{fig:bg}
  \end{minipage}
  \hspace{0.5cm}
  \begin{minipage}[t]{0.45\linewidth}
    \centering
    \input{pics/block-dag-lying}
    \caption{A \blockdag, where $\byz{s_1}$ is equivocating on the
      blocks $B_3$ and $B_4$.}
    \label{fig:bg-ly}
  \end{minipage}
\end{figure}

Especially, \ref{valid:seqn} deserves our attention: a server
$\byz{s_1}$ may still build two different blocks having the same
parent. However, $\byz{s_1}$ will not be able to create a further
block to `join' these two blocks with a different parent---their
successors will remain split. Essentially, this forces a linear
history from every block.

We assume, that if a correct server~$s$ considers a block~$B$ valid, then
$s$ can forward any block~$B' \in B.\preds$. That is, $s$ has received
the full content of $B'$---not only $\refB(B')$---and persistently
stores $B'$.
From valid blocks and their predecessors, a correct server builds a
\emph{\blockdag}:
\begin{definition} \label{def:block-dag} %
  For a server $s$, a \emph{\blockdag} $\G \in \blkdags$ is a directed
  acyclic graph with vertices $\vtcs_\G \subseteq \blks$, where
  \begin{enumerate*}[label=\emph{(\roman*)}]
  \item \label{bdag:all-valid} %
    $\valid(s, B)$ holds for all $B \in \vtcs_\G$, and
  \item \label{bdag:correct-edges} %
    if $B \in B'.\preds$ then $B \in \vtcs_\G$ and $(B, B') \in \edgs_\G$
    holds for all $B' \in \vtcs_\G$.
  \end{enumerate*}
  Let $B'$ be a block such that $\valid(s, B')$ holds and $B \in \G$
  for all $B \in B'.\preds$. Then $s$ \emph{inserts} $B'$ into $\G$ by
  $\ins(\G, B', \{ (B, B') \mid B \in B'.\preds \})$ after
  Definition~\ref{def:insert} and we write $\G.\ins(B)$. The
  preconditions guarantee that $\G.\ins(B')$ is a \blockdag
  \citeAppendix{(Lemma~\ref{lem:insert-preserves-block-dag})}{Lemma~A.3}.
\end{definition}

\begin{example}
  In Figure~\ref{fig:bg} we show a \blockdag with three blocks $B_1$,
  $B_2$, and $B_3$, where
  $B_1 = \{ \nid = s_1, \seqn = 0, \preds = \empseq \}$,
  $B_2 = \{ \nid = s_2, \seqn = 0, \preds = \empseq \}$, and
  $B_3 = \{ \nid = s_1, \seqn = 1, \preds = [ \refB(B_1), \refB(B_2)]
  \}$.
  Here, $\parent(B_3) = B_1$.
  %
  %
  %
  Consider now Figure~\ref{fig:bg-ly} adding the block:
  $B_4 = \{ \nid = s_1, \seqn = 1, \preds = [ \refB(B_1), \refB(B_2) ]
  \}$.
  While all blocks in Figure~\ref{fig:bg-ly} are valid, with
  block~$B_4$, $\byz{s_1}$ is \emph{equivocating} on the block
  $B_3$---and vice versa.
\end{example}


\begin{algorithm}[t]
  \setcounter{AlgoLine}{0} %
  \SubAlgo{ %
    \Module $\gossip(s \in \srvrs, \G \in \blkdags, %
    \requestBuffer \in \pow{\lbls \times \rqsts})$ \label{lin:gossip} } %
  { $\bB \assign %
    \{ \nid : s, \seqn : 0, \preds : \empseq, \rs : \empseq,
    \signature : \empsign \} \in \blks$ \label{lin:gossip:bb} \\
    $\buffer \assign \emp \in \pow{\blks}$ \label{lin:gossip:buffer} \\
    \smallskip %
    \SubAlgo{ %
      \Upon \Received $B \in \blks$ \And $B\not\in
      \G$ \label{lin:gossip:rcv-b}%
    } %
    { %
      $\buffer \assign {\buffer \cup \{ B \} }$ \label{lin:gossip:ins-b-buffer} \\
    } %
    \smallskip %
    \SubAlgo { %
      \Upon $\valid(s, B')$ \For \Some $B' \in
      \buffer$ \label{lin:gossip:valid} %
    } %
    {
      $\G.\ins(B')$ \label{lin:gossip:ins-b-g} \\
      $\bB.\preds \assign {\bB.\preds \concat [ \refB(B') ]} $ \label{lin:gossip:ins-pred} \\
      $\buffer \assign {\buffer \setminus \{ B' \}}$ \label{lin:gossip:rm-b-buffer} \\
    } %
    \smallskip %
    \SubAlgo{ %
      \Upon $B' \in \buffer$ \And $B \in B'.\preds$ \Where
      $B \not\in \buffer$ \And $B \not\in
      \G$ \label{lin:gossip:missing-preds} \label{lin:gossip:deltaB} %
    } %
    { %
      \Send $\FWD\ \refB(B)$ \To $B'.\nid$ \label{lin:gossip:send-fwd} %
    }
    \smallskip %
    \SubAlgo{ %
      \Upon \Received $\FWD\ \refB(B)$ \From $s'$ \And $B \in
      \G$ \label{lin:gossip:rcv-fwd} } %
    { %
      \Send $B$ \To $s'$ \label{lin:gossip:fwd} %
    } %
    \smallskip %
    \SubAlgo{ %
      \Upon $\disseminate()$ \label{lin:gossip:disseminate}%
    } %
    { %
      $\bB \assign \{ \bB \textbf{ with } %
        \rs : \requestBuffer.\takeRequests(), %
        \signature : \signs(s, \bB)\}$  \label{lin:gossip:inject} \label{lin:gossip:sign} \\ %
      $\G.\ins(\bB)$ \label{lin:gossip:insB} \\ %
      \Send $\bB$ \To \Every $s' \in \srvrs$ \label{lin:gossip:brdcst}
      \\ %
      $\bB \assign \{ %
      \nid : s, \seqn : \bB.\seqn + 1, \preds : [ \refB(\bB) ], \rs : \empseq, $
       $ \signature : \empsign \} $ \label{lin:gossip:next-b}  \\
    } %
  }
  \caption{Building the \blockdag~$\G$ and block~$\bB$.}
  \label{alg:gossip}
\end{algorithm}

To build a \blockdag and blocks every correct server follows the
$\gossip$ protocol in Algorithm~\ref{alg:gossip}. By building a
\blockdag every correct server will eventually have a joint view on
the system. By building a block, every server can inject messages into
the system: either explicit messages from the high-level protocol by
directly writing those into the block, or implicit messages by adding
references to other blocks.
In Algorithm~\ref{alg:gossip}, a server~$s$ builds
\begin{enumerate*}[label=\emph{(\roman*)}]
\item its \blockdag~$\G$ in
  lines~\ref{lin:gossip:rcv-b}--\ref{lin:gossip:fwd}, and
\item its current block $\bB$ by including requests and references to
  other blocks in
  lines~\ref{lin:gossip:disseminate}--\ref{lin:gossip:next-b}.
\end{enumerate*}
The servers communicate by exchanging blocks.
Assumption~\ref{ass:reliable-links} guarantees, that a correct~$s$
will eventually receive a block from another correct server. Moreover,
every correct server~$s$ will regularly request $\disseminate()$ in
line~\ref{lin:gossip:disseminate} and will eventually send their own
block~$\bB$ in line~\ref{lin:gossip:brdcst}. This is guaranteed by the
high-level protocol~(\cf~Section~\ref{sec:interfacing}).

Every server~$s$ operates on four data structures. The two data
structures which are shared with Algorithm~\ref{alg:interpret} are
given as arguments in line~\ref{lin:gossip}:
\begin{enumerate*}[label=\emph{(\roman*)}]
\item the \blockdag~$\G$, which Algorithm~\ref{alg:interpret} will
  only read, and
\item a buffer~$\requestBuffer$, where Algorithm~\ref{alg:interpret}
  inserts pairs of labels and requests.
  On the other hand, $s$ also keeps
\item the block~$\bB$ which $s$ currently builds
  (line~\ref{lin:gossip:bb}), and
\item a buffer~$\buffer$ of received blocks
  (line~\ref{lin:gossip:buffer}).
\end{enumerate*}
To build its \blockdag, $s$ inserts blocks into $\G$ in
line~\ref{lin:gossip:ins-b-g} and line~\ref{lin:gossip:insB}. It is
guaranteed that by inserting those blocks $\G$ remains a \blockdag
\citeAppendix{Lemma~\ref{lem:g-is-block-dag}}{Lemma~A.5}.  To insert a
block, $s$ keeps track of its received blocks as candidate blocks in
the buffer~$\buffer$
(line~\ref{lin:gossip:rcv-b}--\ref{lin:gossip:ins-b-buffer}).
Whenever $s$ considers a $B' \in \buffer$
valid~(line~\ref{lin:gossip:valid}), $s$ inserts $B'$ in $\G$
(line~\ref{lin:gossip:ins-b-g}).
However, to consider a block~$B'$ valid, $s$ has to consider all its
predecessors valid---and $s$ may not have yet received every
$B \in B'.\preds$. That is, $B' \in \buffer$ but $B \not\in \buffer$
and $B \not\in \G$ (\cmp line~\ref{lin:gossip:missing-preds}).
Now, $s$ can request forwarding of $B$ from the server that built~$B'$,
\ie from $s'$ where $B'.\nid = s'$, by sending $\FWD\ B$ to $s'$
(lines~\ref{lin:gossip:deltaB}--\ref{lin:gossip:send-fwd}).
To prevent $s$ from flooding $s'$ an implementation would guard
lines~\ref{lin:gossip:missing-preds}--\ref{lin:gossip:send-fwd}, \eg
by a timer~$\timer{B'}$.
That is, we implicitly assume that for every block~$B'$ a correct
server waits a reasonable amount of time before (re-)issuing a forward
request. The wait time should be informed by the estimated round-trip
time and can be adapted for repeating forwarding requests.

On the other hand, $s$ also answers to forwarding requests for a
block~$B$ from $s'$, where $B \in B'.\preds$ of some block $B'$
disseminated by $s$
(lines~\ref{lin:gossip:rcv-fwd}--\ref{lin:gossip:fwd}). It is not
necessary to request forwarding from servers other than $s'$. We only
require that correct servers will eventually share the same blocks.
This mechanism, together with Assumption~\ref{ass:reliable-links} and
$s$'s eventual dissemination of $\bB$, allows us to establish the
following lemma:

\begin{lemma} \label{lem:eventually} 
  For a correct server~$s$ executing $\gossip$, if $s$ receives a
  block $B$, which $s$ considers valid, then
  \begin{enumerate*}
  \item \label{lem:eventually-receive} %
    every correct server will eventually receive $B$, and
  \item \label{lem:eventually-valid} %
    every correct server will eventually consider $B$ valid.
  \end{enumerate*}
\end{lemma}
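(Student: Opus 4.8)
The plan is to establish the two claims in sequence: derive part~(\ref{lem:eventually-receive}) directly from the $\gossip$ protocol together with Assumption~\ref{ass:reliable-links}, and then obtain part~(\ref{lem:eventually-valid}) from it by a well-founded induction on the predecessor structure of~$B$.

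For~(\ref{lem:eventually-receive}), I would start from the moment $s$ considers $B$ valid. At that point line~\ref{lin:gossip:ins-b-g} inserts $B$ into $s$'s $\G$ and line~\ref{lin:gossip:ins-pred} appends $\refB(B)$ to $s$'s current block~$\bB$. Since $s$ disseminates regularly, the next $\disseminate()$ broadcasts (line~\ref{lin:gossip:brdcst}) a block $\bB^\ast$ with $\bB^\ast.\nid = s$ and $\refB(B) \in \bB^\ast.\preds$. By Assumption~\ref{ass:reliable-links} every correct server $s'$ eventually receives $\bB^\ast$ and buffers it (line~\ref{lin:gossip:ins-b-buffer}), unless $\bB^\ast$ is already in its $\G$, in which case $s'$ has already validated $\bB^\ast$ and hence holds $B$. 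If $s'$ still lacks $B$, the guard of line~\ref{lin:gossip:missing-preds} holds with $B' = \bB^\ast$, so $s'$ sends $\FWD\ \refB(B)$ to $\bB^\ast.\nid = s$; since $B \in \G$ at $s$, line~\ref{lin:gossip:rcv-fwd} fires and $s$ forwards the full block $B$ (line~\ref{lin:gossip:fwd}), which reaches $s'$ by a second appeal to Assumption~\ref{ass:reliable-links}. The point worth stressing is that the forwarding request is routed to the \emph{disseminator} $s$, which is correct and stores $B$, and not to $B.\nid$, which may be byzantine; propagation therefore never stalls on a faulty creator.

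For~(\ref{lem:eventually-valid}), I would induct over the set of blocks reachable from $B$ along $\preds$. This set is finite---at any time $s$ has validated only finitely many blocks, and validity of $B$ recursively requires validity of each such predecessor---and it is acyclic by Lemma~\ref{lem:mutual-excl-refs}, so the induction is well founded. Fix a correct server $s'$. By~(\ref{lem:eventually-receive}) it eventually receives $B$ and buffers it (or already has it in $\G$, in which case it already considers $B$ valid). Conditions~\ref{valid:sign-verified} and~\ref{valid:seqn} of Definition~\ref{def:valid} are static, signature- and structure-based facts, so holding for $s$ they hold verbatim for $s'$. For condition~\ref{valid:preds}, the induction hypothesis applied to each $B_i \in B.\preds$---each of which $s$ validated and therefore received and stored---yields that $s'$ eventually considers every $B_i$ valid, and consequently inserts each into its $\G$ (line~\ref{lin:gossip:ins-b-g}). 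Taking the maximum over these finitely many eventualities together with the receipt of $B$, there is a time after which $B$ lies in $s'$'s buffer, its signature verifies, and all its predecessors lie in $s'$'s $\G$; at that time $\valid(s', B)$ holds, the guard of line~\ref{lin:gossip:valid} is enabled, and $s'$ considers $B$ valid.

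The main obstacle is the liveness bookkeeping of the inductive step: I must argue that the finitely many independent ``eventually'' guarantees---receipt of $B$ and validation of each predecessor---combine so that the validation trigger at line~\ref{lin:gossip:valid} is genuinely enabled and fires at $s'$, and that the structure I induct over is genuinely finite and well founded. Both hinge on the finiteness of the validated history and on the acyclicity supplied by Lemma~\ref{lem:mutual-excl-refs}; with those in hand, combining the eventualities is routine.
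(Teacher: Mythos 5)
Your proposal is correct and takes essentially the same approach as the paper: part~(\ref{lem:eventually-receive}) via the reference-then-disseminate-then-forward mechanism with two appeals to Assumption~\ref{ass:reliable-links} (including the key point that the $\FWD$ request is answered by the correct disseminator~$s$, not by $B.\nid$), and part~(\ref{lem:eventually-valid}) by induction over the predecessor structure of~$B$, applying part~(\ref{lem:eventually-receive}) and the induction hypothesis to each $B_i \in B.\preds$. The only cosmetic difference is the induction measure (your well-founded induction on the finite predecessor set versus the paper's induction on the sum of path lengths from genesis blocks), and note that Lemma~\ref{lem:mutual-excl-refs} as stated only excludes mutual (two-block) reference cycles, though the same preimage-resistance argument yields the full acyclicity both measures rest on.
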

%

In parallel to building~$\G$, $s$ builds its current block $\bB$ by
\begin{enumerate*}[label=\emph{(\roman*)}]
\item continuously adding a reference to any block~$B'$, which $s$
  receives and considers valid in line~\ref{lin:gossip:ins-pred}
  (adding at most one reference to $B'$ \citeAppendix{(
    Lemma~\ref{lem:ref-at-most-once})}{Lemma~A.6}), and
\item eventually sending~$\bB$ to every server in
  line~\ref{lin:gossip:brdcst}.
\end{enumerate*}
Just before $s$ sends $\bB$, $s$ injects literal inscriptions of
$(\ell_i, r_i) \in \requestBuffer$ into $\bB$ in
line~\ref{lin:gossip:sign}. Now $\rs$ holds requests~$r_i$ for the
protocol instances~$\P$ with label~$\ell_i$. These requests will
eventually be read in Algorithm~\ref{alg:interpret}.
Finally, $s$ signs $\bB$ in line~\ref{lin:gossip:sign}, sends $\bB$ to
every server, and starts building its next $\bB$ in
line~\ref{lin:gossip:next-b} by incrementing the sequence
number~$\seqn$, initializing $\preds$ with the parent block, as well
as clearing $\rs$ and $\signature$.


So far we established, how $s$ builds its own \blockdag. Next we want
to establish the concept of a \emph{joint \blockdag} between two
correct servers~$s$ and $s'$. Let $\G_s$ and $\G_{s'}$ be the
\blockdag of $s$ and $s'$. We define their \emph{joint
  \blockdag}~$\G'$ as a \blockdag $\G' \geqslant \G_s \cup
\G_{s'}$. This joint \blockdag is a \blockdag for $s$ and for $s'$
\citeAppendix{(Lemma~\ref{lem:cupdag})}{Lemma~A.7}.
Intuitively, we want any two correct servers to be able to `gossip
some more' and arrive at their joint \blockdag~$\G'$.

\begin{lemma} \label{th:joint-dag} %
  Let $s$ and $s'$ be correct servers with \blockdags $\G_{s}$ and
  $\G_{s'}$.  By executing $\gossip$ in Algorithm~\ref{alg:gossip},
  eventually $s$ has a \blockdag~$\G'_s$ such that
  $\G_s' \geqslant {\G_s \cup \G_{s'}}$.
\end{lemma}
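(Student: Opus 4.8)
The plan is to reduce the claim to a liveness statement about individual blocks and then invoke Lemma~\ref{lem:eventually}. First I would note that a correct server never deletes a vertex from its \blockdag: in Algorithm~\ref{alg:gossip} blocks are only ever \emph{inserted} (line~\ref{lin:gossip:ins-b-g} and line~\ref{lin:gossip:insB}), and insertions preserve the \blockdag property, so as time progresses $s$'s \blockdag grows monotonically and always retains $\vtcs_{\G_s}$. Since $\G_{s'}$ is a snapshot it is finite, say $\vtcs_{\G_{s'}} = \{B_1,\dots,B_n\}$. Hence it suffices to show that eventually $s$ has received, considered valid, and inserted every $B_i$; the resulting \blockdag $\G_s'$ then automatically satisfies $\vtcs_{\G_s} \cup \vtcs_{\G_{s'}} \subseteq \vtcs_{\G_s'}$.

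Second, I would transfer each $B_i$ from $s'$ to $s$. Every $B_i \in \vtcs_{\G_{s'}}$ lies in the \blockdag of the correct server $s'$, so by Definition~\ref{def:block-dag}~(\ref{bdag:all-valid}) it is valid for $s'$, and it entered $\G_{s'}$ either because $s'$ received and validated it (line~\ref{lin:gossip:valid}) or because $s'$ built and disseminated it itself (line~\ref{lin:gossip:insB}, line~\ref{lin:gossip:brdcst}); in the latter case Assumption~\ref{ass:reliable-links} already delivers it to $s$. In either case Lemma~\ref{lem:eventually}, applied with $s'$ as the originating correct server, yields that \emph{every} correct server---in particular $s$---eventually receives $B_i$ and eventually considers it valid. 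As there are only finitely many $B_i$, there is a single later time by which $s$ has received and validated all of them.

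Third, I would argue that reception plus validity forces insertion, in the right order. The set $\vtcs_{\G_{s'}}$ is downward closed under predecessors: by Definition~\ref{def:block-dag}~(\ref{bdag:correct-edges}), if $B_i \in \vtcs_{\G_{s'}}$ and $B \in B_i.\preds$ then $B \in \vtcs_{\G_{s'}}$ too. Because the \blockdag is acyclic---its predecessor relation being a strict partial order by Lemma~\ref{lem:mutual-excl-refs} and acyclicity---I can enumerate $\{B_1,\dots,B_n\}$ in a topological order placing every block after all of its predecessors, and then induct along it. Once $B_i$'s predecessors are already in $\G_s'$, the precondition of $\G.\ins(B_i)$ from Definition~\ref{def:block-dag} (all predecessors present) is met, so the guard of line~\ref{lin:gossip:valid} lets $s$ insert $B_i$ in line~\ref{lin:gossip:ins-b-g}. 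Scheduling insertions predecessor-first is legitimate, since validity of $B_i$ entails validity of each predecessor (Definition~\ref{def:valid}~(\ref{valid:preds})), so the predecessors are themselves eligible for insertion. Thus eventually all of $\vtcs_{\G_{s'}}$ sits in $\G_s'$, giving the vertex inclusion.

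Finally, for the $\leqslant$-relation I would observe that edges carry no independent information: in every \blockdag built by Algorithm~\ref{alg:gossip} the only edges ever added are of the form $(B,B')$ with $B \in B'.\preds$ (Definition~\ref{def:insert} and Definition~\ref{def:block-dag}), so an edge $(B,B')$ is present exactly when $B,B'$ are both vertices and $B \in B'.\preds$. Consequently, restricting $\edgs_{\G_s'}$ to $\vtcs_{\G_s \cup \G_{s'}} \times \vtcs_{\G_s \cup \G_{s'}}$ returns precisely the predecessor-edges among those vertices, which is exactly $\edgs_{\G_s \cup \G_{s'}}$; this is the equality demanded by $\geqslant$, and together with the vertex inclusion it gives $\G_s' \geqslant \G_s \cup \G_{s'}$. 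I expect the main obstacle to be the bookkeeping of the third step: making the ``eventually received and valid'' guarantees of Lemma~\ref{lem:eventually} line up with the insertion guard of line~\ref{lin:gossip:valid} so that blocks are genuinely inserted, and in predecessor-first order, rather than merely lingering as validatable entries in $\buffer$. The edge-equality subtlety of $\geqslant$ (as opposed to mere edge containment) is the secondary point, and it is precisely what the ``edges are predecessor references'' observation resolves.
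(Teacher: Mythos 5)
Your proof is correct and takes essentially the same route as the paper's: the heart of both arguments is Lemma~\ref{lem:eventually}, which transfers every block of $\G_{s'}$ to $s$ (eventual receipt and eventual validity), after which the insertion handler of $\gossip$ puts those blocks into $s$'s \blockdag. The details you work out by hand---the predecessor-first insertion order and the edge-equality demanded by $\geqslant$---are precisely what the paper delegates to its appendix lemmas (Lemma~\ref{lem:g-is-block-dag} and Lemma~\ref{lem:cupdag}), so your attempt is a more self-contained rendering of the same argument rather than a different one.
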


\begin{proof}
  By \citeAppendix{Lemma~\ref{lem:g-is-block-dag}}{Lemma~A.5} any \blockdag~$\G'$ obtained
  through $\gossip$ is a \blockdag, and by \citeAppendix{Lemma~\ref{lem:cupdag}}{Lemma~A.7}
  $\G'$ is a \blockdag for $s$.
  It remains to show that by executing $\gossip$, eventually $\G'$
  will be the \blockdag for $s$. As $s'$ received and considers all
  $B \in \G_{s'}$ valid, by
  Lemma~\ref{lem:eventually}~(\ref{lem:eventually-valid}) $s$ will
  eventually consider every $B$ valid. By executing $\gossip$, $s$
  will eventually insert every $B$ in its \blockdag and $\G'$ will
  contain all $B \in \G_{s'}$.
\end{proof}

In the next section, we will show how $s$ and $s'$ can independently
interpret a deterministic protocol~$\P$ on this joint \blockdag. But
before we do so, we want to highlight that the $\gossip$ protocol
retains the key benefits reported by works using the \blockdag
approach, namely simplicity and amenability to high-performance
implementation.
Currently, our $\gossip$ protocol in Algorithm~\ref{alg:gossip} uses an
explicit forwarding mechanism in
lines~\ref{lin:gossip:missing-preds}--\ref{lin:gossip:fwd}. This
explicit forwarding mechanism---as opposed to every correct server
re-transmitting \emph{every} received and valid block in a second
communication round---is possible through blocks including references
to predecessor blocks. Hence, every server knows what blocks it is
missing and whom to ask for them. But in an implementation, we would
go a step further and replace the forwarding mechanism---and
messages---as described next:
each block is associated with a unique cryptographic reference that
authenticates its content. As a result both best-effort broadcast
operations as well as synchronization operations can be implemented
using distributed and scalable key-value stores at each server (\eg,
\cassandra, \awsse), which through sharding have no limits on their
throughput. Best-effort broadcasts can be implemented directly,
through simple asynchronous IO.
This is due to the the (now) single type of message, namely blocks,
and a single handler for blocks in gossip that performs minimal work:
it just records blocks, and then asynchronously ensures their
predecessors exist (potentially doing a remote key-value read) and
they are valid (which only involves reference lookups into a
hash-table and a single signature verification).
Alternatively, best-effort broadcast itself can be implemented using a
publish-subscribe notification system and remote reads into
distributed key value stores.
In summary, the simplicity and regularity of $\gossip$, and the weak
assumptions and light processing allow systems' engineers great
freedom to optimize and distribute a server's operations. Both
\hashgraph and \blockmania (which have seen commercial implementation)
report throughputs of many 100,000 transactions per second, and
latency in the order of seconds.
As we will see in the next section no matter which $\P$ the servers $s$ and $s'$
choose to interpret, they can build a joint \blockdag using the same gossip logic---by only
exchanging blocks---and then \emph{independently} interpreting $\P$ on
$\G$.


\section{Interpreting a Protocol} \label{sec:interpret}

Every server~$s$ interprets the protocol~$\P$ embedded in its local
\blockdag~$\G$. This interpretation is completely \emph{decoupled}
from building the \blockdag in Algorithm~\ref{alg:gossip}. To
interpret one \emph{protocol instance} of $\P$ tagged with
label~$\ell$, server~$s$ locally runs one \emph{process instance}
of~$\P$ with label~$\ell$ for every other server $s_i \in
\srvrs$. Thereby, $s$ treats $\P$ as a black-box which
\begin{enumerate*}[label=\emph{(\roman*)}]
\item takes a request or a message, and
\item returns messages or an indication.
\end{enumerate*}
A server $s$ can fully simulate the protocol instance $\P$ for any
other server because their requests and messages have been embedded in
the \blockdag~$\G$ by Algorithm~\ref{alg:gossip}.
User requests~$r_j$ to $\P$ are embedded in a block $B \in \G$ in
$B.\rs$ and $s$ reads these requests from the block and passes them
on to the simulation of $\P$. Since $\P$ is deterministic, $s$
can---after the initial request~$r_j$ for $\P$---compute all
subsequent messages which would have been sent in $\P$ by interpreting
edges between blocks, such as $B_1 \nxt B_2$, as messages sent from
$B_1.\nid$ to $B_2.\nid$.  There is no need for explicitly sending
these messages.  And indeed, our goal is to show that the
interpretation of a deterministic protocol $\P$ embedded in a
\blockdag implements a reliable point-to-point link.

To treat $\P$ as a black-box, we assume the following high-level
interface:
\begin{enumerate*}[label=\emph{(\roman*)}]
\item an interface to \emph{request} $r \in \rqsts_\P$, and
\item an interface where $\P$ \emph{indicates} $i \in \inds_\P$.
\end{enumerate*}
When a request~$r$ reaches a process instance, we assume that it
immediately returns messages $m_1, \ldots, m_k$ triggered by $r$.  This is justified,
as $s$ runs all process instances locally. As requests do not depend
on the state of the process instance, also these messages do not
depend on the current state of process instance.
We also assume a low-level interface for $\P$ to \emph{receive} a
message~$m$. Again, we assume that when $m$ reaches a process
instance, it immediately returns the messages $m_1, \ldots, m_k$
triggered by~$m$.


Algorithm~\ref{alg:interpret} shows the protocol executed by $s$ for
interpreting a deterministic protocol~$\P$ on a \blockdag~$\G$.
The key task is to `get messages from one block and give them to the
next block'.
\begin{algorithm}[t]
  \setcounter{AlgoLine}{0}
  \SubAlgo{%
    \Module $\interpret(\G \in \blkdags, \P \in \Module)$} %
  {
    \smallskip
    $\interpreted[B \in \blks] \assign \false \in \bool$ \label{lin:interpret:interpreted} \\
    \smallskip
    \SubAlgo{ \Upon $B \in \G$ \Where
      $\eligible(B)$ \label{lin:interpret:loop} %
    } %
    { %
      \smallskip
      $B.\pis \assign \Copy\ {B.\parent.\pis} $ \label{lin:interpret:copy-parent} \\
      \smallskip %
      \SubAlgo{ %
        \For \Every $(\ell_j \in \lbls, r_j \in \rqsts) \in
        B.\rs$ \label{lin:interpret:l} } %
      { %
        \smallskip
        $B.\bfr[\Out, \ell_j] \assign B.\pis[\ell_j].r_j$ \label{lin:interpret:initbfr} \\
      }
      \smallskip %
      \SubAlgo{ %
        \For \Every
        $\ell_j \in \{ \ell_j \mid {(\ell_j, r_j) \in B_j.\rs} \land
        {B_j \in \G} \land {B_j \nxtT B}
        \}$ \label{lin:interpret:all-pis} %
      } %
      { %
        \smallskip %
        \SubAlgo{ %
          \For \Every $B_i \in B.\preds$ \label{lin:interpret:preds}
        } %
        { %
          \smallskip %
          $B.\bfr[\In, \ell_j] \assign { B.\bfr[\In, \ell_j] \cup {}}$%
          $ \{ m \mid m \in B_i.\bfr[\Out, \ell_j] \text{ and } m.\receiver
            = B.\nid \}$ \label{lin:interpret:in} \\ %
        }
        \smallskip
        \SubAlgo{\For \Every $m \in B.\bfr[\In, \ell_j]$ \textbf{ordered
            by}
          $<_\msgs$ \label{lin:handle:foreach}} %
        { %
          \smallskip
          $B.\bfr[\Out, \ell_j] \assign {B.\bfr[\Out, \ell_j] \cup
          B.\pis[\ell_j].\receive(m)}$ \label{lin:interpret:bfr} \label{lin:handle:receive} \\
        } %
      }
      \smallskip %
      $\interpreted[B] = \true$ \label{lin:interpreted-b} \\ %
    } %
    \smallskip %
    \SubAlgo{ %
      \Upon $B.\pis[\ell_j].i$
      \label{lin:interpret:indicate} %
    } %
    { %
      \smallskip $\indicate(\ell_j, i,
      B.\nid)$ \label{lin:interpret:indicate-up} } %
  } %
  \caption{Interpreting protocol~$\P$ on the \blockdag~$\G$.}
  \label{alg:interpret}
\end{algorithm}
Therefore $s$ traverses through every $B \in \G$. To keep track of
which blocks in $\G$ it has already interpreted, $s$ uses
$\interpreted$ in line~\ref{lin:interpret:interpreted}. Note, that
edges in $\G$ impose a partial order: $s$ considers a block $B \in \G$
as $\eligible(B)$ for interpretation if
\begin{enumerate*}[label=\emph{(\roman*)}]
\item $\interpreted[B] = \false$, and
\item for every $B_i \in B.\preds$,
  $\interpreted[B_i] = \true$ holds.
\end{enumerate*}
While there may be more than one $B$ eligible, every $B \in \G$ is
interpreted eventually \citeAppendix{(Lemma~\ref{lem:picked-eventually})}{Lemma~A.10}.
Now $s$ picks an eligible $B$ in line~\ref{lin:interpret:loop} and
\emph{interprets} $B$ in
lines~\ref{lin:interpret:copy-parent}--\ref{lin:interpreted-b}. To
interpret~$B$, $s$ needs to keeps track of two variables for every
protocol instance~$\ell_j$:
\begin{enumerate*}
\item the state of the process instance~$\ell_j$ for a
  server~$s_i \in \srvrs$ in $\pis[\ell_j]$, and
\item the state of $\In$-going and $\Out$-going messages in
  $\bfr[\In, \ell_j]$ and $\bfr[\Out, \ell_j]$.
\end{enumerate*}

Our goal is to track changes to these two variables---the process
instances~$\pis$ and message buffers~$\bfr$---throughout the
interpretation of $\G$. To do so, we assign their state to every
block~$B$. Before $B$ is interpreted, we assume $B.\pis[\ell_j]$ to be
initialized with $\bot$, and $B.\bfr[d \in [\{ \In, \Out \}, \ell_j]$
with $\varnothing$. They remain so while $B$ is $\eligible$
\citeAppendix{(Lemma~\ref{lem:pis:is-init})}{Lemma~A.15}.

After interpreting $B$,
\begin{enumerate*}
\item $B.\pis[\ell_j]$ holds the state of the process
  instance~$\ell_j$ of the server~$s_i$, which built $B$, \ie,
  $s_i = B.\nid$, and
\item $B.\bfr[\In, \ell_j]$ holds the in-going messages for
  $s_i$ and $\bfr[\Out, \ell_j]$ the out-going messages from $s_i$ for
  process instance~$\ell_j$\footnote{An equivalent representation
    would keep process instances~$\pis[B, \ell_j, B.\nid]$ and message
    buffers~$\bfr[B, d \in \{ \In, \Out \}, \ell_j]$ explicitly as
    global state. We chose this notation to accentuate the information
    flow throughout $\G$.}.
\end{enumerate*}

As a starting point for computing the state of $B.\pis[\ell_j]$, $s$
copies the state from the parent block of $B$ in
line~\ref{lin:interpret:copy-parent}. For the base case, \ie all
(genesis) blocks~$B$ without parents, we assume
$B.\pis[\ell_j] \assign \New\ \Process\ \P(\ell_j, s_i)$ where
$s_i = B.\nid$. This is effectively a simplification: we assume a
running process instance~$\ell_j$ for every $s_i \in \srvrs$. In an
implementation, we would only start process instances for $\ell_j$
after receiving the first message or request for $\ell_j$ for
$s_i = B.\nid$. Now in our simplification, we start all process
instances for every label at the genesis blocks and pass them on from
the parent blocks. This leads us to our step case: $B$ has a
parent. As $B.\parent \in B.\preds$, $B.\parent$ has been interpreted
and moreover $B.\parent.\nid = s_i$ \citeAppendix{(Lemma~\ref{lem:pis:nid})}{Lemma~A.13}.
Next, to advance the copied state on $B$, $s$ processes
\begin{enumerate*}
\item \label{it:req} %
  all incoming requests~$r_j$ given by $B.\rs$ in
  lines~\ref{lin:interpret:l}--\ref{lin:interpret:initbfr}, and
\item \label{it:msg} %
  all incoming messages from $B_i.\nid$ to $B.\nid$ given by
  $B_i \nxt B$ in
  lines~\ref{lin:interpret:preds}--\ref{lin:interpret:bfr}.
\end{enumerate*}
For the former (\ref{it:req}), $s$ reads the labels and requests from
the field $B.\rs$. Here $r_j$ is the literal transcription of the
user's original request given to $\P$. To give an example, if $\P$
is reliable broadcast, then $r_j$ could read `$\broadcast(42)$' (\cf
Section~\ref{sec:interfacing}). When interpreting, $s$ requests $r_j$
from $B.\nid$'s simulated protocol instance: $B.\pis[\ell_j].r_j$.
For the latter~(\ref{it:msg}), $s$ collects
\begin{enumerate*}[label=\emph{(\roman*)}]
\item in $B.\bfr[\In, \ell]$ all messages for $B.\nid$ from
  $B_i.\bfr[\Out, \ell]$ where $B_i \in B.\preds$ in
  lines~\ref{lin:interpret:preds}--\ref{lin:interpret:in} and then
  feeds
\item $m \in B.\bfr[\In, \ell]$ to $B.\pis[\ell]$ in
  lines~\ref{lin:handle:foreach}--\ref{lin:handle:receive} in order
  $<_\msgs$. This (arbitrary) order is a simple way to guarantee that
  every server interpreting Algorithm~\ref{alg:interpret} will execute
  exactly the same steps.
\end{enumerate*}
By feeding those messages and requests to $B.\pis[\ell_j]$ in
lines~\ref{lin:interpret:initbfr} and \ref{lin:interpret:bfr} $s$
computes
\begin{enumerate*}
\item the next state of $B.\pis[\ell_j]$ and
\item the out-going messages from $B.\nid$ in $B.\bfr[\Out,
  \ell_j]$. By construction, $m.\sender = B.\nid$ for
  $m \in B.\bfr[\Out, \ell_j]$
  \citeAppendix{(Lemma~\ref{lem:out-sender})}{Lemma~A.14}.
\end{enumerate*}
Once, $s$ has completed this, $s$ marks $B$ as interpreted in
line~\ref{lin:interpreted-b} and can move on to the next eligible
block.
After $s$ interpreted $B$, the simulated process instance
$B.\pis[\ell_j]$ may indicate $i \in \inds$. If this is the case, $s$
indicates~$i$ for $\ell_j$ on behalf of $B.\nid$ in
lines~\ref{lin:interpret:indicate}--\ref{lin:interpret:indicate-up}.
Note, that none of the steps used the fact that it was $s$ who
interpreted $B \in \G$. So, for every $B$, every $s' \in \srvrs$ will
come to the exact same conclusion.

But we glossed over a detail, $s$ actually had to take a
choice---more than one $B$ may have been eligible in
line~\ref{lin:interpret:loop}. This is a feature: by having this
choice we can think of interpreting a $\G'$ with $\G' \geqslant \G$ as
an `extension' of interpreting $\G$.
And, for two eligible $B_1$ and $B_2$ it does not matter if we pick
$B_1$ before $B_2$. Informally, this is because when we pick $B_1$ in
line~\ref{lin:interpret:loop}, only the the state with respect to
$B_1$ is modified---and this state does not depend on $B_2$
\citeAppendix{(Lemma~\ref{lem:unchanged})}{Lemma~A.11}.
Another detail we glossed over is line~\ref{lin:interpret:all-pis}:
when interpreting $B$, $s$ interprets the process instances of every
$\ell_j$ relevant on $B$ \emph{at the same time}. But again, because
$\ell_j \neq \ell'_j$ are independent instances of the protocol with
disjoint messages, \ie, $B_i.\bfr[\Out, \ell_j]$ in
line~\ref{lin:interpret:in} is independent of any
$B_i.\bfr[\Out, \ell'_{j}]$, they do not influence each other and the
order in which we process $\ell_j$ does not matter.

Finally, we give some intuition on how byzantine servers can influence
$\G$ and thus the interpretation of $\P$. When running $\gossip$, a
byzantine server $\byz{s}$ can only manipulate the state of $\G$ by
\begin{enumerate*}
\item \label{it:equiv} sending an equivocating block, \ie building a
  $B$ and $B'$ with $\byz{s} = B.\parent.\nid$ and
  $\byz{s} = B'.\parent.\nid$. When interpreting $B$ and $B'$, $s$
  will split the state for $\byz{s}$ and have two `versions' of
  $\pis[\ell_j]$---$B'.\pis[\ell_j]$ and $B.\pis[\ell_j]$---sending
  conflicting messages for $\ell_j$ to servers referencing $B$ and
  $B'$. But as $\P$ is a BFT protocol, the servers $s_i$ simulating
  $\P$ (run by $s$) can deal with equivocation. Then $\byz{s}$ could
\item \label{it:twice} reference a block multiple times, or
\item \label{it:sil} never reference a block. But again as $\P$ is a
  BFT protocol, the servers $s_i$ simulating $\P$ can deal with
  duplicate messages and with silent servers.
\end{enumerate*}


Going back to Algorithm~\ref{alg:interpret}, the key task of $s$
interpreting $\G$ is to get messages from one block to the next
block. So we can see this interpretation of a \blockdag as an
implementation of a \emph{communication channel}. That is,
for a correct server~$s$ executing $s.\interpret(\G, \P)$
\begin{enumerate*}[label=\emph{(\roman*)}]
\item a server~$s_1$ \emph{sends} messages $m_1, \ldots, m_k$ for a
  protocol instance $\ell_j$ in either
  line~\ref{lin:interpret:initbfr} or line~\ref{lin:interpret:bfr} of
  Algorithm~\ref{alg:interpret}, and
\item a server $s_2$ \emph{receives} a message~$m$ for a protocol
  instance~$\ell_j$ in line~\ref{lin:handle:receive} of
  Algorithm~\ref{alg:interpret}.
\end{enumerate*}
The next lemma relates the sent and received messages with the message
buffers $\bfr$ and follows from tracing changes to the variables in
Algorithm~\ref{alg:interpret}:

\begin{lemma} \label{lem:bfr}
  For a correct server~$s$ executing $s.\interpret(\G, \P)$
  \begin{enumerate}
  \item \label{lem:send-bfr} %
    a server~$s_1$ \emph{sends} $m$ for a protocol instance $\ell'$
    iff
    there is a $B_1 \in \G$ with $B_1.\nid = s_1$ such that
    $m \in B_1.\bfr[\Out, \ell']$ for a $B' \in \G$ with
    $(\ell', r) \in B'.\rs$ and $B' \nxtS B_1$.
  \item \label{lem:receive-bfr}%
    a server~$s_2$ \emph{receives} a message~$m$ for protocol instance
    $\ell'$ iff there are some $B_1, B_2 \in \G$ with $B_1 \nxt B_2$
    and $B_2.\nid = s_2$ and $m \in B_2.\bfr[\In, \ell']$ for a
    $B' \in \G$ such that $(\ell', r) \in B'.\rs$ and $B' \nxtS B_1$.
  \end{enumerate}
\end{lemma}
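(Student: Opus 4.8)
The plan is to prove both biconditionals by directly tracing the two assignment sites of the output buffer $\bfr[\Out, \cdot]$ and the single site that feeds messages into the process instances, relying throughout on the fact that every block is interpreted eventually (\citeAppendix{Lemma~\ref{lem:picked-eventually}}{Lemma~A.10}). I would prove \ref{lem:send-bfr} first and then bootstrap \ref{lem:receive-bfr} from it, since the ``there is a request block in the causal past'' clause is exactly what \ref{lem:send-bfr} establishes for outgoing messages.

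For \ref{lem:send-bfr}, the key observation is that $B.\bfr[\Out, \ell']$ is written in exactly two places: line~\ref{lin:interpret:initbfr}, while processing a request $(\ell', r) \in B.\rs$, and line~\ref{lin:interpret:bfr}, while feeding an incoming message to $B.\pis[\ell']$. By definition each such write is precisely a \emph{send} event by $B.\nid$, and by \citeAppendix{Lemma~\ref{lem:out-sender}}{Lemma~A.14} the sender of every $m \in B.\bfr[\Out, \ell']$ is indeed $B.\nid$. For the forward direction, a send of $m$ for $\ell'$ fixes the interpreted block $B_1$ with $B_1.\nid = s_1$ and $m \in B_1.\bfr[\Out, \ell']$; it remains to produce a request block $B'$ with $B' \nxtS B_1$. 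If the write occurred at line~\ref{lin:interpret:initbfr} then $(\ell', r) \in B_1.\rs$, so $B' = B_1$ works by reflexivity of $\nxtS$; if it occurred at line~\ref{lin:interpret:bfr} then we are inside the loop of line~\ref{lin:interpret:all-pis}, whose guard hands us a $B'$ with $(\ell', r) \in B'.\rs$ and $B' \nxtT B_1$, hence $B' \nxtS B_1$. For the converse, any $m \in B_1.\bfr[\Out, \ell']$ must have been placed there by one of these two lines (no other line touches the buffer), and since $B_1$ is interpreted eventually the corresponding write is executed, which is a send by $B_1.\nid = s_1$.

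For \ref{lem:receive-bfr}, a \emph{receive} of $m$ for $\ell'$ happens exactly at line~\ref{lin:handle:receive} while iterating (line~\ref{lin:handle:foreach}) over $m \in B_2.\bfr[\In, \ell']$ for the interpreted block $B_2$ with $B_2.\nid = s_2$. For the forward direction, $m$ can only have entered $B_2.\bfr[\In, \ell']$ through line~\ref{lin:interpret:in}, which copies $m$ from $B_i.\bfr[\Out, \ell']$ for some $B_i \in B_2.\preds$ with $m.\receiver = B_2.\nid$; setting $B_1 = B_i$ gives $B_1 \nxt B_2$ by Definition~\ref{def:block-dag}\ref{bdag:correct-edges}, and since $m \in B_1.\bfr[\Out, \ell']$, part~\ref{lem:send-bfr} supplies a request block $B'$ with $B' \nxtS B_1$. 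For the converse, $B' \nxtS B_1 \nxt B_2$ gives $B' \nxtT B_2$, so $\ell'$ satisfies the guard of line~\ref{lin:interpret:all-pis} when $B_2$ is interpreted; then the loop of line~\ref{lin:handle:foreach} consumes every element of $B_2.\bfr[\In, \ell']$ at line~\ref{lin:handle:receive}, in particular $m$, which is by definition a receive by $s_2$.

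The main obstacle is the bookkeeping around the causal-past clause ``there is a $B'$ with $(\ell', r) \in B'.\rs$ and $B' \nxtS B_1$'': it is the only ingredient that is not a purely local read-off of a single line, because it has to be threaded through the recursive way in which outputs are generated. I expect this to be handled cleanly by treating the reflexive case of line~\ref{lin:interpret:initbfr} and the guard of line~\ref{lin:interpret:all-pis} uniformly, and by reusing part~\ref{lem:send-bfr} inside part~\ref{lem:receive-bfr} rather than re-deriving the condition. A secondary point requiring care is the ordering within the interpretation of a single block: all incoming messages are gathered in lines~\ref{lin:interpret:preds}--\ref{lin:interpret:in} before any are consumed in lines~\ref{lin:handle:foreach}--\ref{lin:handle:receive}, so that $B_2.\bfr[\In, \ell']$ has reached its final value by the time the receive loop runs.
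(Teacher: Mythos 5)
Your proposal is correct and follows the same basic strategy as the paper: identify the two write sites of $\bfr[\Out,\cdot]$ (lines~\ref{lin:interpret:initbfr} and~\ref{lin:interpret:bfr}), the single copy site for $\bfr[\In,\cdot]$ (line~\ref{lin:interpret:in}), and the single receive site (line~\ref{lin:handle:receive}), then use eventual interpretation (Lemma~\ref{lem:picked-eventually}) for the converse directions. The one genuine divergence is how you obtain the request-block witness $B'$ in part~(\ref{lem:receive-bfr}): you bootstrap it from the sender's side, observing $m \in B_1.\bfr[\Out,\ell']$ and extracting $B'$ with $B' \nxtS B_1$, whereas the paper takes $B'$ from the guard of line~\ref{lin:interpret:all-pis} evaluated at $B_2$, which only yields $B' \nxtT B_2$ --- strictly weaker than the $B' \nxtS B_1$ demanded by the statement, since a path into $B_2$ need not pass through $B_1$. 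Your route matches the statement exactly and is in fact what the paper's own Lemma~\ref{lem:out-label} (used later for authenticity) provides. One precision point: you phrase this step as ``part~\ref{lem:send-bfr} supplies a request block $B'$ with $B' \nxtS B_1$,'' but the literal iff of part~\ref{lem:send-bfr} only asserts existence of \emph{some} pair $(B_1', B')$ with $m \in B_1'.\bfr[\Out,\ell']$ and $B' \nxtS B_1'$; it does not tie $B'$ to the particular predecessor $B_i$ you fixed. What you actually need (and what your own proof of part~\ref{lem:send-bfr} establishes along the way) is the per-block fact: for \emph{every} $B$ with $m \in B.\bfr[\Out,\ell']$ there is such a $B'$ with $B' \nxtS B$. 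State that as a standalone claim (it is exactly Lemma~\ref{lem:out-label}) rather than citing the iff, and the argument is airtight.
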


The following lemma shows our key observation from before:
interpreting a \blockdag is independent from the server doing the
interpretation. That is, $s$ and $s'$ will arrive at the same state
when interpreting $B \in \G$.

\begin{lemma} \label{lem:joint-eq} %
  If $\G \leqslant \G'$then for every $B \in \G$, a
  deterministic protocol~$\P$ and correct servers~$s$ and $s'$
  executing $s.\interpret(\G, \P)$ and $s'.\interpret(\G', \P)$ it holds that
  $B.\pis[\ell_j] = B.\pis'[\ell_j]$ and
  $B.\bfr[\Out, \ell_j] = B.\bfr'[\Out, \ell_j]$ for
  $(\ell_j, r) \in B_j.\rs$ with $B_j \nxtN{n} B$ for $n \geqslant 0$.
\end{lemma}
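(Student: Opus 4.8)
The plan is to prove both equalities simultaneously by well-founded induction on the ancestor relation $\nxtT$, which is well-founded because $\G$ is acyclic (Lemma~\ref{lem:ins:properties}~(\ref{lem:g:acyclic})). Thus when treating a block $B \in \G$ I may assume the claim already holds for every strict ancestor $B'' \nxtT B$. The guiding observation is that interpreting $B$ in Algorithm~\ref{alg:interpret} consumes only three pieces of data: the field $B.\rs$, the parent state $B.\parent.\pis$ copied in line~\ref{lin:interpret:copy-parent}, and the output buffers $B_i.\bfr[\Out, \ell_j]$ of the predecessors $B_i \in B.\preds$ gathered in line~\ref{lin:interpret:in}. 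All of these are determined by $B$ together with its ancestors, so once I show these ancestors---and their internal edge structure---agree in $\G$ and $\G'$, determinism of $\P$ will force the outputs to agree.

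First I would fix the input data. Since $B.\preds$ is a cryptographic field of $B$ (Definition~\ref{def:block}), it is literally the same list in both runs; and because $B \in \G$ and $\G$ is a \blockdag, condition~(\ref{bdag:correct-edges}) of Definition~\ref{def:block-dag} forces every predecessor, hence by induction every ancestor of $B$, to lie in $\G$. Invoking $\G \leqslant \G'$ I get $\vtcs_\G \subseteq \vtcs_{\G'}$ and $\edgs_\G = \edgs_{\G'} \cap (\vtcs_\G \times \vtcs_\G)$, so the reachability relation $\nxtT$ restricted to the ancestors of $B$ coincides in the two graphs. Consequently the set of relevant labels computed in line~\ref{lin:interpret:all-pis}, namely $\{ \ell_j \mid (\ell_j, r_j) \in B_j.\rs,\ B_j \nxtT B \}$, is identical for $s$ and $s'$, and so is the literal field $B.\rs$ consumed in line~\ref{lin:interpret:l}.

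Next I would carry out the induction. For the base case $B$ is a genesis block: it has no parent, $B.\preds = \empseq$, so line~\ref{lin:interpret:in} contributes nothing and $B.\bfr[\In, \ell_j]$ is empty in both runs; the process instance is freshly created identically, and the only effect is processing the requests of $B.\rs$ in line~\ref{lin:interpret:initbfr}, giving $B.\pis[\ell_j] = B.\pis'[\ell_j]$ and $B.\bfr[\Out, \ell_j] = B.\bfr'[\Out, \ell_j]$ by determinism. For the step case, $B.\parent \in B.\preds$ is a strict ancestor, so the induction hypothesis yields $B.\parent.\pis[\ell_j] = B.\parent.\pis'[\ell_j]$, making the copied starting state of line~\ref{lin:interpret:copy-parent} equal in both runs. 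Each $B_i \in B.\preds$ is likewise a strict ancestor, so the hypothesis gives $B_i.\bfr[\Out, \ell_j] = B_i.\bfr'[\Out, \ell_j]$; since $B.\preds$ and $B.\nid$ agree, line~\ref{lin:interpret:in} aggregates the same receiver-filtered messages and hence $B.\bfr[\In, \ell_j] = B.\bfr'[\In, \ell_j]$. The loop of line~\ref{lin:handle:foreach} then feeds these identical messages to an identical starting state in the single fixed order $<_\msgs$; determinism of $\P$ makes every intermediate state and every emitted message coincide, so both $B.\pis[\ell_j]$ and $B.\bfr[\Out, \ell_j]$ match, closing the induction.

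I expect the main obstacle to be the bookkeeping around which labels are \emph{active} on which block. A label $\ell_j$ relevant to $B$ need not be relevant to $B.\parent$, so the copied parent state $B.\parent.\pis[\ell_j]$ may still be its initial value; I would handle this by noting that the initial value is identical in both runs, so the argument goes through uniformly whether or not the parent touched $\ell_j$, appealing to the invariant that uninterpreted entries stay initialized. The second delicate point is justifying that the graph-level relation $\nxtT$ used in line~\ref{lin:interpret:all-pis} is insensitive to the enlargement from $\G$ to $\G'$; this is exactly where the strong form of $\leqslant$---requiring $\edgs_\G$ to contain \emph{all} of $\G'$'s edges between $\G$-vertices, not merely $\edgs_\G \subseteq \edgs_{\G'}$---is needed, and I would make that dependence explicit.
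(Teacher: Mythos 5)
Your induction scheme genuinely differs from the paper's: the paper inducts on $n$, the length of the path from the request-bearing block $B_j$ to $B$, with base case $B = B_j$, whereas you perform well-founded induction on the ancestor relation with base case the genesis blocks. Your route is arguably tighter. The paper's induction hypothesis is stated only for predecessors lying on a path from $B_j$, yet it is then applied to $B.\parent$ and, via line~\ref{lin:interpret:in}, implicitly to all members of $B.\preds$---neither of which need lie on such a path; your hypothesis covers every strict ancestor, so both applications are legitimate, and your explicit treatment of labels active at $B$ but untouched at an ancestor (both runs see the same initial value) is exactly what is needed to close that gap. Likewise, your argument for why $\nxtT$-reachability into $B$ agrees in $\G$ and $\G'$ (predecessor lists are intrinsic block data, block-DAG closure of Definition~\ref{def:block-dag}, plus the strong form of $\leqslant$) makes explicit what the paper leaves implicit.

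There is, however, one concrete flaw: your base case conflates ``genesis block'' with ``block without predecessors.'' A genesis block is defined by $B.\seqn = 0$ and can have no \emph{parent}, but it may well have predecessors: in Algorithm~\ref{alg:gossip} the initial block $\bB$ (line~\ref{lin:gossip:bb}) accumulates references to every block received and validated before the first $\disseminate()$ (line~\ref{lin:gossip:ins-pred}), so genesis blocks with non-empty $\preds$ are typical. For such a block, line~\ref{lin:interpret:in} does contribute messages, contrary to your claim that it ``contributes nothing.'' The repair is immediate inside your own framework: split on \emph{parentless} versus \emph{parented} rather than genesis versus non-genesis, run your step-case argument for any block with non-empty $\preds$ (the well-founded hypothesis already applies to all its predecessors), and replace the parent-copy of line~\ref{lin:interpret:copy-parent} by the fresh, deterministic initialization $\P(\ell_j, B.\nid)$ exactly when $B$ has no parent. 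With that case analysis corrected, your proof is sound.
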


\begin{proof}
  In this proof, when executing $s'.\interpret(\G', \P)$ we write
  $\bfr'$ and $\pis'$ to distinguish from $\bfr$ and $\pis$ when
  executing $s.\interpret(\G, \P)$.
  We show $B_1.\bfr[\Out, \ell_j] = B_1.\bfr'[\Out, \ell_j]$ and
  $B_1.\pis[\ell_j] = B_1.\pis'[\ell_j]$ by induction on $n$---the
  length of the path from $B_j$ to $B_1$ in $\G$ and $\G'$.
  For the base case we have $B_1 = B_j$ and
  $ \ell_j \in \{ \ell_j \mid (\ell_j, r_j) \in B_1.\rs \}$. By
  \citeAppendix{Lemma~\ref{lem:picked-eventually}}{Lemma~A.10}, $B_1$
  is picked eventually in line~\ref{lin:interpret:loop} of
  Algorithm~\ref{alg:interpret} when executing $s.\interpret(\G,
  \P)$. Then, by line~\ref{lin:interpret:initbfr}
  $B_1.\bfr[\Out, \ell]$ is $B_1.\pis[\ell_j].(B_1.\rs)$.
  By the same reasoning, when executing $s'.\interpret(\G', \P)$,
  $B_1.\bfr'[\Out, \ell] = B_1.\pis[\ell_j].(B_1.\rs)$. As
  $B_1.\pis[\ell_j].(B_1.\rs)$ are deterministic and depend only on
  $B_1$, $\ell_j$, and $\P$, we know that
  $B_1.\pis[\ell] = B_1.\pis'[\ell]$ and
  $B_1.\pis[\ell] = B_1.\pis'[\ell]$, and conclude the base case.
  For the step case by induction hypothesis for $B_i \in B_1.\preds$
  with $B_j \nxtN{n-1} B_i$ holds
  \begin{enumerate*}[label=\emph{(\roman*)}]
  \item \label{it:bfr} $B_i.\bfr[\Out, \ell_j] = B_i.\bfr'[\Out, \ell_j]$, and
  \item \label{it:pis} $B_i.\pis[\ell_j] = B_i.\pis'[\ell_j]$.
  \end{enumerate*}
  Again by
  \citeAppendix{Lemma~\ref{lem:picked-eventually}}{Lemma~A.10}, $B_1$
  is picked eventually in line~\ref{lin:interpret:loop} of
  Algorithm~\ref{alg:interpret} when executing $s.\interpret(\G, \P)$
  and $s'.\interpret(\G', \P)$.
  In line~\ref{lin:interpret:copy-parent} and as
  $B_1.\parent \in B_1.\preds$ and \ref{it:pis}, now
  $B_1.\pis[\ell_j] = B_1.\pis'[\ell_j]$.
  Now, as $\P$ is deterministic, we only need to establish that
  $B_1.\bfr[\In, \ell_j] = B_1.\bfr'[\In, \ell_j]$ to conclude that
  $B_1.\pis[\ell_j] = B_1.\pis'[\ell_j]$ and
  $B_1.\bfr[\Out, \ell_j] = B_1.\bfr'[\Out, \ell_j]$, which as
  $ (\ell_j, r) \not\in B_1.\rs$, is only modified in this
  line~\ref{lin:interpret:bfr}.
  By \citeAppendix{Lemma~\ref{lem:uninterpreted}}{Lemma~A.9}, we know
  for both executions that
  $B_1.\bfr[\In, \ell_j] = B_1.\bfr'[\In, \ell_j] = \varnothing$,
  before $B_1$ is picked. Now, by \ref{it:bfr} and
  line~\ref{lin:interpret:in}
  $B_1.\bfr[\In, \ell_j] = B_1.\bfr'[\In, \ell_j]$, and we conclude
  the proof.
\end{proof}

A straightforward consequence of Lemma~\ref{lem:joint-eq} is, that
when in the interpretation of $s$, a server $s_1$ sends a message~$m$
for $\ell_j$, then $s_1$ sends $m$ in the interpretation of $s'$
\citeAppendix{(Lemma~\ref{lem:send-transfer})}{Lemma~A.16}. Curiously,
$s_1$ does not have to be correct: we know $s_1$ sent a block~$B$ in
$\G$, that corresponds to a message~$m$ in the interpretation of
$s$. Now this block will be interpreted by $s'$ and the same message
will be interpreted---and for that the server~$s_1$ does not need to
be correct.
By Lemma~\ref{lem:point-to-point} $\interpret(\G, \P)$ has the
properties of an authenticated perfect point-to-point link after
\cite[Module 2.5, p.\ 42]{2011_cachin_et_al}.

\begin{lemma} \label{lem:point-to-point} %
  For a \blockdag~$\G$ and a correct server~$s$ executing
  $s.\interpret(\G, \P)$ holds
 \begin{enumerate}
 \item \label{al:rd} if a correct server~$s_1$ sends a message $m$ for
   a protocol instance~$\ell$ to a correct server~$s_2$, then $s_2$
   eventually receives $m$ for protocol instance~$\ell$ for a correct
   server~$s'$ executing $s'.\interpret(\G', \P)$ and a \blockdag
   $\G' \geqslant \G$ (\emph{reliable delivery}).
 \item \label{al:nd} for a protocol instance~$\ell$ no message is
   received by a correct server~$s_2$ more than once (\emph{no duplication}).
 \item \label{al:at} if some correct server~$s_2$ receives a
   message~$m$ for protocol instance~$\ell$ with sender $s_1$ and
   $s_1$ is correct, then the message~$m$ for protocol instance~$\ell$
   was previously sent to $s_2$ by $s_1$ (\emph{authenticity}).
  \end{enumerate}
\end{lemma}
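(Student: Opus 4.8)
The plan is to reduce all three link properties to statements about the message buffers $\bfr$ via Lemma~\ref{lem:bfr}, and then to argue each property from the static structure of $\G$ together with the gossip guarantees of Section~\ref{sec:block-dags}. Throughout I would fix the correct interpreting server~$s$ and read ``$s_1$ \emph{sends} $m$ for $\ell$'' through Lemma~\ref{lem:bfr}(\ref{lem:send-bfr}) as the existence of a block $B_1 \in \G$ with $B_1.\nid = s_1$ carrying $m \in B_1.\bfr[\Out, \ell]$ reachable from a request block, and ``$s_2$ \emph{receives} $m$'' through Lemma~\ref{lem:bfr}(\ref{lem:receive-bfr}) as an edge $B_1 \nxt B_2$ with $B_2.\nid = s_2$ and $m \in B_2.\bfr[\In, \ell]$. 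This translation turns the point-to-point guarantees into reachability and buffer-membership facts about the \blockdag.

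For reliable delivery~(\ref{al:rd}), suppose the correct $s_1$ sends $m$ to $s_2$, so $m \in B_1.\bfr[\Out, \ell]$ with $m.\receiver = s_2$ for some $B_1 \in \G$ built by $s_1$. Since $s_1$ is correct it disseminated $B_1$ (line~\ref{lin:gossip:brdcst}), so by Lemma~\ref{lem:eventually}(\ref{lem:eventually-valid}) every correct server—in particular $s_2$—eventually receives $B_1$ and considers it valid. A correct $s_2$ then references $B_1$ in its next disseminated block (line~\ref{lin:gossip:ins-pred}), yielding a block $B_2$ with $B_2.\nid = s_2$ and $B_1 \in B_2.\preds$. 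By Lemma~\ref{th:joint-dag} the interpreting server $s'$ eventually holds a \blockdag $\G' \geqslant \G$ containing both $B_1$ and $B_2$. Because $\G \leqslant \G'$, Lemma~\ref{lem:joint-eq} gives $B_1.\bfr'[\Out, \ell] = B_1.\bfr[\Out, \ell] \ni m$, so when $s'$ interprets $B_2$ line~\ref{lin:interpret:in} inserts $m$ into $B_2.\bfr'[\In, \ell]$ (as $m.\receiver = s_2 = B_2.\nid$); by Lemma~\ref{lem:bfr}(\ref{lem:receive-bfr}) this is exactly $s_2$ receiving $m$ in $s'.\interpret(\G', \P)$.

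For no duplication~(\ref{al:nd}) and authenticity~(\ref{al:at}) I would argue purely from the interpretation's structure. For~(\ref{al:nd}), within one block $B_2$ the buffer $B_2.\bfr[\In, \ell]$ is a set, so line~\ref{lin:handle:foreach} feeds each $m$ to $B_2.\pis[\ell]$ at most once; across the chain of $s_2$, any reception of $m$ is pinned by Lemma~\ref{lem:bfr}(\ref{lem:receive-bfr}) to a predecessor block carrying $m$ in its out-buffer, and since a correct $s_2$ references each block at most once (Lemma~A.6), two distinct receptions of $m$ would force two distinct generating blocks, which I would exclude by the uniqueness of the out-buffer instance carrying $m$. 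For~(\ref{al:at}), if correct $s_2$ receives $m$ with declared sender $s_1$, then Lemma~\ref{lem:bfr}(\ref{lem:receive-bfr}) and line~\ref{lin:interpret:in} give a $B_1 \in B_2.\preds$ with $m \in B_1.\bfr[\Out, \ell]$; the out-sender property (Lemma~A.14) yields $B_1.\nid = m.\sender = s_1$, and since $s_1$ is correct, validity of $B_1$ via the signature check (Definition~\ref{def:valid}(\ref{valid:sign-verified})) certifies that $s_1$ genuinely built $B_1$. Feeding $B_1$ back through Lemma~\ref{lem:bfr}(\ref{lem:send-bfr})—whose request-reachability hypothesis transfers verbatim from the reception witness—shows $s_1$ did send $m$, and the edge $B_1 \nxt B_2$ (together with Lemma~\ref{lem:mutual-excl-refs}) orders the send strictly before the receive.

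The hardest part will be the reliable-delivery direction, since it is the only non-static claim and must bridge the gossip layer and the interpretation layer. Concretely, one must ensure that the block $B_2$ in which the correct receiver references $B_1$ is itself eventually present in a common \blockdag $\G' \geqslant \G$, and that the outgoing message computed on $B_1$ is stable under enlarging the graph: the first point rests on chaining Assumption~\ref{ass:reliable-links}, Lemma~\ref{lem:eventually}, and Lemma~\ref{th:joint-dag}, while the second is exactly the monotonicity of interpretation under $\leqslant$ captured by Lemma~\ref{lem:joint-eq}. A secondary subtlety, relevant to no duplication, is deciding what counts as ``the same message'': the argument must rely on each received message being tied to a unique generating block, so that equivocating or repeated references are resolved by the reference-at-most-once guarantee rather than silently collapsing distinct message instances.
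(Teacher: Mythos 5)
Your proposal is correct and follows essentially the same route as the paper's proof: reliable delivery by bridging the gossip guarantees (Lemma~\ref{th:joint-dag}, with your inline argument that the correct receiver references $B_1$ in its next block playing exactly the role of the paper's Lemma~\ref{lem:witness}) with interpretation-stability (you invoke Lemma~\ref{lem:joint-eq} directly where the paper packages it as Lemma~\ref{lem:send-transfer}) and the buffer characterization of Lemma~\ref{lem:bfr}; no duplication from Lemma~\ref{lem:ref-at-most-once}; and authenticity from Lemma~\ref{lem:out-sender} fed back through Lemma~\ref{lem:bfr}(\ref{lem:send-bfr}). Even the subtlety you flag for no duplication---that the argument must tie each received message to a unique generating block---is present, and equally tersely asserted, in the paper's own proof ("that $s_2$ received the exact same message $m$ twice is only possible if $B_1 = B_1'$").
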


\begin{proof}[Proof Sketch]
  For (\ref{al:rd}), we observe that every message~$m$ sent in
  $s.\interpret(\G, \P)$ will be sent in $s'.\interpret(\G', \P)$ for
  $\G' \geqslant \G$ by
  \citeAppendix{Lemma~\ref{lem:send-transfer}}{Lemma~A.16}.  Now by
  Lemma~\ref{th:joint-dag}, $s'$ will eventually have some
  $\G' \geqslant \G$. By Lemma~\ref{lem:bfr}~(\ref{lem:send-bfr}) we
  have witnesses $B_1, B_2 \in \G'$ with $B_1 \nxt B_2$, and by
  Lemma~\ref{lem:bfr}~(\ref{lem:receive-bfr}) we found a witness $B_2$
  to receive the message on when executing $s'.\interpret(\G', \P)$.
  For (\ref{al:nd}), we observe, that duplicate messages are only
  possible if $s_2$ inserted the block~$B_1$, which gives rise to the
  message~$m$, in two different blocks built by $s_2$. But this
  contradicts the correctness of $s_2$ by
  \citeAppendix{Lemma~\ref{lem:ref-at-most-once}}{Lemma~A.6}. For
  (\ref{al:at}), we observe that only $s_1$ can build and sign any
  block $B_1$ with $s_1 = B.\nid$, which gives rise to $m$.
\end{proof}

Before we compose $\gossip$ and $\interpret$ in the next section under
a $\shim$, we highlight the key benefits of using $\interpret$ in
Algorithm~\ref{alg:interpret}.
By leveraging the \blockdag structure together with $\P$'s
determinism, we can \emph{compress messages} to the point of omitting
some of them. When looking at line~\ref{lin:handle:receive} of
Algorithm~\ref{alg:interpret}, the messages in the
buffers~$\bfr[\Out, \ell]$ and $\bfr[\In, \ell]$ have never been sent
over the network. They are locally computed, functional results of the
calls $\receive(m)$. The only `messages' actually sent over the
network are the requests~$r_i$ read from $B.\rs$ in
line~\ref{lin:interpret:initbfr}.
To determine the messages following from these request, the server~$s$
simulates an instance of protocol~$\P$ for every
$s_i \in \srvrs$---simply by simulating the steps in the deterministic
protocol. However, not every step can be simulated: as $s$ does not
know $s_i$'s private key, $s$ cannot sign a message on $s_i$'s
behalf. But then, this is not necessary, because $s$ can derive the
authenticity of the message triggered by a block~$B$ from the
signature of $B$, \ie, $B.\signature$. So instead of signing individual
messages, $s_i$ can give a \emph{batch signature}~$B.\signature$ for
authenticating every message materialized through~$B$.
Finally, $s$ interprets protocol instances with labels~$\ell_j$
\emph{in parallel} in line~\ref{lin:interpret:all-pis} of
Algorithm~\ref{alg:interpret}. While traversing the \blockdag, $s$
uses the structure of the \blockdag to interpret requests and messages
for every~$\ell_j$. Now, the same block giving rise to a request in
process instance~$\ell_j$ may materialize a message in process
instance~$\ell_j'$. The (small) price to pay is the increase of block
size by references to predecessor blocks, \ie, $B.\preds$.
We will illustrate the benefits again on the concrete example of
byzantine reliable broadcast in the next
Section~\ref{sec:interfacing}.


\section{Using the Framework} \label{sec:interfacing}

The protocol $\shim(\P)$ in Algorithm~\ref{alg:shim} is responsible
for the choreography of the external user of $\P$, the $\gossip$
protocol in Algorithm~\ref{alg:gossip}, and the $\interpret$ protocol
in Algorithm~\ref{alg:interpret}.
\begin{algorithm}[t]
  \setcounter{AlgoLine}{0} %
  \SubAlgo{\Module $\shim(s \in \srvrs, \P \in \Module)$} %
  { %
    \smallskip
    $\requestBuffer \assign \emp \in \pow{\lbls \times \rqsts}$ \label{lin:shim:buffer} \\
    $\G \assign \empG \in \blkdags$ \label{lin:gossip:g} \\
    \smallskip
    $\gssp \assign \New\ \Process\ \gossip(s, \G, \requestBuffer)$ \label{lin:shim:init-gossip} \\
    $\intprt \assign \New\ \Process\ \interpret(\G, \P)$ \label{lin:shim:init-interpret} \\
    \smallskip %
    \SubAlgo{ %
      \Upon $\request(\ell \in \lbls, r \in \rqsts)$ \label{lin:shim:auth-r} %
    } %
    { %
      $\requestBuffer.\insertRequest(\ell, r)$ \label{lin:shim:ins-r-buffer} \\
    }
    \smallskip %
    \SubAlgo { %
      \Upon $\intprt.\indicate(\ell, i, s')$ \Where $s' =
      s$ \label{lin:shim:indicate} %
    } %
    { %
      $\indicate(\ell, i)$ \label{lin:shim:indicate-up} %
    } %
    \smallskip %
    \SubAlgo { %
      \textbf{repeatedly} \label{lin:repeatedly} %
    } %
    { %
      $\gssp.\disseminate()$  \label{lin:disseminate} %
    } %
  } %
  \caption{Interfacing between $\gossip$, $\interpret$ and user of
    $\P$.}
  \label{alg:shim}
\end{algorithm}
Therefore, the server~$s$ executing $\shim(\P)$ in
Algorithm~\ref{alg:shim} keeps track of two synchronized data
structures
\begin{enumerate*}
\item a buffer of labels and requests $\requestBuffer$ in
  line~\ref{lin:shim:buffer}, and
\item and the \blockdag $\G$ in line~\ref{lin:gossip:g}.
\end{enumerate*}
By calling $\requestBuffer.\insertRequest(\ell, r)$, $s$ inserts
$(\ell, r)$ in $\requestBuffer$, and by calling
$\requestBuffer.\takeRequests()$, $s$ gets \emph{and} removes a
suitable number of requests $(\ell_1, r_1), \ldots, (\ell_n, r_n)$
from $\requestBuffer$. To insert a block~$B$ in $\G$, $s$ calls
$\G.\ins(B)$ from Definition~\ref{def:block-dag}.
We tacitly assume these operations are atomic.
When starting an instance of $\gossip$ and $\interpret$ in
line~\ref{lin:shim:init-gossip} and~\ref{lin:shim:init-interpret}, $s$
passes in references to theses shared data structures.
When the external user of protocol~$\P$ requests $r \in \rqsts$ for
$\ell \in \lbls$ from $s$ via the request $\request(\ell, r)$ to
$\shim(\P)$ then $s$ inserts $(\ell, r)$ in $\requestBuffer$ in
lines~\ref{lin:shim:auth-r}--\ref{lin:shim:ins-r-buffer}. By executing
$\gossip$, $s$ writes $(\ell, r)$ in $\bB$ in
Algorithm~\ref{alg:gossip}~line~\ref{lin:gossip:inject}, and as
eventually $\bB \in \G$, $r$ will be requested from protocol instance
$\pis[\ell]$ when $s$ executes line~\ref{lin:interpret:initbfr} in
Algorithm~\ref{alg:interpret}
\citeAppendix{(Lemma~\ref{lem:r-in-p})}{Lemma~A.17}. On the other
hand, when $\interpret$ indicates $i \in \inds$, for the
interpretation of $\P$ \emph{for itself}, \ie, $s = s'$, then $s$
indicates to the user of $\P$ in
line~\ref{lin:shim:indicate}--\ref{lin:shim:indicate-up} of
Algorithm~\ref{alg:shim} \citeAppendix{(Lemma~\ref{lem:i-in-shim-p})}{Lemma~A.18}.  For $s$
to only indicate when $s = s'$ might be an over-approximation: $s$
trusts $s$'s interpretation of $\P$ as $s$ is correct for $s$. We
believe this restriction can be lifted (\cf
Section~\ref{sec:conclusion}).
Finally, as promised in Section~\ref{sec:block-dags}, in
lines~\ref{lin:repeatedly}--\ref{lin:disseminate} $s$ repeatedly
requests $\disseminate$ from $\gossip$ to disseminate $\bB$.
Within the control of $s$, the time between calls to $\disseminate$
can be adapted to meet the network assumptions of $\P$ and can be
enforced \eg by an internal timer, the block's payload, or when $s$
falls $n$ blocks behind. For our proofs we only need to guarantee that
a correct $s$ will eventually request $\disseminate$.

Following \cite{2011_cachin_et_al}, a protocol~$\P$ implements an
interface~$\IF$ and has properties~$\PR$, which are shown to hold for
$\P$. For any property, which holds for a protocol~$\P$ and where
the proof of the property relies on the reliable point-to-point
abstraction in Lemma~\ref{lem:point-to-point}, $\PR$ holds for
$\shim(\P)$. Again following \cite{2011_cachin_et_al}, these are the
properties of any algorithm that \emph{uses} the reliable
point-to-point link abstraction.

Taking together what we have established for $\gossip$ in
Section~\ref{sec:block-dags}, \ie that correct servers will eventually
share a joint \blockdag, and that $\interpret$ gives a point-to-point
link between them in Section~\ref{sec:interpret}, for $\shim(\P)$ the
following holds:

\begin{theorem} \label{th:meat} %
  For a correct server $s$ and a deterministic protocol~$\P$, if $\P$
  is an implementation of
  \begin{enumerate*}[label=\emph{(\roman*)}]
  \item \label{it:IF} an interface~$\IF$ with requests~$\rqsts_\P$ and
    indications~$\inds_\P$ using the reliable point-to-point link
    abstraction such that
  \item \label{it:PR} a property~$\PR$ holds,
  \end{enumerate*}
  then $\shim(\P)$ in Algorithm~\ref{alg:shim}
  implements~\ref{it:IF} $\IF$ such that \ref{it:PR}~$\PR$ holds.
\end{theorem}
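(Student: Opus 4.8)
The plan is to treat this as a layered-composition argument in the style of \cite{2011_cachin_et_al}: I would show that $\shim(\P)$ presents to the black-box $\P$ exactly the reliable point-to-point link abstraction against which $\P$ was specified, and that it faithfully relays requests from the user down to the simulated $\P$ and indications from $\P$ back up to the user. Once this is established, any execution of $\P$ running inside $\shim(\P)$ is indistinguishable---from $\P$'s vantage point---from a native execution of $\P$ over reliable point-to-point links. Since $\PR$ was proven for $\P$ appealing only to the guarantees of that abstraction, $\PR$ transfers verbatim to $\shim(\P)$.

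For the interface obligation \ref{it:IF}, I would trace each direction of the interface. For requests: a user call $\request(\ell, r)$ to $\shim(\P)$ deposits $(\ell, r)$ into $\requestBuffer$ (line~\ref{lin:shim:ins-r-buffer}); $\gossip$ eventually writes it into the disseminated block $\bB$ (Algorithm~\ref{alg:gossip}, line~\ref{lin:gossip:inject}), and since every disseminated valid block eventually enters $\G$ (Lemma~\ref{lem:eventually}), the request is eventually handed to $\pis[\ell]$ in line~\ref{lin:interpret:initbfr} of $\interpret$. For indications: when $\interpret$ indicates $i$ for $\ell$ on behalf of $s$ itself ($s = s'$), $\shim$ surfaces it to the user in lines~\ref{lin:shim:indicate}--\ref{lin:shim:indicate-up}. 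Together these establish that $\shim(\P)$ exposes precisely the $\request$/$\indicate$ interface $\IF$ with request set $\rqsts_\P$ and indication set $\inds_\P$.

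For the property obligation \ref{it:PR}, the core is Lemma~\ref{lem:point-to-point}: the sublayer on which the simulated instances of $\P$ run---$\interpret$ over the joint \blockdag maintained by $\gossip$---satisfies reliable delivery, no duplication, and authenticity, \ie it is a correct implementation of the reliable point-to-point link. Because $\P$ is deterministic and the interpretation is server-independent (Lemma~\ref{lem:joint-eq}), the distributed family of local simulations run by the correct servers is mutually consistent and collapses to a single well-defined run of $\P$ whose only network-level guarantees are those of the point-to-point link. Hence this run is a legitimate execution of $\P$ in its intended model, and any property $\PR$ whose proof for $\P$ appeals only to the point-to-point abstraction continues to hold for $\shim(\P)$.

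The main obstacle I anticipate is reconciling the liveness and ordering premises of $\P$'s original proof with the \emph{eventually} flavour of the \blockdag guarantees. Reliable delivery in Lemma~\ref{lem:point-to-point}~(\ref{al:rd}) is stated relative to some larger $\G' \geqslant \G$ and a possibly different correct server $s'$, so I would need to argue that $s$'s own monotonically growing DAG---fed by repeated $\disseminate$ calls (lines~\ref{lin:repeatedly}--\ref{lin:disseminate}), Assumption~\ref{ass:reliable-links}, and Lemma~\ref{th:joint-dag}---eventually realises every delivery that $\P$ relies on, and that the restriction to $s = s'$ in line~\ref{lin:shim:indicate} drops no indication $\P$ owes to $s$'s own user. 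Threading these \emph{eventually} quantifiers so that the simulated execution meets $\P$'s timing assumptions, while keeping the per-server indication restriction sound, is the delicate part.
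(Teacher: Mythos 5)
Your proposal is correct and takes essentially the same route as the paper: the paper discharges the interface obligation~\ref{it:IF} with exactly your request/indication trace (packaged as Lemma~\ref{lem:r-in-p} and Lemma~\ref{lem:i-in-shim-p}), and discharges~\ref{it:PR} by combining Lemma~\ref{lem:point-to-point} with the observation that each $B.\pis[\ell]$ is a black-box execution of $\P$, so a violation of $\PR$ in the simulation would contradict the assumption that $\PR$ holds for $\P$. The ``eventually''-threading you flag as delicate is not handled in the theorem's proof at all---the paper buries it inside Lemma~\ref{lem:point-to-point}~(\ref{al:rd}) via Lemma~\ref{th:joint-dag}, just as you anticipate.
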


\begin{proof}
  By \citeAppendix{Lemma~\ref{lem:r-in-p}}{Lemma~A.17} and
  \citeAppendix{Lemma~\ref{lem:i-in-shim-p}}{Lemma~A.18}, \ref{it:IF}
  $\shim(\P)$ implements the interface~$\IF$ of $\rqsts_\P$ and
  $\inds_\P$.
  For \ref{it:PR}, by assumption $\PR$ holds for $\P$ using a reliable
  point-to-point link abstraction.  By
  Lemma~\ref{lem:point-to-point} $s.\interpret(\G, \P)$ implements a
  reliable point-to-point link. As Algorithm~\ref{alg:interpret}
  treats $\P$ as a black-box every $B.\pis[\ell]$ holds an execution of
  $\P$. Assume this execution violates $\PR$. But then an execution of
  $\P$ violates $\PR$ which contradicts the assumption that $\PR$ holds
  for~$\P$.
\end{proof}

Our proof relies on a point-to-point link between two correct servers
and thus we can translate the argument of all safety and liveness
properties, for which their reasoning relies on the point-to-point
link abstraction, to our \blockdag framework.
Because we provide an abstraction, we cannot directly translate
implementation-level properties measuring performance such as latency
or throughput.
They rely on the concrete implementation. Also, as discussed in
Section~\ref{sec:interpret}, properties related to signatures do not
directly translate, because blocks---not messages---are
(batch-)signed.

While our work focuses on correctness, two recent works show that
DAG-based approaches for concrete protocols~$\P$ are efficient and
even optimal: DAG-Rider~\cite{2021_keidar_et_al} implements the
asynchronous Byzantine Atomic Broadcast abstraction and is shown to be
optimal with respect to resilience, amortized communication
complexity, and time.  Different to our work, DAG-Rider relies on
randomness, which is an extension in our setting. On the other hand
Narwhal and Tusk~\cite{2021_danezis_et_al} for BFT consensus do not
rely on randomness and report impressive---also empirically
evaluated---performance gains. Moreover, as argued in
\cite{2021_danezis_et_al}, our approach enjoys two further benefits
for implementations: load balancing, as we do not rely on a single
leader, and equal message size.
Finally, we note that in our setting the complexity measure of calls
to the reliable point-to-point link abstraction is slightly
misleading, because we are optimizing the messages transmitted by the
point-to-point link abstraction.


\paragraph{$\P \assign$ byzantine reliable broadcast} %
In the remainder of this section, we will sketch how a user may use
the \blockdag framework. Our example for $\P$ is \emph{byzantine
  reliable broadcast}~(BRB)---a protocol underlying recently-proposed
efficient payment
systems~\cite{2019_guerraoui_et_al,2020_baudet_et_al}. Given an
implementation of byzantine reliable broadcast after \cite[Module
3.12, p.\ 117]{2011_cachin_et_al}, \eg,
\citeAppendix{Algorithm~\ref{alg:brdcst} in the
  Appendix}{Algorithm~4}: this is the $\P$, which the user passes to
$\shim(\P)$, \ie in the \blockdag framework $\P$ is fixed to an
implementation of BRB, \eg,
\citeAppendix{Algorithm~\ref{alg:brdcst}}{Algorithm~4}. The request in
BRB is $\broadcast(v)$ for a value~$v \in \vals$, so
$\rqsts_\P = \{ \broadcast(v) \mid v \in \vals \}$.
For simplicity and generality, we assume that $\P$---not
$\shim(\P)$---authenticates requests, \ie requests are self-contained
and can be authenticated while simulating $\P$ (\eg,
\citeAppendix{Algorithm~\ref{alg:brdcst}
  line~\ref{lin:brdcst:auth}}{Algorithm~4, line~3}). However, in an
implementation $\shim(\P)$ may be employed to authenticate requests.
On the other hand, BRB indicates with
$\deliver(v)$, so $\inds_\P = \{ \deliver(v) \mid v \in \vals \}$. The
messages sent in BRB are
$\msgs_\P = \{ {\ECHO\ v}, {\READY\ v} \mid v \in \vals \}$ where $\sender$ and
$\receiver$ are the $s \in \srvrs$ running $\shim(\P)$.
When executing line~\ref{lin:interpret:in} of $\interpret(\G, \P)$ in
Algorithm~\ref{alg:interpret}, then $\receive(\ECHO\ 42)$ is
triggered, and \Received $\ECHO\ 42$ holds (\eg, \citeAppendix{in
  Algorithm~\ref{alg:brdcst} in
  line~\ref{lin:brdcst:r-echo}}{Algorithm~4, line~6}). As we assume
$\P$ returns messages immediately, \eg, when the simulation reaches
\Send $\ECHO\ 42$, then $\ECHO\ 42$ is returned immediately (\eg,
\citeAppendix{in line~\ref{lin:brdcst:s-echo} of
  Algorithm~\ref{alg:brdcst}}{Algorithm~4, line 8}).
The interface~$\IF$ is
  $ \rqsts =\{ \broadcast(v) \mid v \in \vals \}$ and
  $\inds = \{ \deliver(v) \mid v \in \vals \}$
  The properties~$\PR$ of BRB---validity, no
  duplication, integrity, consistency, and totality---are preserved.

\begin{figure*}[!t]
  \hspace{-9em}
  \resizebox{0.95\textwidth}{!}{
    \tikzstyle{ipe stylesheet} = [
  ipe import,
  even odd rule,
  line join=round,
  line cap=butt,
  ipe pen normal/.style={line width=0.4},
  ipe pen fat/.style={line width=1.2},
  ipe pen heavier/.style={line width=0.8},
  ipe pen ultrafat/.style={line width=2},
  ipe pen normal,
  ipe mark normal/.style={ipe mark scale=3},
  ipe mark large/.style={ipe mark scale=5},
  ipe mark small/.style={ipe mark scale=2},
  ipe mark tiny/.style={ipe mark scale=1.1},
  ipe mark normal,
  /pgf/arrow keys/.cd,
  ipe arrow normal/.style={scale=7},
  ipe arrow large/.style={scale=10},
  ipe arrow small/.style={scale=5},
  ipe arrow tiny/.style={scale=3},
  ipe arrow normal,
  /tikz/.cd,
  ipe arrows, 
  <->/.tip = ipe normal,
  ipe dash normal/.style={dash pattern=},
  ipe dash dotted/.style={dash pattern=on 1bp off 3bp},
  ipe dash dash dot dotted/.style={dash pattern=on 4bp off 2bp on 1bp off 2bp on 1bp off 2bp},
  ipe dash dash dotted/.style={dash pattern=on 4bp off 2bp on 1bp off 2bp},
  ipe dash dashed/.style={dash pattern=on 4bp off 4bp},
  ipe dash normal,
  ipe node/.append style={font=\normalsize},
  ipe stretch normal/.style={ipe node stretch=1},
  ipe stretch normal,
  ipe opacity 10/.style={opacity=0.1},
  ipe opacity 30/.style={opacity=0.3},
  ipe opacity 50/.style={opacity=0.5},
  ipe opacity 75/.style={opacity=0.75},
  ipe opacity opaque/.style={opacity=1},
  ipe opacity opaque,
]
\definecolor{red}{rgb}{1,0,0}
\definecolor{blue}{rgb}{0,0,1}
\definecolor{brown}{rgb}{0.647,0.165,0.165}
\definecolor{darkblue}{rgb}{0,0,0.545}
\definecolor{darkcyan}{rgb}{0,0.545,0.545}
\definecolor{darkgray}{rgb}{0.663,0.663,0.663}
\definecolor{darkgreen}{rgb}{0,0.392,0}
\definecolor{darkmagenta}{rgb}{0.545,0,0.545}
\definecolor{darkorange}{rgb}{1,0.549,0}
\definecolor{darkred}{rgb}{0.545,0,0}
\definecolor{gold}{rgb}{1,0.843,0}
\definecolor{gray}{rgb}{0.745,0.745,0.745}
\definecolor{green}{rgb}{0,1,0}
\definecolor{lightblue}{rgb}{0.678,0.847,0.902}
\definecolor{lightcyan}{rgb}{0.878,1,1}
\definecolor{lightgray}{rgb}{0.827,0.827,0.827}
\definecolor{lightgreen}{rgb}{0.565,0.933,0.565}
\definecolor{lightyellow}{rgb}{1,1,0.878}
\definecolor{navy}{rgb}{0,0,0.502}
\definecolor{orange}{rgb}{1,0.647,0}
\definecolor{pink}{rgb}{1,0.753,0.796}
\definecolor{purple}{rgb}{0.627,0.125,0.941}
\definecolor{seagreen}{rgb}{0.18,0.545,0.341}
\definecolor{turquoise}{rgb}{0.251,0.878,0.816}
\definecolor{violet}{rgb}{0.933,0.51,0.933}
\definecolor{yellow}{rgb}{1,1,0}
\definecolor{black}{rgb}{0,0,0}
\definecolor{white}{rgb}{1,1,1}
\begin{tikzpicture}[ipe stylesheet]
  \node[ipe node]
     at (32, 752) {$s_1$};
  \node[ipe node]
     at (32, 680) {$s_2$};
  \draw[{<[ipe arrow tiny]}-]
    (224, 764)
     arc[start angle=36.8699, end angle=143.1301, radius=10];
  \draw
    (48, 768) rectangle (208, 720);
  \node[ipe node]
     at (52, 756) {$B_1$};
  \node[ipe node]
     at (52, 740) {$\ghost{%
\inm = \varnothing}$};
  \node[ipe node]
     at (52, 728) {$\ghost{%
\outm = \ECHO\ 42\ \To\ \{s_1, s_2, s_3, s_4 \}}$};
  \draw[{<[ipe arrow tiny]}-]
    (424, 760)
     arc[start angle=50.1944, end angle=129.8056, radius=31.241];
  \draw[shift={(224, 768)}, xscale=1.1429]
    (0, 0) rectangle (140, -48);
  \node[ipe node]
     at (228, 756) {$B_2$};
  \draw[shift={(224, 704)}, xscale=1.1429]
    (0, 0) rectangle (140, -48);
  \node[ipe node]
     at (228, 692) {$B_{3}$};
  \node[ipe node]
     at (32, 616) {$s_3$};
  \draw[shift={(224, 640)}, xscale=1.1429]
    (0, 0) rectangle (140, -48);
  \node[ipe node]
     at (228, 628) {$B_4
$};
  \node[ipe node]
     at (32, 552) {$s_4$};
  \draw[shift={(224, 576)}, xscale=1.1429]
    (0, 0) rectangle (140, -48);
  \node[ipe node]
     at (228, 564) {$B_5$};
  \draw[shift={(424, 768)}, xscale=1.1429]
    (0, 0) rectangle (140, -48);
  \node[ipe node]
     at (428, 756) {$B_6$};
  \node[ipe node]
     at (428, 740) {$\ghost{%
\inm = \ECHO\ 42\ \From\ \{s_1, s_2, s_3 \}}$};
  \node[ipe node]
     at (428, 728) {$\ghost{%
\outm = \READY\ 42\ \To\ \{s_1, s_2, s_3, s_4 \}}$};
  \draw[shift={(424, 704)}, xscale=1.1429]
    (0, 0) rectangle (140, -48);
  \node[ipe node]
     at (428, 692) {$B_{7}$};
  \draw[shift={(424, 640)}, xscale=1.1429]
    (0, 0) rectangle (140, -48);
  \node[ipe node]
     at (428, 628) {$B_8$};
  \draw[-{>[ipe arrow tiny]}]
    (208, 720)
     -- (224, 640);
  \draw[-{>[ipe arrow tiny]}]
    (208, 720)
     -- (224, 704);
  \draw[-{>[ipe arrow tiny]}]
    (208, 720)
     -- (224, 576);
  \draw[-{>[ipe arrow tiny]}]
    (384, 692)
     -- (424, 740);
  \draw[-{>[ipe arrow tiny]}]
    (380, 640)
     -- (424, 728);
  \draw[{<[ipe arrow tiny]}-]
    (424, 680)
     arc[start angle=50.1944, end angle=129.8056, radius=31.241];
  \draw[{<[ipe arrow tiny]}-]
    (424, 620)
     arc[start angle=50.1944, end angle=129.8056, radius=31.241];
  \draw[-{>[ipe arrow tiny]}]
    (384, 732)
     -- (424, 696);
  \draw[-{>[ipe arrow tiny]}]
    (384, 628)
     -- (424, 668);
  \draw[-{>[ipe arrow tiny]}]
    (384, 552)
     -- (424, 604);
  \draw[-{>[ipe arrow tiny]}]
    (384, 672)
     -- (424, 636);
  \draw[shift={(48, 704)}, xscale=1.1429]
    (0, 0) rectangle (140, -48);
  \draw[shift={(48, 640)}, xscale=1.1429]
    (0, 0) rectangle (140, -48);
  \draw[shift={(48, 576)}, xscale=1.1429]
    (0, 0) rectangle (140, -48);
  \draw[{<[ipe arrow tiny]}-]
    (224, 560)
     arc[start angle=36.8699, end angle=143.1301, radius=10];
  \draw[{<[ipe arrow tiny]}-]
    (224, 620)
     arc[start angle=36.8699, end angle=143.1301, radius=10];
  \draw[{<[ipe arrow tiny]}-]
    (224, 680)
     arc[start angle=36.8699, end angle=143.1301, radius=10];
  \node[ipe node]
     at (112, 776) {$k_1$};
  \node[ipe node]
     at (288, 776) {$k_2$};
  \node[ipe node]
     at (504, 776) {$k_3$};
  \node[ipe node]
     at (228, 740) {$\ghost{%
\inm = \ECHO\ 42\ \From\ \{s_1 \}}$};
  \node[ipe node]
     at (228, 728) {$\ghost{%
\outm = \ECHO\ 42\ \To\ \{s_1, s_2, s_3, s_4 \}}$};
  \node[ipe node]
     at (228, 676) {$\ghost{%
\inm = \ECHO\ 42\ \From\ \{s_1 \}}$};
  \node[ipe node]
     at (228, 664) {$\ghost{%
\outm = \ECHO\ 42\ \To\ \{s_1, s_2, s_3, s_4 \}}$};
  \node[ipe node]
     at (228, 608) {$\ghost{%
\inm = \ECHO\ 42\ \From\ \{s_1 \}}$};
  \node[ipe node]
     at (228, 596) {$\ghost{%
\outm = \ECHO\ 42\ \To\ \{s_1, s_2, s_3, s_4 \}}$};
  \node[ipe node]
     at (228, 544) {$\ghost{%
\inm = \ECHO\ 42\ \From\ \{s_1 \}}$};
  \node[ipe node]
     at (228, 532) {$\ghost{%
\outm = \ECHO\ 42\ \To\ \{s_1, s_2, s_3, s_4 \}}$};
  \node[ipe node]
     at (428, 676) {$\ghost{%
\inm = \ECHO\ 42\ \From\ \{s_1, s_2, s_3 \}}$};
  \node[ipe node]
     at (428, 664) {$\ghost{%
\outm = \READY\ 42\ \To\ \{s_1, s_2, s_3, s_4 \}}$};
  \node[ipe node]
     at (428, 612) {$\ghost{%
\inm = \ECHO\ 42\ \From\ \{s_1, s_2, s_3 \}}$};
  \node[ipe node]
     at (428, 600) {$\ghost{%
\outm = \READY\ 42\ \To\ \{s_1, s_2, s_3, s_4 \}}$};
\end{tikzpicture}
  }
  \caption{The message buffers for process instance $\ell_1$ of a
    \blockdag with $(\ell_1, \broadcast(42)) \in B_1.\rs$}
  \label{fig:block-dag-brdcst}
\end{figure*}

Figure~\ref{fig:block-dag-brdcst} shows a \blockdag for an execution
of $\shim(P)$ using byzantine reliable broadcast. It further
explicitly shows the $\inm$- and $\outm$-going messages from
$\bfr[\In, \ell_1]$ and $\bfr[\Out, \ell_1]$ for a protocol instance
$\ell_1$ and the request $\broadcast(42)$ at block~$B_1$. None of
these messages are ever actually sent over the network---every server
interpreting this \blockdag can use $\interpret$ in
Algorithm~\ref{alg:interpret} to replay an implementation of BRB, \eg
\citeAppendix{Algorithm~\ref{alg:brdcst}}{Algorithm~4}, and get the
same picture.
Figure~\ref{fig:block-dag-brdcst} shows only the (unsent) messages for
$\ell_1$ and $\broadcast(42))$ in $B_1.\rs$, but $B_1.\rs$ may hold
more requests such as $\broadcast(21)$ for $\ell_2$, and all the
messages of all these requests could be materialized in the same
manner---without any messages, or even additional blocks, sent.
And not only $B_1$ holds such requests---also $B_3$ does. For example,
$B_3.\rs$ may contain $\broadcast(25)$ for $\ell_3$. Then, for
$\ell_3$ on $B_3$ materializes $\outm = \ECHO\ 25$ to $s_1$, $s_2$,
$s_3$, and again, without sending any messages, for $\ell_3$ on $B_6$,
$B_7$, and $B_8$ materializes $\inm = \ECHO\ 25$ from $s_2$.
This is, of course, the same for every $B_i$.

To recap, what makes interpreting $\P$ on a \blockdag so attractive:
sending blocks instead of messages in a deterministic $\P$ results in
a compression of messages---up to their omission. And not only do
these messages not have to be sent, they also do not have to be
signed. It suffices, that every server signs their blocks. Finally, a
single block sent is interpreted as messages for a very large
number of parallel protocol instances.


\section{Related Work} \label{sec:related-work}


The last years have seen many proposals based on \blockdag paradigms
(see~\cite{2020_wang_et_al} for an SoK)---some with commercial
implementations. We focus on the proposals closest to our work:
\hashgraph~\cite{2016_baird}, \blockmania~\cite{2018_danezis_et_al},
\aleph~\cite{2019_gagol_et_al}, and \flare
\cite{2019_rowan_et_al}. Underlying all of these systems is the same
idea: first, build a common \blockdag, and then locally interpret the
blocks and graph structure as communication for some protocol: %
\hashgraph encodes a consensus protocol in \blockdag structure, %
\blockmania~\cite{2018_danezis_et_al} encodes a simplified version of
PBFT~\cite{1999_castro_et_ala},
\aleph~\cite{2019_gagol_et_al} employs atomic broadcast and consensus, and
\flare \cite{2019_rowan_et_al} builds on federated byzantine agreement
from \stellar~\cite{2015_mazieres} combined with \blockdags to
implement a federated voting protocol.
Naturally, the correctness arguments of these systems focus on their
system, \eg, the correctness proof in \coq of byzantine consensus in
\hashgraph~\cite{2018_crary}. In our work, we aim for a different
level of generality: we establish structure underlying protocols which
employ \blockdags, \ie, we show that a \blockdag implements a reliable
point-to-point channel~(Section~\ref{sec:interpret}). To that end, and
opposed to previous approaches, we treat the protocol $\P$ completely as a
black-box, \ie, our framework is parametric in the protocol~$\P$.


The idea to leverage deterministic state machines to replay the
behavior of other servers goes back to
\peerreview~\cite{2007_haeberlen_et_al}, where servers exchange logs
of received messages for auditing to eventually detect and expose
faulty behavior.
This idea was taken up by \blockdag approaches---but with the twist to
leverage determinism to \emph{not} send those messages that can be
determined. This allows compressing messages to the extent of only
indicating that a message has been sent as we do in
Section~\ref{sec:interpret}.
However, we believe nothing precludes our proposed framework to be
adapted to hold equivocating servers accountable, drawing \eg, on
recent work from \polygraph to detect byzantine behavior
\cite{2020_civit_et_al}.


While our framework treats the interpreted protocol~$\P$ as a
black-box, the recently proposed threshold logical clock
abstraction~\cite{2019_ford} allows the higher-level protocol to
operate on an asynchronous network as if it were a synchronous network
by abstracting communication of groups. Similar to our framework, also
threshold clocks rely on causal relations between messages by including
a threshold number of messages for the next time step. This would
roughly correspond to including a threshold number of predecessor
blocks. In contrast, our framework, by only providing the abstraction
of a reliable point-to-point link to $\P$, pushes reasoning about
messages to $\P$.


\section{Extensions, Limitations \& Conclusion} \label{sec:conclusion}

We have presented a generic formalization of a \blockdag and its properties, and
in particular results relating to the eventual delivery of all blocks from correct servers to other
correct servers. We then leverage this property to provide a concrete implementation of
a reliable point-to-point channel, which can be used to implement any deterministic protocol~$\P$ efficiently.
In particular we have efficient message compression, as those messages emitted by $\P$, which are the results of the deterministic
execution of $\P$ may be omitted. Moreover we are allowing for batching of the
execution of multiple parallel instances of $\P$ using the same \blockdag, and the de-coupling of
maintaining the joint \blockdag from its interpretation as instances of $\P$.

\paragraph{Extensions} First, throughout our work we assume $\P$ is deterministic.
The protocol may accept user requests, and emit deterministic messages based on these events and
other messages. However, it may not use any randomness in its logic.
It seems we can extend the proposed composition to
non-deterministic protocols $\P$---but some care needs to be applied around the security properties
assumed from randomness. In case randomness is merely at the discretion of a server running their
instance of the protocol we can apply techniques to de-randomize the protocol by relying on the server
including in their created block any coin flips used. In case randomness has to be unbiased, as is
the case for asynchronous Byzantine consensus protocols, a joint shared randomness protocol needs to be
embedded and used to de-randomize the protocol. Luckily, shared coin protocols that are secure under
BFT assumptions and in the synchronous network setting exist~\cite{2020_kokoriskogias_et_al} and our composition could be used
to embed them into the \blockdag. However we leave the details of a generic embedding for non-deterministic
protocols for future work.

Second, we have discussed the case of embedding asynchronous protocols into a \blockdag.
We could extend this result to BFT protocols in the partial synchronous
network setting~\cite{1988_dwork_et_al} by showing that the \blockdag interpretation not only creates
a reliable point-to-point channel but also that its delivery delay is bounded if the underlying
network is partially synchronous. We have a proof sketch to this effect, but a complete proof would
require to introduce machinery to reason about timing and, we believe, would not enhance the presentation of the core arguments behind our abstraction.

Third, our correctness conditions on the \blockdag seem to be much more strict than necessary.
For example, block validity requires a server to have processed all previous blocks. In practice
this results in blocks that must include at some position $k$ all predecessors of blocks to be
included after position $k$. This leads to inefficiencies: a server must include references to
all blocks by other parties into their own blocks, which represents an $O(n^2)$ overhead (admittedly
with a small constant, since a cryptographic hash is sufficient). Instead, block inclusion could
be more implicit: when a server $s$ includes a block $B'$ in its block $B$ all predecessors of
$B'$ could be implicitly included in the block $B$, transitively or up to a certain depth. This would
reduce the communication overhead even further. Since it is possible to take a \blockdag with this weaker
validity condition and unambiguously extract a \blockdag with the stronger validity condition
we assume, we foresee no issues for all our theorems to hold.
Furthermore, when interpreting a protocol currently a server only
indicates, when the server running the interpretation indicates in the
interpretation. This is to assure that the server running the
interpretation can trust the server in the interpretation, \ie
itself. Again, we believe that this can be weakened by leveraging
properties of the interpreted protocol.
However, we again leave a full exploration
of this space to future work. %

\paragraph{ Limitations} Some limitations of our composition require
much more foundational work to be overcome. And these limitations also
apply to the \blockdag based protocols which we attempt to formalize.
First, there are practical challenges when embedding protocols tolerating processes that can
crash and recover. At first glance safe protocols in the crash recovery setting seem like a great match for
the \blockdag approach: they do allow parties that recover to re-synchronize the \blockdag, and continue
execution, assuming that they persist enough information (usually in a local log) as part of $\P$. However
there are challenges: first, our \blockdag assumes that blocks issued have consecutive numbers. If the
higher-level protocols use these block sequence numbers as labels for state machines (as in \blockmania),
a recovering process may have to `fill-in' a large number of blocks before catching up with others.
An alternative is for block sequence numbers to not have to be consecutive, but merely increasing,
which would remove this issue.

However in all cases, unless there is a mechanism for the higher level protocol $\P$ to signal
that some information will never again be needed, the full \blockdag has to be stored by all
correct parties forever. This seems to be a limitation of both our abstraction of \blockdag but
also the traditional abstraction of reliable point-to-point channels and the protocols using them,
that seem to not require protocols to ever signal that a message is not needed any more (to stop
re-transmission attempt to crashed or Byzantine servers). Fixing this issue, and proving that protocols
can be embedded into a \blockdag, that can be operated and interpreted using a bounded amount of
memory to avoid exhaustion attacks is a challenging and worthy future avenue for work -- and is
likely to require a re-thinking of how we specify BFT protocols in general to ensure this property,
beyond their embedding into a \blockdag.

Finally, one of the advantages of using a \blockdag is the ability to separate the operation
and maintenance of the \blockdag from the later or off-line interpretation of instances of
protocol $\P$. However, this separation does not hold and extend to operations that change
the membership of the server set that maintain the \blockdag---often referred to as reconfiguration.
How to best support reconfiguration of servers in \blockdag protocols seems to be an open issue,
besides splitting protocol instances in pre-defined epochs.


\begin{acks}
  This work has been partially supported the UK EPSRC Project
  EP/R006865/1, Interface Reasoning for Interacting Systems (IRIS).
\end{acks}

\bibliographystyle{ACM-Reference-Format}
\bibliography{references}

\appendix
\section{Appendix}

\subsection{Ad Section~\ref{sec:background}: Background}

\begin{definition} \label{def:crypto-hash} %
  Let $\hsh: A \to A'$ be a secure cryptographic hash function. We
  write $\hsh(x)$ for the hash of $x \in A$, and we write $\hsh(A)$
  for $A'$.
  By definition \cite[p.332]{1996_menezes_et_al}, for any~$\hsh$ it is
  computationally infeasible
  \begin{enumerate}
  \item \label{it:preimage-resistance} %
    to find any preimage $m$ such that $\hsh(m) = x$ when given any
    $x$ for which a corresponding input is not known
    (\emph{preimage-resistance}),
  \item \label{it:snd-preimage-resistance} %
    given $m$ to find a 2nd-preimage $m' \neq m$ such that
    $\hsh(m) = \hsh(m')$ (\emph{2nd-preimage resistance}), and
  \item \label{it:collision-resistance} %
    to find any two distinct inputs $m$, $m'$ such that
    $\hsh(m) =\hsh(m')$ (\emph{collision resistance}).  %
  \end{enumerate}
\end{definition}
\begin{proof}[Proof of Lemma~\ref{lem:ins:properties}~(\ref{lem:ins-idempotent})]
  By definition of $\G$ and $\ins$.
\end{proof}

\begin{proof}[Proof of Lemma~\ref{lem:ins:properties}~(\ref{lem:g:subseteq})]
  Let $\G' = \ins(\G, v, E)$.  By definition of $\ins$,
  $\vtcs_{\G} \subseteq \vtcs_{\G'}$. Assume $v \not\in \G$. As $E$
  contains only edges such that $(v_i, v)$ where $v \not\in \G$,
  $\edgs_{\G} = {\edgs_{\G'} \cap (\vtcs_{\G} \times \vtcs_{\G})}$
  holds.
\end{proof}

\begin{proof}[Proof of Lemma~\ref{lem:ins:properties}~(\ref{lem:g:acyclic})]
  By definition of $E$, $\ins(\G, v, E)$ only adds edges from vertices
  in $\G_1$ to $v$. As $v \not\in \G$, there is no edge $(v, v_j)$ in
  $\G$. By acyclicty of $\G$, $\ins(\G, v, E)$ is acyclic.
\end{proof}

\subsection{Ad Section~\ref{sec:block-dags}: Building a Block DAG}

In this section we give the proofs---and lemmas those proofs rely
on---which we omitted in Section~\ref{sec:block-dags}.
All proofs refer to Algorithm~\ref{alg:gossip}. For the execution we
assume, that the body of each handler is executed atomically and
sequentially within the handler.


\begin{proof}[Proof of Lemma~\ref{lem:mutual-excl-refs}.] %
  Let $x_1 = \refB(B_1)$ and $x_2 = \refB(B_2)$. By assumption,
  $x_1 \in B_2.\preds$. Assume towards a contradiction that
  $x_2 \in B_1.\preds$. Then, to compute $x_1$ we need to know
  $x_2 = \refB(B_2)$. But this contradicts preimage-resistance of
  $\refB$.
\end{proof}

\begin{lemma} \label{lem:ins-bdag-idempotent} %
  For a \blockdag $\G$ and a block $B \in \G$ holds $\G = \G.\ins(B)$,
  \ie $\ins$ is idempotent.
\end{lemma}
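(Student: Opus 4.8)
The plan is to reduce the statement to the already-established idempotence of the low-level $\ins$ operation in Lemma~\ref{lem:ins:properties}~(\ref{lem:ins-idempotent}). Recall that $\G.\ins(B)$ unfolds, by Definition~\ref{def:block-dag}, to $\ins(\G, B, E_B)$ where $E_B = \{(B', B) \mid B' \in B.\preds\}$. Lemma~\ref{lem:ins:properties}~(\ref{lem:ins-idempotent}) states that $\ins(\G, B, E_B) = \G$ provided two hypotheses hold: $B \in \G$ and $E_B \subseteq \edgs_\G$. The first is exactly the assumption of the present lemma, so the only real work is to verify the edge-containment condition.

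First I would show $E_B \subseteq \edgs_\G$. Since $B \in \vtcs_\G$ and $\G$ is a \blockdag, Definition~\ref{def:block-dag}~(\ref{bdag:correct-edges}) applies to $B$: instantiating its outer vertex with $B$, we obtain that for every $B' \in B.\preds$ both $B' \in \vtcs_\G$ and $(B', B) \in \edgs_\G$ hold. Hence every edge of $E_B$ already lies in $\edgs_\G$, which is precisely $E_B \subseteq \edgs_\G$.

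With both hypotheses in hand, applying Lemma~\ref{lem:ins:properties}~(\ref{lem:ins-idempotent}) with $v := B$ and $E := E_B$ yields $\ins(\G, B, E_B) = \G$, i.e.\ $\G.\ins(B) = \G$, completing the proof.

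There is no serious obstacle here: the lemma is essentially a bookkeeping step lifting the idempotence of $\ins$ from Definition~\ref{def:insert} to the \blockdag-level operation $\G.\ins(\cdot)$. The one point worth being careful about is that the required edge-containment $E_B \subseteq \edgs_\G$ does \emph{not} follow from the definition of $\ins$ alone---it depends crucially on the defining invariant of a \blockdag in Definition~\ref{def:block-dag}~(\ref{bdag:correct-edges}), which guarantees that a \blockdag already contains all predecessor edges of each of its vertices. This is exactly the reason the earlier Lemma~\ref{lem:ins:properties}~(\ref{lem:ins-idempotent}) carried the side condition $E \subseteq \edgs_\G$ rather than asserting unconditional idempotence of $\ins$.
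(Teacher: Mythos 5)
Your proof is correct and takes essentially the same route as the paper's own: both reduce the claim to Lemma~\ref{lem:ins:properties}~(\ref{lem:ins-idempotent}), using Definition~\ref{def:block-dag}~(\ref{bdag:correct-edges}) to verify that the edge set $\{(B', B) \mid B' \in B.\preds\}$ is already contained in $\edgs_\G$. Your write-up is in fact slightly more explicit than the paper's about which clause of the \blockdag definition supplies the edge-containment hypothesis, and your closing remark correctly identifies why that hypothesis cannot be dropped.
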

\begin{proof}
  By definition of $\ins$ on \blockdags $E$ is fixed to
  $\{ (B, B') \mid B \in B'.\preds \}$.
  Since $B \in \G$ also
  $\{ (B, B') \mid B \in B'.\preds \} \subseteq \edgs_\G$ by
  definition of \blockdag. Thus, $\G.\ins(B) = \G$ by
  Lemma~\ref{lem:ins:properties}~(\ref{lem:ins-idempotent}).
\end{proof}

\begin{lemma} \label{lem:insert-preserves-block-dag} %
  Let $\G$ be a \blockdag for a server $s$ and let $B'$ be a block
  such that $\valid(s, B')$ holds and for all $B \in B'.\preds$ holds
  $B \in \G$.
  Let $\G' = \G.\ins(B')$. Then $\G'$ is a \blockdag for $s$.
\end{lemma}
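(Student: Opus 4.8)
The plan is to case-split on whether $B'$ is already a vertex of $\G$, and then, in the genuinely new case, check the three requirements of Definition~\ref{def:block-dag} one at a time, delegating the real content to the properties of $\ins$ already established in Lemma~\ref{lem:ins:properties} and Lemma~\ref{lem:ins-bdag-idempotent}. First I would dispose of the case $B' \in \G$: since $\G$ is a \blockdag for $s$, condition~\ref{bdag:correct-edges} guarantees $(B, B') \in \edgs_\G$ for every $B \in B'.\preds$, so the inserted edge set $\{(B, B') \mid B \in B'.\preds\}$ is already contained in $\edgs_\G$, and Lemma~\ref{lem:ins-bdag-idempotent} yields $\G' = \G$, which is a \blockdag by hypothesis.

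The main case is $B' \not\in \G$, where $\G' = (\vtcs_\G \cup \{B'\}, \edgs_\G \cup E)$ with $E = \{(B, B') \mid B \in B'.\preds\}$. For acyclicity I would invoke Lemma~\ref{lem:ins:properties}~(\ref{lem:g:acyclic}) directly: $\G$ is acyclic (being a \blockdag) and $B' \not\in \G$, so $\G'$ is acyclic. For condition~\ref{bdag:all-valid} the new vertex set is $\vtcs_\G \cup \{B'\}$; every $B \in \vtcs_\G$ is valid because $\G$ is a \blockdag, and $\valid(s, B')$ holds by hypothesis, so all vertices of $\G'$ are valid. For condition~\ref{bdag:correct-edges} I would take an arbitrary vertex $C$ of $\G'$ and a predecessor $B \in C.\preds$ and show $B \in \vtcs_{\G'}$ and $(B, C) \in \edgs_{\G'}$: if $C \in \vtcs_\G$ this follows from condition~\ref{bdag:correct-edges} for $\G$ together with the inclusions $\vtcs_\G \subseteq \vtcs_{\G'}$ and $\edgs_\G \subseteq \edgs_{\G'}$; if $C = B'$, then $B \in B'.\preds$ gives $B \in \G \subseteq \vtcs_{\G'}$ by the predecessor hypothesis, and $(B, B') \in E \subseteq \edgs_{\G'}$ by the definition of $\ins$.

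This argument is essentially bookkeeping, and I do not expect any step to be hard once the auxiliary lemmas are in place. The only place that actually consumes the stated hypotheses is the sub-case $C = B'$ of the edge condition, where I need both $\valid(s, B')$ and the assumption that every predecessor of $B'$ already lies in $\G$, so that the freshly added edges connect existing vertices. The one point deserving care is purely notational: Definition~\ref{def:block-dag} writes its generic child as $B'$, clashing with the inserted block; I would therefore introduce a fresh name $C$ for the generic child when checking condition~\ref{bdag:correct-edges} to keep the case analysis unambiguous.
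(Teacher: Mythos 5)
Your proposal is correct and follows essentially the same route as the paper's proof: both verify the two conditions of Definition~\ref{def:block-dag} directly (validity of all vertices from the hypothesis plus the \blockdag property of $\G$, and the edge condition from the hypothesis on $B'.\preds$ together with the definition of $\ins$), and both dispatch acyclicity via Lemma~\ref{lem:ins-bdag-idempotent} when $B' \in \G$ and Lemma~\ref{lem:ins:properties}~(\ref{lem:g:acyclic}) when $B' \not\in \G$. The only difference is organizational---you case-split on $B' \in \G$ globally at the outset, whereas the paper checks conditions~(\ref{bdag:all-valid}) and~(\ref{bdag:correct-edges}) uniformly and reserves the case split for acyclicity alone---which does not change the substance of the argument.
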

\begin{proof}
  To show $\G'$ is a \blockdag we need to show that $\G'$ adheres to
  Definition~\ref{def:block-dag}.
  For condition~\ref{bdag:all-valid} we have to show that $s$
  considers all blocks in $\G'$ valid. By definition of $\ins$ holds
  $\vtcs_{\G'} = {\vtcs_{\G} \cup \{ B' \}}$. As $\G$ is a \blockdag
  for $s$, $\valid(s, B)$ holds for all $B \in \vtcs_\G$ and
  $\valid(s, B')$ follows from the assumption of the lemma.
  For condition~\ref{bdag:correct-edges} we have to show that for
  every backwards reference to $B$ from the block $B'$, the
  \blockdag~$\G'$ contains $B$ and an edge from $B$ to $B'$. The
  former---for all $B \in B'.\preds$ we have $B \in \G$---holds by
  assumption of the lemma. The latter---$(B, B') \in \edgs_{\G'}$ for
  $B \in B'.\preds$--- holds by definition of $\ins$. As $\G$ is a
  \blockdag, condition~\ref{bdag:correct-edges} holds for every block
  in $\G$.
  It remains to show, that $\G'$ is acyclic. If $B' \in \G$ then by
  Lemma~\ref{lem:ins-bdag-idempotent}, $\G' = \G$ and $\G$ is
  acyclic. If $B' \not\in \G$ then $\G'$ is acyclic by
  Lemma~\ref{lem:ins:properties}~(\ref{lem:g:acyclic}).
\end{proof}

\begin{lemma} \label{lem:valid-bB} %
  For every correct server~$s$ executing $\gossip$ of
  Algorithm~\ref{alg:gossip}, whenever the execution reaches
  line~\ref{lin:gossip:insB} then $\valid(s, \bB)$ holds.
\end{lemma}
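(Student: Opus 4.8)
The plan is to verify directly that $\bB$ satisfies all three clauses of Definition~\ref{def:valid} at the moment control reaches line~\ref{lin:gossip:insB}, by tracing how the fields of $\bB$ are assigned in Algorithm~\ref{alg:gossip}. Since $\bB$ is rebuilt once per dissemination---line~\ref{lin:gossip:next-b} re-initializes it with an incremented $\seqn$---I would frame this as an induction on the number of times the $\disseminate()$ handler of line~\ref{lin:gossip:disseminate} has fired, equivalently on $\bB.\seqn$. The induction is needed only for the predecessor clause, and its purpose is to certify the \emph{parent} reference without appealing to the (possibly co-proved) invariant that $\G$ is a \blockdag.

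For clause~\ref{valid:sign-verified}, I would observe that $\bB.\nid = s$ throughout the execution: it is set so by the initial assignment in line~\ref{lin:gossip:bb} and is preserved by the re-initialization in line~\ref{lin:gossip:next-b}. Immediately before line~\ref{lin:gossip:insB}, line~\ref{lin:gossip:sign} assigns $\bB.\signature \assign \signs(s, \bB)$; as $\refB(\bB)$ does not depend on $\signature$ (Definition~\ref{def:block}), the reference is well defined at signing time, so $\signs(\bB.\nid, \bB) = \bB.\signature$, and because $s$ is correct it computes this honestly, whence $\validsign(\bB.\nid, \bB.\signature)$ holds. For clause~\ref{valid:seqn}, I would split on $\bB.\seqn$: if $\bB.\seqn = 0$ then $\bB$ is a genesis block and subcase~\ref{valid:genesis} applies; otherwise $\bB$ was produced by line~\ref{lin:gossip:next-b} of the previous dissemination with $\preds$ initialized to $[\,\refB(\bB_{\mathsf{prev}})\,]$, where $\bB_{\mathsf{prev}}$ is the block $s$ last disseminated, satisfying $\bB_{\mathsf{prev}}.\nid = s = \bB.\nid$ and $\bB_{\mathsf{prev}}.\seqn = \bB.\seqn - 1$. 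Hence $\bB_{\mathsf{prev}}$ is a parent of $\bB$, and subcase~\ref{valid:parent} then requires arguing that it is the \emph{only} one (deferred to the last paragraph).

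For clause~\ref{valid:preds}, I would account for every reference in $\bB.\preds$ by its origin. References are appended only in line~\ref{lin:gossip:ins-pred}, each guarded by the validity test of line~\ref{lin:gossip:valid}, so that $\valid(s, B')$ holds for the appended $B'$; and the single parent reference is inserted by line~\ref{lin:gossip:next-b}, for which the induction hypothesis (this lemma applied to the previous dissemination) yields $\valid(s, \bB_{\mathsf{prev}})$. Using that a correct server's notion of validity is stable over time, every $B' \in \bB.\preds$ is therefore considered valid by $s$, establishing clause~\ref{valid:preds}. Alternatively, since each such $B'$ was also inserted into $\G$ (lines~\ref{lin:gossip:ins-b-g} and~\ref{lin:gossip:insB}), clause~\ref{bdag:all-valid} of Definition~\ref{def:block-dag} gives the same conclusion.

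The hard part will be the uniqueness of the parent in subcase~\ref{valid:parent}: I must rule out a second block $B' \in \bB.\preds$ with $B'.\nid = s$ and $B'.\seqn = \bB.\seqn - 1$. The argument I would give is that a correct $s$ signs (line~\ref{lin:gossip:sign}) exactly one block per sequence number, incrementing $\seqn$ on each dissemination; and by the validity of references already established for clause~\ref{valid:preds}, every referenced $B'$ is valid, hence by clause~\ref{valid:sign-verified} and unforgeability of signatures must have been built and signed by $B'.\nid$. Thus any referenced block with $\nid = s$ is one of $s$'s own blocks, and the unique one with sequence number $\bB.\seqn - 1$ is $\bB_{\mathsf{prev}}$ itself---a copy of $\bB_{\mathsf{prev}}$ received back over the network merely duplicates this parent rather than adding a new one. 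Exactly one parent therefore remains, completing subcase~\ref{valid:parent} and the proof.
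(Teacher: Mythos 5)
Your proof is correct and takes essentially the same approach as the paper's: a clause-by-clause verification of Definition~\ref{def:valid}, with clause~\ref{valid:sign-verified} discharged directly from $s$'s correctness at line~\ref{lin:gossip:sign}, and an induction over disseminations establishing the genesis/parent dichotomy for clause~\ref{valid:seqn} and the validity of every block in $\bB.\preds$ for clause~\ref{valid:preds} (references appended at line~\ref{lin:gossip:ins-pred} are guarded by line~\ref{lin:gossip:valid}, and the parent inserted at line~\ref{lin:gossip:next-b} is covered by the induction hypothesis). The only difference is your closing paragraph proving \emph{uniqueness} of the parent via signature unforgeability and the fact that a correct server signs exactly one block per sequence number---a point the paper's proof leaves implicit when it asserts $\bB_{n+1}.\parent = \bB_{n}$ ``by definition''---so your version is a slight tightening rather than a divergence.
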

\begin{proof}
  We need to show, that once the execution reaches
  line~\ref{lin:gossip:insB}
  Definition~\ref{def:valid}~\ref{valid:sign-verified}--\ref{valid:preds}
  holds. %
  As $s$ is correct and signs $\bB$ in line~\ref{lin:gossip:sign}
  \ref{valid:sign-verified}~$\validsign(s, \bB.\signature)$ holds.
  We prove~\ref{valid:seqn} and \ref{valid:preds} by induction on the
  times~$n$ the execution reaches line~\ref{lin:gossip:insB}. For the
  base case, $\bB$ is \ref{valid:genesis} a genesis block with
  $\bB.\seqn = 0$ as initialized in line~\ref{lin:gossip:bb}. Moreover
  $\bB$ has no parent.  As $s$ is correct and only inserts $B'$ in
  $\bB.\preds$ in line~\ref{lin:gossip:ins-pred} whenever $s$
  considers $B'$ valid in line~\ref{lin:gossip:valid}, $s$ considers
  all $B' \in \bB.\preds$ valid.
  In the step case, $\bB_{n+1}$ is updated in
  line~\ref{lin:gossip:next-b}. We show that \ref{valid:parent}
  $\bB_{n+1}$ has exactly one parent~$\bB_{n}$. By
  line~\ref{lin:gossip:next-b}, $\bB_{n+1}.\nid = \bB_{n}.\nid$ and
  $\bB_{n+1}.\seqn = \bB_n.\seqn + 1$.
  As $\bB_n$ is inserted in $\bB_{n+1}.\preds$ in
  line~\ref{lin:gossip:next-b}, by definition
  $\bB_{n+1}.\parent = \bB_{n}$.
  By induction hypothesis, $s$ considers $\bB_n$ valid, and again, as
  $s$ is correct and only inserts $B'$ in $\bB.\preds$ in
  line~\ref{lin:gossip:ins-pred} whenever $s$ considers $B'$ valid in
  line~\ref{lin:gossip:valid}, \ref{valid:preds} $s$ considers all
  $B' \in \bB.\preds$ valid.
\end{proof}

\begin{lemma} \label{lem:g-is-block-dag} %
  For every correct server~$s$ executing $\gossip$ of
  Algorithm~\ref{alg:gossip} $\G$ is a \blockdag.
\end{lemma}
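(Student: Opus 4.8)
The plan is to argue by induction on the number of times the execution of $\gossip$ performs an insertion into $\G$, i.e.\ on the number of times a handler reaches line~\ref{lin:gossip:ins-b-g} or line~\ref{lin:gossip:insB}, showing that each insertion turns one \blockdag into another. The base case is the initial value of $\G$: by line~\ref{lin:gossip:g} of Algorithm~\ref{alg:shim} the server starts with $\G = \empG$, and the empty graph is trivially acyclic and vacuously satisfies both conditions of Definition~\ref{def:block-dag}. Since neither insert site removes a vertex or an edge, $\G$ only grows along the execution, a monotonicity I will use repeatedly below.

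For the inductive step I assume $\G$ is a \blockdag and treat the two insert sites in turn, in each case reducing to Lemma~\ref{lem:insert-preserves-block-dag}, whose hypotheses are exactly \emph{(i)} $\valid(s, B')$ and \emph{(ii)} $B \in \G$ for every $B \in B'.\preds$. At line~\ref{lin:gossip:ins-b-g} the inserted block $B'$ satisfies \emph{(i)} directly, because the enclosing handler fires on the guard $\valid(s, B')$ at line~\ref{lin:gossip:valid}. At line~\ref{lin:gossip:insB} the inserted block is the server's own current block $\bB$, and \emph{(i)} is supplied by Lemma~\ref{lem:valid-bB}. Once \emph{(i)} and \emph{(ii)} both hold, Lemma~\ref{lem:insert-preserves-block-dag} gives that the result is again a \blockdag for $s$, closing the induction. (If the block happens to be present already, $\G.\ins$ leaves $\G$ unchanged by Lemma~\ref{lem:ins-bdag-idempotent} and there is nothing to prove.)

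The main obstacle is hypothesis \emph{(ii)}, that every predecessor of the inserted block is \emph{already} in $\G$ at the moment of insertion; I would discharge it by strengthening the induction with a subsidiary invariant. For the own block $\bB$ at line~\ref{lin:gossip:insB} the invariant is that every reference in $\bB.\preds$ points to a block currently in $\G$: references are appended to $\bB.\preds$ only at line~\ref{lin:gossip:ins-pred}, immediately after the corresponding $B'$ has been inserted at line~\ref{lin:gossip:ins-b-g} within the same handler, while the single parent reference initialising $\bB.\preds$ at line~\ref{lin:gossip:next-b} is $\refB$ of the block inserted at the preceding line~\ref{lin:gossip:insB}; monotonicity of $\G$ then keeps all these predecessors present. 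For the received block $B'$ at line~\ref{lin:gossip:ins-b-g}, hypothesis \emph{(ii)} follows from the recursive clause~\ref{valid:preds} of Definition~\ref{def:valid}: validating $B'$ requires $s$ to have already considered every $B \in B'.\preds$ valid, and by the companion invariant—every block a correct $s$ considers valid has been inserted into $\G$—each such $B$ lies in $\G$ before $B'$ is added. This is exactly the well-definedness precondition that Definition~\ref{def:block-dag} attaches to the notation $\G.\ins(B')$, so verifying it is what makes each insertion legal and the induction go through.
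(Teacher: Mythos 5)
Your proposal is correct and takes essentially the same route as the paper's proof: the same induction on the number of times the execution reaches line~\ref{lin:gossip:ins-b-g} or line~\ref{lin:gossip:insB}, the same base case (the empty \blockdag initialized in Algorithm~\ref{alg:shim}, line~\ref{lin:gossip:g}), and the same step case reducing to Lemma~\ref{lem:insert-preserves-block-dag}, with validity discharged by the guard in line~\ref{lin:gossip:valid} for received blocks and by Lemma~\ref{lem:valid-bB} for the server's own block. The only difference is one of detail: the paper discharges the predecessors-in-$\G$ hypothesis for both insert sites uniformly---from Definition~\ref{def:valid}~\ref{valid:preds} together with the observation that a correct server inserts every block it has received and considers valid---whereas you additionally spell out the invariant that $\bB.\preds$ only ever references blocks already inserted into $\G$ (lines~\ref{lin:gossip:ins-pred} and~\ref{lin:gossip:next-b}), a slightly more explicit rendering of the same argument.
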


\begin{proof} %
  We proof the lemma by induction on the times~$n$ the execution
  reaches line~\ref{lin:gossip:ins-b-g} or line~\ref{lin:gossip:insB}
  of Algorithm~\ref{alg:gossip}.
  As $\G$ is initialized to the empty \blockdag in
  Algorithm~\ref{alg:shim} in line~\ref{lin:gossip:g}, $\G$ is a
  \blockdag for the base case $n = 0$.
  In the step case, by induction hypothesis, $\G$ is a \blockdag. By
  Lemma~\ref{lem:insert-preserves-block-dag} $\G.\ins(B')$ is a
  \blockdag if
  \begin{enumerate*}[label=\emph{(\roman*)}]
  \item\label{it:is-valid} $\valid(s, B')$ holds, and
  \item\label{it:preds} for all $B \in B'.\preds$ holds $B' \in \G$.
  \end{enumerate*}
  The former~\ref{it:is-valid}, $\valid(s, B')$, holds either by
  line~\ref{lin:gossip:valid} or by Lemma~\ref{lem:valid-bB}.
  As $s$ inserts any block $B$ which $s$ has received and considers
  valid by lines~\ref{lin:gossip:valid}--\ref{lin:gossip:ins-pred},
  for the latter~\ref{it:preds} it suffices to show that $s$ considers
  all $B \in B'.\preds$ valid. As $s$ considers $B'$ valid, by
  Definition~\ref{def:valid}~\ref{it:preds}, $s$ considers all
  $B \in B'.\preds$ valid.
\end{proof}

\begin{proof}[Proof of Lemma~\ref{lem:eventually} (\ref{lem:eventually-receive})] %
  By assumption~$s$ considers $B$ valid, and hence by
  lines~\ref{lin:gossip:valid}--\ref{lin:gossip:ins-pred} adds a
  reference to $B$ to $\bB$. As $s$ is correct, $s$ eventually will
  $\disseminate()$, and then $s$ disseminates $\bB$ in
  line~\ref{lin:gossip:brdcst}. We refer to this disseminated $\bB$ as
  $B'$.  By Assumption~\ref{ass:reliable-links}, every correct server
  will eventually receive $B'$. Assume a correct server~$s'$, which
  has received~$B'$, but has not received $B$.
  As $s'$ has not received $B$, by
  Definition~\ref{def:valid}~\ref{valid:preds}, $s'$ does not consider
  $B'$ valid.  After time $\timer{B'}$ by
  lines~\ref{lin:gossip:missing-preds}--\ref{lin:gossip:send-fwd} $s'$
  will request $B$ from $s$ by sending $\FWD\ B$. Again by
  Assumption~\ref{ass:reliable-links}, after $s$ receives $\FWD\ B$
  from $s'$ by lines~\ref{lin:gossip:rcv-fwd}--\ref{lin:gossip:fwd},
  $s$ will send $B$ to $s'$, which will eventually arrive, and $s'$
  receives $B$.
\end{proof}

\begin{proof}[Proof of Lemma~\ref{lem:eventually} (\ref{lem:eventually-valid})] %
  We have to show, that $\valid(s', B)$ eventually holds for all
  correct servers~$s'$.
  For Definition~\ref{def:valid}~\ref{valid:sign-verified}, as $s$
  considers $B$ valid and $s$ is correct, $B$ has a valid
  signature. This can be checked by every $s'$.
  We show
  Definition~\ref{def:valid}~\ref{valid:seqn}~\ref{valid:genesis} and
  \ref{valid:preds} by induction on the sum of the length of the paths
  from genesis blocks to $B$. For the base case, $B$ does not have
  predecessors. As $s$ considers $B$ valid, then $B$ is a genesis
  block, and $s'$ will consider $B$ a genesis block, so
  Definition~\ref{def:valid}~\ref{valid:seqn}~\ref{valid:genesis} and
  \ref{valid:preds} hold.
  For the step case, let $B' \in B.\preds$.  By
  Lemma~\ref{lem:eventually}~(\ref{lem:eventually-receive}), every
  correct server~$s'$ will eventually receive~$B'$. By induction
  hypothesis, $s'$ will eventually consider $B'$ valid. The same
  reasoning holds for every $B' \in B.\preds$. It remains to show that
  $B$ has exactly one parent or is a genesis block. Again, this
  follows by $s$ considering $B$ valid. As $B.\parent \in B.\preds$
  $s'$ also considers $B.\parent$ valid.
\end{proof}

\begin{lemma} \label{lem:ref-at-most-once} %
  For every block~$B$ every correct server~$s$ executing $\gossip$ of
  Algorithm~\ref{alg:gossip} inserts $\refB(B)$ at most once in any
  block $B'$ with $B'.\nid = s$.
\end{lemma}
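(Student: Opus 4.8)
The plan is to track the only two lines of Algorithm~\ref{alg:gossip} that append a predecessor reference to the block that $s$ is currently building---line~\ref{lin:gossip:ins-pred} and line~\ref{lin:gossip:next-b}---and to argue that any fixed $\refB(B)$ can be emitted by each of them at most once, and never by both. Two standing facts set this up. First, $\G$ grows monotonically: every modification of $\G$ goes through $\G.\ins(\cdot)$, which by Definition~\ref{def:insert} (and Lemma~\ref{lem:ins:properties}) only adds a vertex and never removes one, so once $B \in \G$ then $B \in \G$ forever. Second, $\buffer$ is a set (line~\ref{lin:gossip:buffer}), so receiving $B$ repeatedly (line~\ref{lin:gossip:ins-b-buffer}) keeps only a single copy.

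For line~\ref{lin:gossip:ins-pred} I would observe that it fires inside the handler guarded by $\valid(s, B')$ for some $B' \in \buffer$ (line~\ref{lin:gossip:valid}), and that within the same atomic handler $s$ also inserts $B'$ into $\G$ (line~\ref{lin:gossip:ins-b-g}) and removes $B'$ from $\buffer$ (line~\ref{lin:gossip:rm-b-buffer}). Hence appending $\refB(B)$ here coincides with a one-way transition of $B$ from $\buffer$ into $\G$. For $\refB(B)$ to be appended again via this line, $B$ would have to re-enter $\buffer$; but the only line that adds to $\buffer$ is the receive handler (line~\ref{lin:gossip:rcv-b}), whose guard requires $B \not\in \G$. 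By monotonicity of $\G$ this is impossible, so line~\ref{lin:gossip:ins-pred} emits $\refB(B)$ at most once.

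For line~\ref{lin:gossip:next-b} the new block is initialized with the single parent reference $\refB(\bB)$ to $s$'s just-disseminated block. Since each disseminated block carries a strictly increasing $\seqn$ (set in line~\ref{lin:gossip:next-b}), the blocks named as parents across successive iterations are pairwise distinct, so each is referenced as a parent at most once. Moreover, every such block was already inserted into $\G$ at line~\ref{lin:gossip:insB} before being sent (line~\ref{lin:gossip:brdcst}); consequently, even if $s$ receives its own block back, it is never admitted to $\buffer$ (again by the $B \not\in \G$ guard) and is therefore never the $B'$ of line~\ref{lin:gossip:ins-pred}. This separates the two emission sites, and combining the three observations yields that $\refB(B)$ is appended to some block built by $s$ at most once across all such blocks---a fortiori at most once within any single one.

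The main obstacle is purely bookkeeping rather than conceptual: pinning down that the buffer-to-DAG move is atomic and irreversible, and confirming that the self-parent references and the received-block references never target the same $B$. Both points reduce to the single guard $B \not\in \G$ in line~\ref{lin:gossip:rcv-b} together with the monotonicity of $\G$, so the care lies in stating the handler-atomicity assumption explicitly and applying it uniformly.
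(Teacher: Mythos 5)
Your proof is correct, and its core coincides with the paper's own argument: the insertion of $\refB(B)$ at line~\ref{lin:gossip:ins-pred} happens in the same atomic handler that moves $B$ from $\buffer$ into $\G$, and the guard $B \not\in \G$ at line~\ref{lin:gossip:rcv-b}, together with monotonicity of $\G$ and the fact that $\buffer$ is a set, makes this move irreversible, so that handler can fire at most once per block. Where you differ is in completeness: the paper's proof only discusses lines~\ref{lin:gossip:valid}--\ref{lin:gossip:ins-pred}, whereas you also treat the second emission site, the parent reference written at line~\ref{lin:gossip:next-b}, arguing that successive parents are pairwise distinct (strictly increasing $\seqn$) and that the two sites can never target the same block, since $s$'s own block is inserted into $\G$ at line~\ref{lin:gossip:insB} before it is sent and hence can never be admitted to $\buffer$. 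This extra case analysis closes a step the paper leaves implicit, so your version is, if anything, the more careful of the two.
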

\begin{proof}
  By line~\ref{lin:gossip:rcv-b} of Algorithm~\ref{alg:gossip}, a correct
  server adds a block~$B$ to $\buffer$ only if $B \not\in \G$, and as
  $\buffer$ is a set, $B$ appears at most once in $\buffer$.
  Either $B$ remains in $\buffer$, or by
  lines~\ref{lin:gossip:valid}--\ref{lin:gossip:ins-pred}, for any
  block~$B'$ with $B'.\nid = s$, after $\refB(B)$ is inserted in
  $B'$, $B \in \G$ holds.
  Thus, for no future execution $B \not\in \G$ holds and therefore
  $B \not\in \buffer$. As $s$ is correct, it will not enter
  lines~\ref{lin:gossip:valid}--\ref{lin:gossip:ins-pred} again
  for~$B$.
\end{proof}

\begin{lemma} \label{lem:cupdag} %
  Let $s$ and $s'$ be correct servers with \blockdags $\G_s$ and
  $\G_{s'}$. Then their joint \blockdag
  $\G \geqslant \G_s \cup \G_{s'}$ is a \blockdag for $s$.
\end{lemma}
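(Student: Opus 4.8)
The plan is to verify directly that $\G$ meets the three requirements of Definition~\ref{def:block-dag} with respect to $s$: that $\G$ is acyclic, that condition~\ref{bdag:correct-edges} (edge-completeness) holds, and that condition~\ref{bdag:all-valid}, namely $\valid(s, B)$ for every $B \in \vtcs_\G$, holds. The first two are \emph{server-independent}: acyclicity is a property of the graph alone, and condition~\ref{bdag:correct-edges} refers only to the predecessor references $B'.\preds$ and to membership in $\vtcs_\G$, neither of which mentions the observing server. Since $\G$ is by hypothesis a \blockdag (the defining property of a joint \blockdag), and since $\G \geqslant \G_s \cup \G_{s'}$ forces all of $\vtcs_{\G_s}$ and $\vtcs_{\G_{s'}}$ into $\vtcs_\G$ with their edges preserved, both conditions are inherited immediately. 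So the only real content is condition~\ref{bdag:all-valid}.

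For condition~\ref{bdag:all-valid} the key observation I would use is that validity is \emph{objective} across correct servers: in Definition~\ref{def:valid}, clause~\ref{valid:sign-verified} is the deterministic test $\validsign(B.\nid, B.\signature)$, and clause~\ref{valid:seqn} is a purely structural check on $B.\seqn$ and the single parent reference in $B.\preds$; under the assumed zero failure probability of the signature and hash schemes, every correct server computes the same verdict for these. Hence I would establish $\valid(s', B) \Rightarrow \valid(s, B)$ for any two correct servers by induction on the length of the longest path from a genesis block to $B$ inside $\G$. In the base case $B$ is a genesis block, so only clauses~\ref{valid:sign-verified} and~\ref{valid:seqn} apply and transfer verbatim. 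In the step case, edge-completeness (condition~\ref{bdag:correct-edges}, already established) places every $B'' \in B.\preds$ in $\vtcs_\G$ on a strictly shorter path, so by the induction hypothesis each such $B''$ is valid for $s$; together with the objective clauses this gives clause~\ref{valid:preds} and hence $\valid(s, B)$.

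It then remains to assemble the pieces. Every $B \in \vtcs_{\G_s}$ is valid for $s$ because $\G_s$ is a \blockdag for $s$, and every remaining block of $\G$---being valid for the server $s'$ that witnesses $\G$ as a \blockdag---is valid for $s$ by the transfer argument above; thus condition~\ref{bdag:all-valid} holds and $\G$ is a \blockdag for $s$. I expect the main obstacle to be making the ``validity is server-independent'' step airtight: one must lean explicitly on the global perfect-crypto assumption so that signature verification and the derived notion of \emph{parent} are genuinely observer-independent, and one must route the recursive clause~\ref{valid:preds} through the induction so that it closes---which is exactly where edge-completeness of $\G$ is needed, since it guarantees that every predecessor referenced by a block of $\G$ is itself present in $\G$ and therefore covered by the induction hypothesis.
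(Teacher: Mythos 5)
Your proposal is correct in substance but takes a genuinely different route from the paper. The paper's proof is an incremental construction: it sets $\bs = B_1, \ldots, B_{k-1}$ to be the blocks in $\G_{s'}$ but not in $\G_s$, inducts on $|\bs|$, at each step picks a $B_i \in \bs$ with $B_i.\preds \cap \bs = \varnothing$ (so all of $B_i$'s predecessors already lie in $\G_s$ and are thus valid for $s$), argues $\valid(s, B_i)$ from the observer-independence of clauses~(i) and~(ii) of Definition~\ref{def:valid} plus the recursive clause~(iii), and then invokes Lemma~\ref{lem:insert-preserves-block-dag} to conclude that $\G_s.\ins(B_i)$ is again a \blockdag for $s$. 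You instead verify the conditions of Definition~\ref{def:block-dag} directly on the joint graph, funneling all the work into a standalone transfer principle $\valid(s', B) \Rightarrow \valid(s, B)$ proved by induction on path length from genesis. The mathematical core is the same in both---objectivity of the cryptographic and structural checks, plus an induction to close the recursive predecessor clause---and your two inductions are just different well-founded orderings of the same recursion. What each buys: yours isolates a clean, reusable lemma (validity is objective across correct servers) and needs no insertion machinery; the paper's reuses Lemma~\ref{lem:insert-preserves-block-dag}, is constructive, and crucially does not presuppose that the joint graph is already given as a \blockdag \emph{for anyone}---it builds the witness out of $\G_s$ by insertions, so it simultaneously shows that $\G_s \cup \G_{s'}$ can be completed to a \blockdag at all. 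Your reading, under which $\G$ arrives already witnessed as a \blockdag for $s'$, is defensible given the paper's (admittedly circular) definition of a joint \blockdag, but be aware that this existence content is exactly what your direct verification leaves unaddressed.
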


\begin{proof}
  Let $\bs = B_1, \ldots, B_{k-1}$ be blocks such that
  $B_i \in \G_{s'}$ but $B_i \not\in \G_s$ for $1 \leqslant i < k$. We
  show the statement by induction on $|\bs|$. As $\G_s$ is a \blockdag
  for $s$, the statement holds for the base case. For the step case we
  pick a $B_i \in \bs$ such that $B_i.\preds \cap \bs = \varnothing$.
  Such a $B_i$ exists, as in the worst case, $\G_s$ and $\G_{s'}$ are
  completely disjoint and $B_i$ is a genesis block in $\G_s$.
  It remains to show that $s$ considers $B_i$ valid and all
  $B_i.\preds$ are in $\G_s$. Then by
  Lemma~\ref{lem:insert-preserves-block-dag} $\G_s.\ins(B_i)$ is a
  \blockdag and by induction hypothesis the statement holds.
  For all $B' \in B_i.\preds$ holds $B' \in \G_s$ by definition of
  $\bs$. Moreover, as $\G_s$ is the \blockdag of $s$, $s$ considers
  every $B'$ valid. Then by \ref{valid:preds} of
  Definition~\ref{def:valid}, together with the fact that $s'$ is
  correct therefore \ref{valid:sign-verified} and \ref{valid:seqn}
  hold for $s$, $s$ considers $B_i$ valid.
\end{proof}

\begin{lemma} \label{lem:witness} %
  If $B_1 \in \G$ for the \blockdag $\G$ of a correct server~$s$, then
  eventually for a \blockdag $\G'$ of $s$ where $\G' \geqslant \G$
  holds $B_2 \in \G'$ and $B_2.\nid = s$ and $B_1 \nxt B_2$.
\end{lemma}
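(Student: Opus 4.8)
The plan is to trace how $B_1$ came to lie in $\G$ and to observe that, in the very act of inserting it, the correct server~$s$ also records a reference to $B_1$ in the block~$\bB$ it is currently assembling; disseminating that block then produces the witness~$B_2$.

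First I would note that $\G$ is modified in Algorithm~\ref{alg:gossip} only at line~\ref{lin:gossip:ins-b-g} and line~\ref{lin:gossip:insB}, so $B_1 \in \G$ forces one of two cases. If $B_1$ was inserted at line~\ref{lin:gossip:ins-b-g}, then, since the handler body runs atomically, the immediately following line~\ref{lin:gossip:ins-pred} appends $\refB(B_1)$ to $\bB.\preds$. If instead $B_1$ was inserted at line~\ref{lin:gossip:insB}, then $B_1$ is $s$'s own just-disseminated block (so $B_1.\nid = s$), and line~\ref{lin:gossip:next-b} initialises the fresh current block with $\preds = [\refB(B_1)]$. Either way the current block~$\bB$ satisfies $\refB(B_1) \in \bB.\preds$, and this reference persists, since predecessors are only ever appended to $\bB$ until it is sent.

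Next I would invoke that a correct~$s$ eventually requests $\disseminate()$, guaranteed by Algorithm~\ref{alg:shim} (lines~\ref{lin:repeatedly}--\ref{lin:disseminate}). At that moment line~\ref{lin:gossip:insB} signs this current block and inserts it into $\G$; I take $B_2$ to be the finalised block and $\G'$ the resulting graph, which remains a \blockdag for $s$ by Lemma~\ref{lem:g-is-block-dag}. By construction $B_2.\nid = s$ and $\refB(B_1) \in B_2.\preds$, so condition~\ref{bdag:correct-edges} of Definition~\ref{def:block-dag} yields $(B_1, B_2) \in \edgs_{\G'}$, \ie $B_1 \nxt B_2$. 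For $\G' \geqslant \G$ I would chain the finitely many insertions separating $\G$ from $\G'$: each insertion of a fresh vertex gives $\leqslant$ by Lemma~\ref{lem:ins:properties}~(\ref{lem:g:subseteq}), and $\leqslant$ is transitive.

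The only delicate point is the bookkeeping in the case split---confirming that whichever line inserted $B_1$ also leaves $B_1$ among the predecessors of the pending block~$\bB$. Atomic execution of each handler makes this immediate, so I expect no real obstacle. I would close by noting that $B_1$ need not have been authored by a correct server: the argument uses only that $s$ holds $B_1$ in $\G$ and will emit its next block.
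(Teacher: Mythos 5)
Your proposal is correct and follows essentially the same route as the paper's own proof: a case split on whether $B_1$ entered $\G$ at line~\ref{lin:gossip:ins-b-g} or line~\ref{lin:gossip:insB}, observing that either line~\ref{lin:gossip:ins-pred} or line~\ref{lin:gossip:next-b} places $\refB(B_1)$ into the pending block~$\bB$, and then invoking the eventual call to $\disseminate()$ to obtain $B_2$ and $\G' \geqslant \G$. Your additional bookkeeping---the persistence of the reference in $\bB.\preds$, the edge via Definition~\ref{def:block-dag}~(\ref{bdag:correct-edges}), and the transitivity argument for $\G' \geqslant \G$---merely fills in details the paper leaves implicit.
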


\begin{proof}
  For a correct server~$s$ it holds that $B_1 \in \G$ only after $s$
  inserted $B_1$
  either in line~\ref{lin:gossip:ins-b-g} or in
  line~\ref{lin:gossip:insB}.
  Then by either line~\ref{lin:gossip:ins-pred} or
  \ref{lin:gossip:next-b}, respectively, $B_1 \in \bB.\preds$ for
  $\bB.\nid = s$.
  As $s$ is correct $s$ will eventually call $\disseminate()$ and
  $s$ will reach line~\ref{lin:gossip:insB} for $\bB$ and insert $\bB$
  to $\G$ for some $\G' \geqslant \G$.
\end{proof}

\subsection{Ad Section~\ref{sec:interpret}: Interpreting a Protocol}

In this section we give the proofs---and lemmas those proofs rely
on---which we omitted in Section~\ref{sec:interpret}.
All proofs refer to Algorithm~\ref{alg:interpret}. For the execution
we assume, that the body of each handler is executed atomically and
sequentially within the handler.

\begin{lemma} \label{lem:uninterpreted} %
  For $B \in \G$ if $\interpreted[B] = \false$ then
  $B.\bfr[d, \ell] = \varnothing$ and $B.\pis[\ell] = \bot$ for
  $\ell \in \lbls$ and $d \in \{\In, \Out\}$.
\end{lemma}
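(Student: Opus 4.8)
The plan is to argue purely by inspection of Algorithm~\ref{alg:interpret}, relying on the standing assumption (stated at the top of this subsection) that each handler body is executed atomically and sequentially. First I would pin down the initial situation: when the module starts, line~\ref{lin:interpret:interpreted} sets $\interpreted[B] = \false$ for every $B$, and by the initialization convention spelled out in the text every entry $B.\pis[\ell]$ holds the unwritten value $\bot$ and every entry $B.\bfr[d, \ell]$ holds $\varnothing$, for all $\ell \in \lbls$ and $d \in \{\In, \Out\}$. The statement then reduces to showing that these two families of variables are never written while $\interpreted[B]$ remains $\false$.

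Next I would locate every statement that assigns to $B.\pis[\cdot]$ or $B.\bfr[\cdot, \cdot]$. These are exactly lines~\ref{lin:interpret:copy-parent}, \ref{lin:interpret:initbfr}, \ref{lin:interpret:in}, and \ref{lin:interpret:bfr}, all of which sit inside the single handler guarded by $\eligible(B)$ in line~\ref{lin:interpret:loop}. The key observation is that during the handler invocation for some block $B'$, line~\ref{lin:interpret:copy-parent} writes $B'.\pis$ (reading only $B'.\parent.\pis$), and lines~\ref{lin:interpret:initbfr}, \ref{lin:interpret:in}, \ref{lin:interpret:bfr} write $B'.\bfr[\cdot, \cdot]$, while in line~\ref{lin:interpret:in} they merely \emph{read} the out-buffers $B_i.\bfr[\Out, \ell_j]$ of the predecessors $B_i \in B'.\preds$. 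Hence the only block whose $\pis$ and $\bfr$ are modified during the handler for $B'$ is $B'$ itself; in particular $B.\pis$ and $B.\bfr$ can be touched only by the handler invocation for $B$. The indication handler in lines~\ref{lin:interpret:indicate}--\ref{lin:interpret:indicate-up} only reads $B.\pis[\ell_j].i$ and writes nothing, so it is irrelevant.

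Finally I would combine this with atomicity to close the argument. Since the handler body for $B$ runs atomically, it either has not been entered or has run to completion, and its final statement, line~\ref{lin:interpreted-b}, sets $\interpreted[B] := \true$. Therefore, whenever $\interpreted[B] = \false$, the handler for $B$ has not completed; by atomicity it has not partially executed either, so none of lines~\ref{lin:interpret:copy-parent}, \ref{lin:interpret:initbfr}, \ref{lin:interpret:in}, \ref{lin:interpret:bfr} has altered $B$'s variables, which thus retain their initial values $B.\bfr[d, \ell] = \varnothing$ and $B.\pis[\ell] = \bot$. The only point I would state with care—and the nearest thing to an obstacle—is the atomicity bookkeeping: eligibility in line~\ref{lin:interpret:loop} itself requires $\interpreted[B] = \false$, so $B$'s handler fires at most once, ruling out any second invocation that might write $B$'s variables before the flag flips and keeping the ``all-or-nothing'' reasoning sound.
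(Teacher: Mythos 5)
Your proof is correct and takes essentially the same route as the paper's: both arguments rest on the initialization of $B.\pis[\ell]$ and $B.\bfr[d,\ell]$, the observation that these variables are written only inside the (atomic) handler invocation for $B$ itself (lines~\ref{lin:interpret:copy-parent}--\ref{lin:interpret:bfr}), and the fact that this handler ends by setting $\interpreted[B] = \true$ in line~\ref{lin:interpreted-b}. The paper phrases this as a proof by contradiction while you argue directly via the all-or-nothing atomicity of the handler, but the substance is identical.
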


\begin{proof}
  For every $B$, $\ell \in \lbls$, and $d \in \{\In, \Out\}$,
  initially we have $B.\bfr[d, \ell] = \varnothing$ and
  $B.\pis[\ell] = \bot$.  Assume towards a contradiction that
  $B.\bfr[d, \ell] \neq \varnothing$ or $B.\pis[\ell] \neq \bot$. As
  $B.\bfr[d, \ell]$ and $B.\pis[\ell]$ are only modified in
  lines~\ref{lin:interpret:copy-parent}--\ref{lin:interpreted-b} after
  $B$ is picked in line~\ref{lin:interpret:loop}, then by
  line~\ref{lin:interpreted-b} $\interpreted[B] = \true$ contradicting
  $\interpreted[B] = \false$.
\end{proof}

\begin{lemma} \label{lem:picked-eventually} %
  For a block $B \in \G$ and a correct server executing $\interpret(\G, \P)$
  in Algorithm~\ref{alg:interpret} every $B$ is eventually picked in
  line~\ref{lin:interpret:loop}.
\end{lemma}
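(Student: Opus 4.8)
The plan is to argue by well-founded induction on the position of $B$ in the acyclic structure of $\G$, combined with a fairness assumption on the handler in line~\ref{lin:interpret:loop}. Concretely, since $\G$ is a \blockdag and hence acyclic (Lemma~\ref{lem:g-is-block-dag} together with Lemma~\ref{lem:ins:properties}~(\ref{lem:g:acyclic})), and since at the instant $B \in \G$ the graph has only finitely many vertices---no block inserted after $B$ can become an ancestor of $B$, because by Definition~\ref{def:insert} the operation $\ins$ only adds edges directed into the fresh vertex---the length $\mu(B)$ of the longest directed path in $\G$ ending at $B$ is well-defined and finite. For every predecessor $B_i \in B.\preds$ we have $B_i \nxt B$, so $\mu(B_i) < \mu(B)$, which gives a strictly decreasing measure to induct on.

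Two invariants drive the argument, and I would first record them as small preliminary observations. The flag $\interpreted[B]$ is initialized to $\false$ (line~\ref{lin:interpret:interpreted}) and is set to $\true$ only when $B$ itself is picked (line~\ref{lin:interpreted-b}); hence it is monotone, and until $B$ is picked it stays $\false$. Consequently, once a block has been picked it is interpreted forever after, and a block that has become $\eligible$ stays $\eligible$ until it is itself picked (its own flag remains $\false$ while all its predecessors' flags remain $\true$). Now I induct on $\mu(B)$. In the base case $B$ has no predecessors, so the predecessor clause of $\eligible(B)$ holds vacuously and $B$ is $\eligible$ from the outset. In the step case, each $B_i \in B.\preds$ satisfies $\mu(B_i) < \mu(B)$ and is therefore eventually picked by the induction hypothesis, after which $\interpreted[B_i] = \true$ holds permanently; so once all the (finitely many) predecessors have been picked, $B$ becomes, and by the stability observation remains, $\eligible$.

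It then remains to pass from ``$B$ is permanently $\eligible$'' to ``$B$ is eventually picked.'' This is the step I expect to be the real obstacle, since the guard of line~\ref{lin:interpret:loop} merely selects \emph{some} eligible block, and while $\G$ keeps growing other eligible blocks could in principle be scheduled ahead of $B$ indefinitely. I would discharge it using the standard weak-fairness assumption on handler execution already adopted for these proofs (``the body of each handler is executed atomically and sequentially within the handler''): a guard that remains continuously enabled is eventually taken, and fair selection among enabled blocks ensures that the permanently-$\eligible$ block $B$ is not starved. Under this assumption $B$ is eventually picked, which closes the induction and establishes the lemma.
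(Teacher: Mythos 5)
Your proof is correct and takes essentially the same route as the paper's, whose entire argument is ``to pick $B$, $\eligible(B)$ has to hold; as $\G$ is finite and acyclic, every $B \in \G$ is eventually $\eligible$''---precisely the induction over ancestors that you spell out via the longest-path measure $\mu(B)$. Your write-up is more careful on two points the paper silently elides (that blocks inserted after $B$ cannot become ancestors of $B$, and that passing from ``permanently eligible'' to ``eventually picked'' requires a fairness assumption on the scheduler), but these are refinements of the same argument rather than a different approach.
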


\begin{proof}
  To pick $B$ in line~\ref{lin:interpret:loop}, $\eligible(B)$ has to
  hold. As $\G$ is finite and acyclic, every $B \in \G$ is
  $\eligible(B)$ eventually.
\end{proof}

\begin{lemma} \label{lem:unchanged} %
  For a block~$B \in \G$ and an $\ell \in \lbls$, if $\interpreted[B]$
  holds,
  \begin{enumerate*}
  \item \label{lem:unchanged:interpreted-bfr} %
    then $B.\bfr[d, \ell]$ will never be modified again for every
    $d \in \{ \In, \Out\}$.
  \item \label{lem:unchanged:interpreted-pis} %
    then $B.\pis[\ell]$ will never be modified again.
  \end{enumerate*}
\end{lemma}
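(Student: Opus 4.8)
The plan is to establish both parts at once by showing that the fields $B.\bfr[d,\ell]$ and $B.\pis[\ell]$ are written \emph{only} inside the $\eligible$-guarded handler beginning at line~\ref{lin:interpret:loop}, and that this handler becomes permanently disabled for $B$ the moment $\interpreted[B]$ is set to $\true$. Since the guard is the same for both variables, a single argument covers \ref{lem:unchanged:interpreted-bfr} and \ref{lem:unchanged:interpreted-pis}.

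First I would enumerate, by inspecting Algorithm~\ref{alg:interpret}, every statement that assigns to these fields. The value $B.\pis[\ell]$ is mutated only at line~\ref{lin:interpret:copy-parent} (the copy from the parent), at line~\ref{lin:interpret:initbfr} (through the request call $B.\pis[\ell].r_j$), and at line~\ref{lin:interpret:bfr} (through $B.\pis[\ell].\receive(m)$); the buffer $B.\bfr[d,\ell]$ is written only at line~\ref{lin:interpret:initbfr} (for $d=\Out$), at line~\ref{lin:interpret:in} (for $d=\In$), and again at line~\ref{lin:interpret:bfr} (for $d=\Out$). All of these lie in the body of the handler at line~\ref{lin:interpret:loop}. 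The crucial locality observation is that when this handler fires for some block $B'$, the occurrences of $B'.\parent.\pis$ and of $B_i.\bfr[\Out,\ell]$ with $B_i \in B'.\preds$ appear only as reads on right-hand sides, while every assignment target is a field of $B'$ itself; hence no invocation of the handler for a block other than $B$ can alter $B.\bfr[d,\ell]$ or $B.\pis[\ell]$. The remaining handler, at line~\ref{lin:interpret:indicate}, only reads $B.\pis[\ell].i$ and issues the external upcall $\indicate$, so it performs no write at all.

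Next I would invoke the guard. The handler at line~\ref{lin:interpret:loop} fires only when $\eligible(B)$ holds, which by the definition of $\eligible$ requires $\interpreted[B] = \false$; its final statement, line~\ref{lin:interpreted-b}, sets $\interpreted[B] = \true$. Since each handler body runs atomically, no field of $B$ is touched after line~\ref{lin:interpreted-b} within the same invocation, and once $\interpreted[B] = \true$ the guard $\eligible(B)$ can never hold again, so the handler for $B$ never fires again. Combined with the locality observation, this means no statement ever writes $B.\bfr[d,\ell]$ or $B.\pis[\ell]$ after $\interpreted[B]$ becomes $\true$, for arbitrary $d \in \{\In,\Out\}$ and $\ell \in \lbls$, which is exactly the two claims. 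The only genuinely delicate point---the would-be obstacle---is the locality claim, and I expect it to require nothing more than a careful line-by-line check that the cross-block references in lines~\ref{lin:interpret:copy-parent}, \ref{lin:interpret:in}, and \ref{lin:interpret:bfr} are reads and never assignment targets; everything else is a direct consequence of the $\eligible$ guard.
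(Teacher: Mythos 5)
Your proposal is correct and follows essentially the same route as the paper's proof: enumerate the assignment sites for $B.\bfr[d,\ell]$ and $B.\pis[\ell]$ in Algorithm~\ref{alg:interpret}, observe that all of them lie inside the handler guarded by $\eligible(B)$, and conclude that once $\interpreted[B] = \true$ this guard can never hold again. Your version is if anything slightly more careful than the paper's---you make the cross-block locality observation explicit, note that line~\ref{lin:interpret:initbfr} may also mutate $B.\pis[\ell]$ via the request call, and check the $\indicate$ handler performs no writes---but these are refinements of the same argument, not a different one.
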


\begin{proof}
  For part~\ref{lem:unchanged:interpreted-bfr}, assume that
  $B.\bfr[d, \ell]$ is modified. This can only happen in
  lines~\ref{lin:interpret:initbfr}, \ref{lin:interpret:in}, and
  \ref{lin:interpret:bfr} and only for $B$ picked in
  line~\ref{lin:interpret:loop}. But as $\interpreted[B]$, $B$ cannot
  be picked in line~\ref{lin:interpret:loop}, leading to a
  contradiction.
  For part~\ref{lem:unchanged:interpreted-pis} assume that
  $B.\pis[d, \ell]$ is modified. This can only happen in
  lines~\ref{lin:interpret:copy-parent} and \ref{lin:interpret:bfr},
  and only for $B$ picked in line~\ref{lin:interpret:loop}. But as
  $\interpreted[B]$, $B$ cannot be picked in
  line~\ref{lin:interpret:loop}, leading to a contradiction.
\end{proof}

\begin{lemma} \label{lem:out-label} %
  If $m \in B.\bfr[\Out, \ell]$ then there is a block $B'$ such that
  $(\ell, r) \in B'.\rs$ and $B' \nxtS B$.
\end{lemma}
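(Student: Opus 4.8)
The plan is to prove the lemma by tracing the only two assignments in Algorithm~\ref{alg:interpret} that can ever write to $B.\bfr[\Out, \ell]$. By Lemma~\ref{lem:uninterpreted}, before $B$ is picked in line~\ref{lin:interpret:loop} we have $B.\bfr[\Out, \ell] = \varnothing$; hence if $m \in B.\bfr[\Out, \ell]$ the buffer must have been modified, and an inspection of the algorithm shows this can only happen in line~\ref{lin:interpret:initbfr} or line~\ref{lin:interpret:bfr}, both of which execute while $B$ is being interpreted. I would therefore split on which of these two lines inserted $m$ and, in each case, read off the witness block $B'$ from the range of the enclosing loop.

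First, suppose $B.\bfr[\Out, \ell]$ was written in line~\ref{lin:interpret:initbfr}. This assignment sits inside the loop of line~\ref{lin:interpret:l}, which ranges over $(\ell_j, r_j) \in B.\rs$ and writes to $B.\bfr[\Out, \ell_j]$. Thus $\ell = \ell_j$ with $(\ell, r_j) \in B.\rs$. Taking $B' = B$ gives $(\ell, r_j) \in B'.\rs$, and $B' \nxtS B$ holds because $\nxtS$ is reflexive, which establishes the claim.

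Second, suppose $B.\bfr[\Out, \ell]$ was written in line~\ref{lin:interpret:bfr}. This assignment lies inside the loop of line~\ref{lin:interpret:all-pis}, whose index ranges over the labels in $\{ \ell_j \mid (\ell_j, r_j) \in B_j.\rs \land B_j \in \G \land B_j \nxtT B \}$. Since the write is to $B.\bfr[\Out, \ell_j]$ with $\ell = \ell_j$, membership of $\ell$ in that set yields a block $B_j \in \G$ with $(\ell, r_j) \in B_j.\rs$ and $B_j \nxtT B$. Taking $B' = B_j$ then gives $(\ell, r_j) \in B'.\rs$, and since $B' \nxtT B$ implies $B' \nxtS B$, the claim follows.

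The argument is essentially a mechanical read of the algorithm, so I expect no serious obstacle. The only points requiring care are to confirm that these two lines genuinely exhaust every way $B.\bfr[\Out, \ell]$ can become non-empty (relying on Lemma~\ref{lem:uninterpreted} to rule out any residual state) and to correctly identify the range of the enclosing loops---line~\ref{lin:interpret:l} versus line~\ref{lin:interpret:all-pis}---so that the witness $B'$ and its reachability relation to $B$ are extracted accurately in each case.
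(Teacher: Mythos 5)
Your proposal is correct and follows essentially the same route as the paper's own proof: a case split on the only two lines (line~\ref{lin:interpret:initbfr} and line~\ref{lin:interpret:bfr}) that can write to $B.\bfr[\Out, \ell]$, taking $B' = B$ in the first case and $B' = B_j$ from the label set of line~\ref{lin:interpret:all-pis} in the second. Your version is slightly more careful than the paper's, since you explicitly invoke Lemma~\ref{lem:uninterpreted} to justify that these two assignments are exhaustive and note the reflexivity of $\nxtS$, but the argument is the same.
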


\begin{proof}
  In Algorithm~\ref{alg:interpret}, $m \in B.\bfr[\Out, \ell]$ only after
  the execution reaches either
  \begin{enumerate*}
  \item line~\ref{lin:interpret:initbfr}, and then $B' = B$, or
  \item line~\ref{lin:interpret:bfr}, end then by
    line~\ref{lin:interpret:all-pis} exists a $B_j$ such that
    $(\ell_j, r) \in B_j.\rs$ for a label
    $\ell \in \{ \ell_j \mid {(\ell_j, r_j) \in B_j.\rs} \land {B_j
        \in \G} \land {B_j \nxtT B} \}$.
  \end{enumerate*}
\end{proof}

\begin{lemma} \label{lem:pis:nid} %
  For all $B.\pis[\ell] \neq \bot$ holds that $B.\pis[\ell]$ was
  started with $\P(\ell, B.\nid)$.
\end{lemma}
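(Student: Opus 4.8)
The plan is to induct on the sequence number $B.\seqn$, exploiting that $B.\pis[\ell]$ is created only at genesis blocks and thereafter merely copied from a parent to its child and advanced by $\receive$ calls. First I would reduce to the case that $B$ has actually been interpreted: by Lemma~\ref{lem:uninterpreted}, $B.\pis[\ell] \neq \bot$ forces $\interpreted[B] = \true$, so $B$ was picked in line~\ref{lin:interpret:loop}, and its entry $B.\pis[\ell]$ received its value during that interpretation. It therefore suffices to trace which assignment produced the non-$\bot$ value.

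For the base case, $B.\seqn = 0$, so $B$ is a genesis block and has no parent. The only assignment available is the base-case initialization $B.\pis[\ell] \assign \New\ \Process\ \P(\ell, B.\nid)$, so the process instance is indeed started with $\P(\ell, B.\nid)$.

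For the step case, $B.\seqn > 0$, so $B$ has a parent $B.\parent$. Here line~\ref{lin:interpret:copy-parent} first performs $B.\pis \assign \Copy\ B.\parent.\pis$; since the copy of $\bot$ is $\bot$ and $B.\pis[\ell] \neq \bot$, we must have $B.\parent.\pis[\ell] \neq \bot$. By Definition~\ref{def:block} the parent satisfies $B.\parent.\nid = B.\nid$ and $B.\parent.\seqn = B.\seqn - 1$, so the induction hypothesis applies to $B.\parent$ and yields that $B.\parent.\pis[\ell]$ was started with $\P(\ell, B.\parent.\nid) = \P(\ell, B.\nid)$. Copying transfers this started state without restarting it, and the only other modification to $B.\pis[\ell]$---the call $B.\pis[\ell].\receive(m)$ in line~\ref{lin:interpret:bfr}---merely advances the already-started instance. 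Hence $B.\pis[\ell]$ is still an instance started with $\P(\ell, B.\nid)$.

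I expect the main obstacle to be the bookkeeping that links $B.\pis[\ell] \neq \bot$ to the parent's entry being non-$\bot$, \ie ruling out that the value became non-$\bot$ through a $\receive$ applied to a $\bot$ state. This is resolved by the control flow of Algorithm~\ref{alg:interpret}: line~\ref{lin:interpret:copy-parent} runs before any $\receive$ in line~\ref{lin:interpret:bfr}, so the state is already started before any message is processed. Equivalently, one may invoke the convention that every genesis block initializes $\pis[\ell]$ for every label, so that an interpreted block always carries a started instance for each $\ell$.
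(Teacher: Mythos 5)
Your proof is correct and takes essentially the same approach as the paper's: a genesis/parent case split grounded in the convention that genesis blocks start $\pis[\ell]$ with $\P(\ell, B.\nid)$, with the parent case resting on the copy in line~\ref{lin:interpret:copy-parent} and the fact that $B.\parent.\nid = B.\nid$. The only difference is presentational---the paper leaves the recursion implicit by citing Lemma~\ref{lem:pis:is-init}, whereas you make it an explicit induction on $B.\seqn$ and add the reduction via Lemma~\ref{lem:uninterpreted}.
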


\begin{proof}
  Either
  \begin{enumerate*}[label=\emph{(\roman*)}]
  \item $B$ is a genesis block, and then by assumption started with
    $B.\nid$ and $\ell$, or
  \item $B$ has a parent and by line~\ref{lin:interpret:copy-parent},
    $\pis[\ell]$ is copied from $B.\parent$ and as
    $B.\parent.\nid = B.\nid$, $B.\pis[\ell]$ was initialized with
    $B.\nid$ and $\ell$ (Lemma~\ref{lem:pis:is-init}).
  \end{enumerate*}
\end{proof}

\begin{lemma} \label{lem:out-sender} %
  If $m \in B.\bfr[\Out, \ell]$ then $m.\sender = B.\nid$.
\end{lemma}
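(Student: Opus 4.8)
The plan is to trace every way a message can enter $B.\bfr[\Out, \ell]$ during the execution of Algorithm~\ref{alg:interpret}, show that in each case the message is one returned by the process instance $B.\pis[\ell]$, and then appeal to Lemma~\ref{lem:pis:nid} to identify that instance with the server $B.\nid$. A message emitted by that instance must name $B.\nid$ as its sender, giving the claim.

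First I would observe that $B.\bfr[\Out, \ell]$ is written in exactly two places: in line~\ref{lin:interpret:initbfr}, where it is set to the messages $B.\pis[\ell].r_j$ triggered by a request $r_j$, and in line~\ref{lin:interpret:bfr}, where it accumulates the messages $B.\pis[\ell].\receive(m)$ triggered by feeding an incoming message~$m$. Both lines are only reached while $B$ is being interpreted after being picked in line~\ref{lin:interpret:loop}. Hence any $m \in B.\bfr[\Out, \ell]$ is among the outgoing messages returned by the process instance $B.\pis[\ell]$, and in particular $B.\pis[\ell] \neq \bot$.

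Next I would invoke Lemma~\ref{lem:pis:nid}: since $B.\pis[\ell] \neq \bot$, the instance $B.\pis[\ell]$ was started with $\P(\ell, B.\nid)$, i.e.\ it is the local simulation of the server with identifier $B.\nid$ running $\P$ for protocol instance~$\ell$. By the messaging interface assumed in the system model, every outgoing message produced by the process instance of a server~$s_i$ carries $m.\sender = s_i$. Combining this with $s_i = B.\nid$ yields $m.\sender = B.\nid$, which is the statement.

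The only genuinely delicate point is this last step: it rests on the (tacit) assumption that a process instance for $s_i$ always stamps its own identity as the sender of the messages it emits through the request and $\receive$ interface. This is inherent in treating $\P$ as a black box and is not something Algorithm~\ref{alg:interpret} enforces, so I would make it explicit that we rely on the protocol interface guaranteeing $m.\sender = s_i$ for every message returned by the process instance of $s_i$. Everything else is a routine case split over the two write sites together with Lemma~\ref{lem:pis:nid}.
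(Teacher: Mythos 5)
Your proposal is correct and follows essentially the same route as the paper's proof: the same two-case split over the write sites (lines~\ref{lin:interpret:initbfr} and~\ref{lin:interpret:bfr}), the same appeal to Lemma~\ref{lem:pis:nid}, and the same final step that an instance started as $\P(\ell, B.\nid)$ stamps $B.\nid$ as sender --- an interface assumption the paper also uses implicitly, and which you rightly make explicit. The only difference is cosmetic: the paper appends a remark that the signature $B.\signature$ ties the block to the server $B.\nid$, which is not needed for the literal statement.
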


\begin{proof}
  By lines~\ref{lin:interpret:initbfr} and \ref{lin:interpret:bfr} of
  Algorithm~\ref{alg:interpret} $m \in B.\bfr[\Out, \ell]$ if either
  $m \in B.\pis[\ell].(B.\rs)$ or $m \in B.\pis[\ell].\receive(m')$
  for some $m'$ of no importance. Important is, that $B.\pis[\ell]$
  was initialized by $B.\nid$ by Lemma~\ref{lem:pis:nid}, and thus
  every out-going message~$m$ has $m.\sender = B.\nid$. It remains to
  show that every $B$ with $B.\nid = s$ was build by $s$, which
  follows by the signature~$B.\nid$.
\end{proof}

\begin{lemma} \label{lem:pis:is-init} %
  When the execution of $\interpret(\G, \P)$ reaches
  line~\ref{lin:interpret:all-pis} of Algorithm~\ref{alg:interpret}
  then for all
  $\ell_j \in \{ \ell_j \mid {(\ell_j, r) \in B_j.\rs} \land {B_j \in
    \G} \land {B_j \nxtS B} \}$ holds $B.\pis[\ell_j] \neq \bot$.
\end{lemma}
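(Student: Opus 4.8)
The plan is to prove a slightly stronger invariant and then read off the lemma as an immediate special case. Concretely, I would show that for every $B \in \G$, once the interpretation of $B$ has executed line~\ref{lin:interpret:copy-parent} (understood as the genesis initialisation when $B$ has no parent), we have $B.\pis[\ell] \neq \bot$ for \emph{every} label $\ell \in \lbls$, not merely for the relevant $\ell_j$. Since line~\ref{lin:interpret:all-pis} is reached only after line~\ref{lin:interpret:copy-parent}, and the index set $\{ \ell_j \mid (\ell_j, r) \in B_j.\rs \land B_j \in \G \land B_j \nxtS B \}$ is a subset of $\lbls$, the lemma follows at once from this invariant.

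I would prove the invariant by induction on $B.\seqn$, which equals the length of the unique parent chain from a genesis block down to $B$ (each block has at most one parent, and a non-genesis block of sequence number $k$ has a parent of sequence number $k-1$). For the base case $B.\seqn = 0$, $B$ is a genesis block without a parent, and the simplification adopted in the text initialises $B.\pis[\ell] \assign \New\ \Process\ \P(\ell, B.\nid)$ for every $\ell \in \lbls$; hence $B.\pis[\ell] \neq \bot$ for all $\ell$ after this step.

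For the step case $B$ has a parent $B.\parent$ with $B.\parent \in B.\preds$. Because $B$ was picked at line~\ref{lin:interpret:loop}, $\eligible(B)$ holds, so by definition $\interpreted[B_i] = \true$ for all $B_i \in B.\preds$; in particular $\interpreted[B.\parent] = \true$, meaning $B.\parent$ was fully interpreted before $B$. The induction hypothesis gives $B.\parent.\pis[\ell] \neq \bot$ for all $\ell$ right after $B.\parent$'s copy step, and the only subsequent writes to $B.\parent.\pis[\ell]$ (lines~\ref{lin:interpret:initbfr} and~\ref{lin:interpret:bfr}) advance an already-running process instance through a request or a $\receive$ call and never reset it to $\bot$; thus $B.\parent.\pis[\ell] \neq \bot$ for all $\ell$ at the end of $B.\parent$'s interpretation. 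By Lemma~\ref{lem:unchanged}~(\ref{lem:unchanged:interpreted-pis}), once $\interpreted[B.\parent]$ holds this value is never modified again, so it still holds at the instant $B$ executes line~\ref{lin:interpret:copy-parent}; the copy therefore yields $B.\pis[\ell] \neq \bot$ for every $\ell$, closing the induction.

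The main obstacle is the bookkeeping in the step case: I must pin down that the parent's $\pis$ array is non-$\bot$ \emph{at the exact moment} $B$ copies it. This is where the argument genuinely combines three facts—the scheduling guarantee from $\eligible$ that $B.\parent$ is interpreted before $B$, the stability of $\pis$ after interpretation (Lemma~\ref{lem:unchanged}), and the observation that a started process instance never reverts to $\bot$. I would deliberately avoid invoking Lemma~\ref{lem:pis:nid} here, since its own proof depends on this lemma, so the non-$\bot$ claim must be established using only the genesis initialisation and the copy-through mechanism.
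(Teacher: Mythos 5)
Your proof is correct, and it follows the same skeleton as the paper's own argument: induction along parent chains, with the genesis initialisation $B.\pis[\ell] \assign \New\ \Process\ \P(\ell, B.\nid)$ as base case and line~\ref{lin:interpret:copy-parent} carrying the state forward in the step case. (Your measure, $B.\seqn$, and the paper's measure, the longest path from a genesis block to $B$, are interchangeable here, since only the parent chain matters for the copy step.) The genuine difference is that you strengthen the induction hypothesis to \emph{all} labels $\ell \in \lbls$, and this is not mere tidiness: the paper's three-line proof inducts on the lemma's statement as literally given, \ie restricted to labels $\ell_j$ with a request in some $B_j \nxtS B_i$, but a label can be relevant for $B$ without being relevant for $B.\parent$ --- namely when the request sits in $B.\rs$ itself ($\nxtS$ is reflexive) or in a block reaching $B$ only through a non-parent predecessor. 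For such labels the paper's induction hypothesis applied to the parent says nothing, yet the copy is taken from the parent alone, so the paper's step case has a gap that is silently closed by the very fact you make explicit: genesis blocks start process instances for \emph{every} label, and the copy propagates the entire array. Your version also pins down the timing bookkeeping the paper leaves implicit --- eligibility forces $\interpreted[B.\parent] = \true$ before $B$ is picked, lines~\ref{lin:interpret:initbfr} and~\ref{lin:interpret:bfr} never reset a process instance to $\bot$, and Lemma~\ref{lem:unchanged}~(\ref{lem:unchanged:interpreted-pis}) freezes the parent's state thereafter --- and you correctly avoid Lemma~\ref{lem:pis:nid}, whose proof cites this lemma and would make the argument circular. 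In short: same route, but your strengthened invariant is the version that actually closes the induction.
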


\begin{proof}
  We show the statement by induction on the length of the longest path
  from the genesis blocks to $B$.
  The base cases $n = 0$ holds by assumption, as $\pis[\ell]$ is
  started on every genesis block.
  For the step case, by induction hypothesis the statement holds for
  $B_i \in B.\preds$, and as $B.\parent \in B.\preds$ by
  line~\ref{lin:interpret:copy-parent} the statement holds.
\end{proof}

\begin{proof}[Proof of Lemma~\ref{lem:bfr}(\ref{lem:send-bfr})]
  By definition $s_1$ sends $m$ for some protocol instance~$\ell'$ if
  $s$ reaches in Algorithm~\ref{alg:interpret} either
  line~\ref{lin:interpret:initbfr} with $B.\rs$, or
  line~\ref{lin:interpret:bfr} with $B.\pis[\ell'].\receive(m)$ for
  some $B$ picked in line~\ref{lin:interpret:loop}. By
  Lemma~\ref{lem:pis:is-init} $B.\pis[\ell'] \neq \bot$ and
  $B.\pis[\ell'].\nid = s_1$ by assumption, by Lemma~\ref{lem:pis:nid}
  $B.\nid = s_1$.
  $B$ will be our witness for $B_1$.
  Now $m \in B.\bfr[\Out, \ell']$, by the assignment in either
  line~\ref{lin:interpret:initbfr} with $(\ell', r) \in B.\rs$ (by
  line~\ref{lin:interpret:l}), or in line~\ref{lin:interpret:bfr} with
  $(\ell', r) \in B_j.\rs$ for some $B_j \nxtT B$ (by
  line~\ref{lin:interpret:all-pis}).
  $B_j$ is our witness for $B' \neq B_1$.
  For the other direction, we have $B_1\in \G$ with $B_1.\nid = s_1$
  such that $m \in B_1.\bfr[\Out, \ell']$ for a $B' \in \G$ with
  $(\ell', r) \in B'.\rs$ and $B' \nxtS B_1$. By
  Lemma~\ref{lem:picked-eventually}, eventually $B_1$ is picked in
  Algorithm~\ref{alg:interpret} line~\ref{lin:interpret:loop}. By
  assumption, $m \in B_1.\bfr[\Out, \ell']$ through either
  \begin{enumerate*}[label=\emph{(\roman*)}]
  \item line~\ref{lin:interpret:initbfr}, or
  \item as $B' \nxtT B_1$ and thus
    $\ell' \in \{ \ell_j \mid {(\ell_j, r) \in B_j.\rs} \land {B_j \in
      \G} \land {B_j \nxtT B} \}$ from line~\ref{lin:interpret:bfr}.
  \end{enumerate*}
  Then, by definition, $s_1$ sends $m$ for protocol instance~$\ell'$.
\end{proof}

\begin{proof}[Proof of Lemma~\ref{lem:bfr}(\ref{lem:receive-bfr})]
  By Definition $s_2$ receives $m$ in line~\ref{lin:handle:receive} of
  Algorithm~\ref{alg:interpret} for protocol instance~$\ell'$
  for some $B$ picked in line~\ref{lin:interpret:loop} and
  $m \in B.\bfr[\In, \ell']$ by line~\ref{lin:handle:foreach}. By
  Lemma~\ref{lem:pis:is-init} $B.\pis[\ell'] \neq \bot$ and
  $B.\pis[\ell'].\nid = s_2$ by assumption, by
  Lemma~\ref{lem:pis:nid} $B.\nid = s_2$.
  $B$ is our witness for $B_2$.
  Now by line~\ref{lin:interpret:in} $m \in B.\bfr[\In, \ell']$ only
  if $m \in B_i.\bfr[\Out, \ell']$ for some $B_i$ with $B_i \nxt
  B$. $B_i$ is our witness for $B_1$.
  Finally, by line~\ref{lin:interpret:all-pis},
  $\ell' \in \{ \ell_j \mid {(\ell_j, r) \in B_j.\rs} \land {B_j \in \G} \land {B_j \nxtT B}
  \}$, and $B_j$ is our witness for $B'$.
  For the other direction we have $B_1, B_2 \in \G$ with
  $B_1 \nxt B_2$ and $B_2.\nid = s_2$ and
  $m \in B_2.\bfr[\In, \ell']$ for a $B' \in \G$ such that
  $(\ell', r) \in B'.\rs$ and $B' \nxtS B_1$.  By
  Lemma~\ref{lem:picked-eventually}, eventually $B_1$ is picked in
  Algorithm~\ref{alg:interpret} line~\ref{lin:interpret:loop} and by
  assumption eventually reaches line~\ref{lin:interpret:bfr} of
  Algorithm~\ref{alg:interpret}.
  As $m \in B_2.\bfr[\In, \ell']$ by definition, $s_2$ receives $m$
  for protocol instance~$\ell'$.
\end{proof}

\begin{lemma} \label{lem:send-transfer} %
  For a correct server $s$ executing $s.\interpret(\G, \P)$ if a
  server $s_1$ sends a message~$m$ for a protocol instance $\ell_j$,
  then $s_1$ sends $m$ for a correct server~$s'$ executing
  $s'.\interpret(\G', \P)$ for a \blockdag $\G' \geqslant \G$.
\end{lemma}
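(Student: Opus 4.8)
The plan is to prove this by composing the characterization of ``sending'' in Lemma~\ref{lem:bfr}(\ref{lem:send-bfr}) with the interpreter-independence of out-buffers established in Lemma~\ref{lem:joint-eq}. Suppose $s_1$ sends $m$ for protocol instance $\ell_j$ in $s.\interpret(\G, \P)$. By Lemma~\ref{lem:bfr}(\ref{lem:send-bfr}) this is witnessed by a block $B_1 \in \G$ with $B_1.\nid = s_1$ and $m \in B_1.\bfr[\Out, \ell_j]$, together with a request block $B' \in \G$ satisfying $(\ell_j, r) \in B'.\rs$ and $B' \nxtS B_1$.

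First I would check that this witness configuration carries over unchanged to $\G'$. As $\G' \geqslant \G$ means $\G \leqslant \G'$, we have $\vtcs_\G \subseteq \vtcs_{\G'}$, so $B_1, B' \in \G'$; the block-local facts $B_1.\nid = s_1$ and $(\ell_j, r) \in B'.\rs$ are intrinsic to the blocks and so are unaffected. The only point needing the definition of $\leqslant$ is that the reachability $B' \nxtS B_1$ persists: since $\edgs_\G = \edgs_{\G'} \cap (\vtcs_\G \times \vtcs_\G)$ gives $\edgs_\G \subseteq \edgs_{\G'}$, every edge of the $\G$-path between $B'$ and $B_1$ is also an edge of $\G'$, hence $B' \nxtS B_1$ still holds in $\G'$.

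Next I would invoke Lemma~\ref{lem:joint-eq} with $B := B_1$: its hypotheses hold exactly, namely $\G \leqslant \G'$, $B_1 \in \G$, and $(\ell_j, r) \in B'.\rs$ with $B' \nxtN{n} B_1$ for some $n \geqslant 0$. Writing $\bfr'$ for the buffers produced by $s'.\interpret(\G', \P)$, the lemma yields $B_1.\bfr[\Out, \ell_j] = B_1.\bfr'[\Out, \ell_j]$, and therefore $m \in B_1.\bfr'[\Out, \ell_j]$. Applying the converse direction of Lemma~\ref{lem:bfr}(\ref{lem:send-bfr}) for $s'$ on $\G'$, the surviving witnesses $B_1, B' \in \G'$ certify precisely that $s_1$ sends $m$ for $\ell_j$ in $s'.\interpret(\G', \P)$, which is the claim.

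I do not expect a genuine obstacle here, since the statement is essentially a two-line consequence of the two cited lemmas; the one step deserving care is confirming that the reachability witness $B' \nxtS B_1$ survives the restriction implicit in $\leqslant$. It is also worth stressing that the argument nowhere uses correctness of $s_1$: the witness block $B_1$ already lies in $\G$, and the determinism of $\P$ (through Lemma~\ref{lem:joint-eq}) guarantees that every correct interpreter recomputes the identical out-buffer on $B_1$, whether or not $s_1$ is byzantine.
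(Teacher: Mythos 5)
Your proposal is correct and follows essentially the same route as the paper's own proof: extract the witnesses $B_1$ and the request block via Lemma~\ref{lem:bfr}(\ref{lem:send-bfr}), observe that they and the connecting path survive in $\G'$ since $\G \leqslant \G'$, conclude $m \in B_1.\bfr'[\Out, \ell_j]$ by Lemma~\ref{lem:joint-eq}, and close with the converse direction of Lemma~\ref{lem:bfr}(\ref{lem:send-bfr}). Your added care about path preservation under $\leqslant$ and your remark that correctness of $s_1$ is never used (which the paper itself makes, just after Lemma~\ref{lem:joint-eq}) are both accurate refinements of the same argument.
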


\begin{proof}
  Again, in the following proof, we write $\bfr'$ and $\pis'$ when
  executing $s'.\interpret(\G', \P)$ to distinguish from $\bfr$ and
  $\pis$ when executing $s.\interpret(\G, \P)$.
  As $s_1$ sends a message~$m$ for a protocol instance $\ell_j$, by
  Lemma~\ref{lem:bfr}(\ref{lem:send-bfr}) there is a $B_1 \in \G$ with
  $B_1.\nid = s_1$ such that $m \in B_1.\bfr[\Out, \ell_j]$ for a
  $B_j \in \G$ with $(\ell_j, r) \in B_j.\rs$ and $B_j \nxtN{n} B_1$
  for $n \geqslant 0$.
  By $\G' \geqslant \G$, $B_1 \in \G$, $B_j \in \G$, and the path
  $B_j \nxtN{n} B_1$ are in $\G'$.
  By Lemma~\ref{lem:joint-eq} $m \in B_1.\bfr'[\Out, \ell_j]$, and
  then by Lemma~\ref{lem:bfr}(\ref{lem:send-bfr}), $s_1$ sends s $m$ for a
  correct server~$s'$ executing $s'.\interpret(\G', \P)$.
\end{proof}

\begin{proof} [Proof of Lemma~\ref{lem:point-to-point}~(\ref{al:rd}) (Reliable delivery)]
  By assumption $s_1$ sends a message $m$ to a correct server~$s_2$
  for a correct server~$s$ executing $s.\interpret(\G, \P)$.
  By Lemma~\ref{th:joint-dag} $s'$ will eventually have some
  $\G_1 \geqslant \G$.
  Then by Lemma~\ref{lem:send-transfer}, $s_1$ sends $m$ in
  $s'.\interpret(\G_1, \P)$ for $\G_1 \geqslant \G$.
  Then by Lemma~\ref{lem:bfr}(\ref{lem:send-bfr}) there is a
  $B_1 \in \G_1$ with $B_1.\nid = s_1$ such that
  $m \in B_1.\bfr[\Out, \ell_j]$ for $B_j \in \G_1$ with
  $(\ell_j, r) \in B_j.\rs$ and $B_j \nxtS B_1$.
  With $B_1$ we found our first witness.
  By Lemma~\ref{lem:witness}, there is $\G_2 \geqslant \G_1$ such that
  $B_2 \in \G_2$ and $B_2.\nid = s_2$ and $B_1 \nxt B_2$.
  Then by Lemma~\ref{th:joint-dag} eventually $s'$ will have some
  $\G' \geqslant \G_2$.
  By $m \in B_1.\bfr[\Out, \ell_j]$, $B_1 \nxt B_2$ and
  $m.\receiver = s_2$ by assumption, by
  lines~\ref{lin:interpret:in}--\ref{lin:handle:foreach} of
  Algorithm~\ref{alg:interpret} we have $B.m \in \bfr[\In, \ell_j]$.
  Now we have found our second witness~$B_2$.
  By Lemma~\ref{lem:bfr}(\ref{lem:receive-bfr}), $s_2$ receives $m$ in
  $s'.\interpret(\G', \P)$
\end{proof}

\begin{proof} [Proof of Lemma~\ref{lem:point-to-point}~(\ref{al:nd}) (No duplication)]
  We assume towards a contradiction, that $s_2$ received $m$ more than
  once.
  Then by Lem\-ma~\ref{lem:bfr}(\ref{lem:receive-bfr}) there are some
  $B_1, B_2 \in \G$ with $B_1 \nxt B_2$, $B_2.\nid = s_2$ and
  $m \in B_2.\bfr[\In, \ell]$, \emph{and} $B_1' \nxt B_2'$,
  $B_2'.\nid = s_2$ and $m \in B_2'.\bfr[\In, \ell]$ for a
  $B_j \in \G$ such that $(\ell, r) \in B_j.\rs$ and $B_j \nxtS B_1$,
  but $B_2 \neq B_2'$.
  That $s_2$ received the exact same message~$m$ twice is only
  possible, if $B_1 = B_1'$.
  %
  %
  That is, $s_2$ built $B_2' \neq B_2$ and inserted $B_1$ in both,
  which contradicts Lemma~\ref{lem:ref-at-most-once} as $s_2$ is
  correct.
\end{proof}

\begin{proof}[Proof of Lemma~\ref{lem:point-to-point}~(\ref{al:at})
  (Authenticity)] %
  By Lemma~\ref{lem:bfr}(\ref{lem:receive-bfr}) there are some
  $B_1, B_2 \in \G$ with $B_1 \nxt B_2$ and $B_2.\nid = s_2$ and
  $m \in B_2.\bfr[\In, \ell]$ for a $B \in \G$ such that
  $(\ell, r) \in B_j.\rs$ and $B_j \nxtS B_1$.
  Then by line~\ref{lin:interpret:in} of Algorithm~\ref{alg:interpret}
  exists an $B_i \in B_2.\preds$ such that $m \in B_i.\bfr[\Out, \ell]$.
  As $m \in B_i.\bfr[\Out, \ell]$ by Lemma~\ref{lem:out-sender}
  $B_i.\nid = m.\sender$ and as $m.\sender = s_1$, $B_i.\nid = s_1$.
  $B_i$ will be our witness for $B_1$.
  As $m \in B_i.\bfr[\Out, \ell]$ by Lemma~\ref{lem:out-label} there
  is a $B'$ such that $(\ell, r) \in B'.\rs$ and $B' \nxtS B_i$. $B'$
  is our witness for $B_j$.
  Hence there is a $B_1 \in \G$ with $B_1.\nid = s_1$ such that
  $m \in B_1.\bfr[\Out, \ell]$ for a $B_1 \in \G$ with
  $(\ell, r) \in B_j.\rs$ and $B_j \nxtS B_1$ and by
  Lemma~\ref{lem:bfr}(\ref{lem:send-bfr}) $s_1$ $m$ was sent by $s_1$.
\end{proof}

\subsection{Ad Section~\ref{sec:interfacing}: Using the Framework}

\begin{algorithm}[t]
  \setcounter{AlgoLine}{0}
  \SubAlgo{\Module $\brdcst(s \in \srvrs)$}
  {
    $\echoed, \readied, \delivered \assign \false$ \\
    \smallskip %
    \SubAlgo{ $\broadcast(v \in \vals)$ \And $\authenticate(v)$ \label{lin:brdcst:auth} %
    } %
    { %
      $\echoed \assign \true$ \\
      \Send \To $\ECHO\ v$ \To \Every $s' \in \srvrs$}
    \smallskip %
    \SubAlgo{ %
      \Upon \Received $\ECHO\ v$ \And \Not $\echoed$ \label{lin:brdcst:r-echo} %
    } %
    { %
      $\echoed \assign \true$ \\
      \Send $\ECHO\ v$ \To \Every $s' \in
      \srvrs$ \label{lin:brdcst:s-echo}
    }
    \smallskip %
    \SubAlgo{ %
      \Upon \Received $\ECHO\ v$ \From $2f + 1$ different
      $s' \in \srvrs$ \And \Not $\readied$%
    } %
    { %
      $\readied \assign \true$ \\
      \Send $\READY\ v$ \To \Every $s' \in \srvrs$ %
    } %
    \smallskip %
    \SubAlgo{ %
      \Upon \Received $\READY\ v$ \From $f + 1$ different
      $s' \in \srvrs$ \And \Not $\readied$%
    } %
    { %
      $\readied \assign \true$ \\
      \Send $\READY\ r$ \To \Every $s' \in \srvrs$ %
    }
    \smallskip %
    \SubAlgo{ %
      \Upon \Received \From $\READY\ v$ \From $2f + 1$ different
      $s' \in \srvrs$ \And \Not $\delivered$%
    } %
    { %
      $\delivered \assign \true$ \\
      $\deliver(v)$
    }
  }
   \caption{Authenticated double-echo broadcast after
    \cite{2011_cachin_et_al}.}
  \label{alg:brdcst}
\end{algorithm}

In this section we give the proofs which we omitted in
Section~\ref{sec:interfacing}.
All proofs refer to Algorithm~\ref{alg:shim}. For the execution we
assume, that the body of each handler is executed atomically.
We further give an implementation of authenticated double-echo
broadcast in Algorithm~\ref{alg:brdcst}.

\begin{lemma} \label{lem:r-in-p} %
  For a correct server $s$ executing $\shim(\P)$, if some
  $\request(r, \ell)$ is requested from $s$, then $r$ is requested in
  $\P$.
\end{lemma}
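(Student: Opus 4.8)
The plan is to follow the pair $(\ell, r)$ as it travels through the three concurrently running components of $\shim(\P)$---the shim itself, $\gossip$, and $\interpret$---and to pinpoint the exact line at which $r$ is handed to the simulated instance $\P(\ell)$. (The lemma writes $\request(r,\ell)$ whereas Algorithm~\ref{alg:shim} uses $\request(\ell, r)$; this is mere notation and I keep the buffer's ordering $(\ell, r)$.)

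First I would observe that when $\request(\ell, r)$ is invoked on a correct $s$, the handler in lines~\ref{lin:shim:auth-r}--\ref{lin:shim:ins-r-buffer} of Algorithm~\ref{alg:shim} executes $\requestBuffer.\insertRequest(\ell, r)$, so $(\ell, r)$ enters the shared buffer $\requestBuffer$. Next, because $s$ is correct, the repeatedly block in lines~\ref{lin:repeatedly}--\ref{lin:disseminate} guarantees that $s$ eventually calls $\disseminate()$ in $\gossip$. Combining this with fairness of $\takeRequests$---that each entry placed in $\requestBuffer$ is eventually removed by some $\takeRequests()$ call---I would conclude that $(\ell, r)$ is eventually drawn from the buffer in line~\ref{lin:gossip:inject} of Algorithm~\ref{alg:gossip} and written into $\bB.\rs$. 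By Lemma~\ref{lem:valid-bB} this $\bB$ is valid when the execution reaches line~\ref{lin:gossip:insB}, where $\G.\ins(\bB)$ adds $\bB$ to the vertices of $\G$; hence eventually there is a block $\bB \in \G$ with $(\ell, r) \in \bB.\rs$.

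It then remains to hand this block to the interpreter, which $s$ runs on its own $\G$ via line~\ref{lin:shim:init-interpret}. By Lemma~\ref{lem:picked-eventually}, every block of $\G$---in particular $\bB$---is eventually picked as eligible in line~\ref{lin:interpret:loop} of Algorithm~\ref{alg:interpret}. When $\bB$ is interpreted, the loop of line~\ref{lin:interpret:l} ranges over $(\ell, r) \in \bB.\rs$ and line~\ref{lin:interpret:initbfr} executes $\bB.\bfr[\Out, \ell] \assign \bB.\pis[\ell].r$. The subterm $\bB.\pis[\ell].r$ is exactly the act of requesting $r$ from the process instance of $\P$ for label $\ell$, which is the required conclusion.

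The main obstacle is the bridging step between the shim and $\gossip$: I must justify that $(\ell, r)$ does not linger in $\requestBuffer$ forever. This rests on two facts the surrounding text treats as standing assumptions---that a correct $s$ will eventually (indeed repeatedly) request $\disseminate$, and that $\takeRequests$ removes requests \emph{fairly}, so that no pending request is starved across successive $\disseminate()$ calls. Once these are granted, the remainder is a straightforward chain of ``eventually'' steps threaded through the line references above; I would therefore state the fairness of $\takeRequests$ explicitly as the single liveness ingredient on which the proof turns.
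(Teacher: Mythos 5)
Your proposal is correct and follows essentially the same route as the paper's proof: insertion of $(\ell, r)$ into $\requestBuffer$ via the $\shim$ handler, eventual dissemination by $\gossip$ of a block $B$ with $(\ell, r) \in B.\rs$ that then enters $\G$, and finally the call $B.\pis[\ell].r$ at line~\ref{lin:interpret:initbfr} of Algorithm~\ref{alg:interpret}. The only difference is that you make explicit the fairness of $\takeRequests$ and cite Lemma~\ref{lem:valid-bB} and Lemma~\ref{lem:picked-eventually} along the way, liveness details that the paper's terser proof leaves implicit.
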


\begin{proof}
  By executing $\shim(\P)$, a correct server~$s$ inserts $(\ell, r)$
  in $\requestBuffer$ in
  line~\ref{lin:shim:auth-r}--\ref{lin:shim:ins-r-buffer} of
  Algorithm~\ref{alg:shim}. Then executing
  $\gossip(s, \G, \requestBuffer)$, $s$ will eventually disseminate a
  block $B$ with $B.\nid = s$ and $(\ell, r) \in B.\rs$ in
  line~\ref{lin:gossip:inject} of Algorithm~\ref{alg:gossip} and
  $B \in \G$ after triggering~$\disseminate$ in
  lines~\ref{lin:repeatedly}--\ref{lin:disseminate} of
  Algorithm~\ref{alg:shim}. Now, executing $\interpret(\G, \P)$, $s$
  for $B \in \G$ will call $B.\pis[\ell].\rs$ in
  line~\ref{lin:interpret:initbfr} in Algorithm~\ref{alg:interpret}.
\end{proof}

\begin{lemma} \label{lem:i-in-shim-p} %
  For a correct server $s$ executing $\shim(\P)$, if $\P$
  indicates~$i \in \inds_\P$ for $s$, then $\shim(\P)$ triggers
  $\indicate(\ell, i)$.
\end{lemma}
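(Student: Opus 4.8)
The plan is to trace the indication through the two handlers that relay it, mirroring the request-side argument of Lemma~\ref{lem:r-in-p}. The indication originates inside the local interpretation $s.\interpret(\G, \P)$ run by $s$, passes through the indication handler of Algorithm~\ref{alg:interpret}, and is then picked up by the matching handler of Algorithm~\ref{alg:shim}, which finally raises $\indicate(\ell, i)$ to the user.

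First I would unpack the hypothesis ``$\P$ indicates $i$ for $s$.'' Because $s$ locally simulates the process instance $\P(\ell, s_i)$ of every server $s_i \in \srvrs$, and because by Lemma~\ref{lem:pis:nid} each process instance $B.\pis[\ell]$ is seeded with $\P(\ell, B.\nid)$, an indication produced on behalf of $s$ must come from some block $B \in \G$ with $B.\nid = s$; this $B$ is my witness, and it is $B.\pis[\ell]$ that indicates $i$. I would then invoke Lemma~\ref{lem:picked-eventually} to argue that $B$ is eventually picked in line~\ref{lin:interpret:loop} and interpreted, so that $B.\pis[\ell]$ is advanced to the state that emits the indication.

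Next I would follow the indication handler in lines~\ref{lin:interpret:indicate}--\ref{lin:interpret:indicate-up} of Algorithm~\ref{alg:interpret}: once $B.\pis[\ell].i$ holds, $\interpret(\G, \P)$ triggers $\indicate(\ell, i, B.\nid)$, which equals $\indicate(\ell, i, s)$ since $B.\nid = s$. Finally I would turn to the handler in lines~\ref{lin:shim:indicate}--\ref{lin:shim:indicate-up} of Algorithm~\ref{alg:shim}: as the third argument is exactly $s$, the guard $s' = s$ in line~\ref{lin:shim:indicate} is satisfied, so $\shim(\P)$ triggers $\indicate(\ell, i)$ to the user, as required.

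The argument is essentially a routing argument, and the only subtle point---the part I would be most careful about---is justifying that an indication ``for $s$'' is produced precisely by the process instance attached to a block $B$ with $B.\nid = s$, rather than by the simulation of some other server's instance. This is exactly where Lemma~\ref{lem:pis:nid} does the work, tying the identity of the indicating process instance to the builder $B.\nid$ of its carrying block; the deliberate guard $s' = s$ in $\shim$ then ensures that $s$ only surfaces indications from its own simulated instance, which is what makes the composition raise the indication to the user.
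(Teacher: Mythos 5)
Your proposal is correct and follows essentially the same route as the paper's proof: the indication fires the handler in lines~\ref{lin:interpret:indicate}--\ref{lin:interpret:indicate-up} of Algorithm~\ref{alg:interpret}, and the guard $s' = s$ in lines~\ref{lin:shim:indicate}--\ref{lin:shim:indicate-up} of Algorithm~\ref{alg:shim} then relays $\indicate(\ell, i)$ to the user. The paper's proof is terser---it takes the hypothesis as directly meaning the interpret-handler fires---whereas you additionally justify, via Lemma~\ref{lem:pis:nid} and Lemma~\ref{lem:picked-eventually}, why an indication ``for $s$'' must originate from a block $B$ with $B.\nid = s$; this extra detail is sound and harmless, not a different argument.
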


\begin{proof}
  By assumption a correct $s$ indicates~$i$ for $\ell$ and hence
  indicates in $\interpret(\G, \P)$
  lines~\ref{lin:interpret:indicate}--\ref{lin:interpret:indicate-up}
  of Algorithm~\ref{alg:interpret}. Then, by executing $\shim(\P)$, as
  $s = s'$ $\indicate(\ell, i \in \inds_\P)$ is triggered in
  lines~\ref{lin:shim:indicate}--\ref{lin:shim:indicate-up} of
  Algorithm~\ref{alg:shim}.
\end{proof}


\end{document}